\let\@font@warningori\@font@warning
\newcommand\shutup{\def\@font@warning##1{}}
\newcommand\youcanspeak{\let\@font@warning\@font@warningori}
\newtheoremstyle{theorem}{1em}{1em}{\slshape}{0pt}{\bfseries}{.}{ }{}
\theoremstyle{theorem}
\newtheorem{theorem}{Theorem}
\newtheorem*{theorem*}{Theorem}
\newtheorem{corollary}[theorem]{Corollary}
\newtheorem{lemma}[theorem]{Lemma}
\newtheorem*{lemma*}{Lemma}
\newtheorem*{claim*}{Claim}
\newtheorem*{claimI}{Claim (I)}
\newtheorem*{claimII}{Claim (II)}
\newtheorem{fact}[theorem]{Fact}
\newtheorem*{definition*}{Definition}
\theoremstyle{remark}
\newtheorem*{remark*}{Remark}
\newenvironment{proofofclaim}{\vspace{1ex}\noindent{\emph{Proof of claim.}}\hspace{0.5em}}
   	    {\hfill$\lozenge$\vspace{1ex}}
\providecommand{\setN}{\mathbb{N}}
\providecommand{\setZ}{\mathbb{Z}}
\providecommand{\setQ}{\mathbb{Q}}
\providecommand{\setR}{\mathbb{R}}
\newcommand{\disc}{\textrm{disc}}
\newcommand{\setS}{\bm{S}}
\newcommand{\setC}{\bm{C}} 
\newcommand{\setY}{\bm{Y}}
\newcommand{\setD}{\bm{D}}
\newcommand{\supp}{\textrm{supp}}
\title{The Entropy Rounding Method in Approximation Algorithms}
\author{{\Large{Thomas Rothvoß}}\thanks{Supported by the Alexander von Humboldt Foundation within the Feodor Lynen program.}\vspace{3mm}\\ M.I.T. \\{ \tt{rothvoss@math.mit.edu}}}
\begin{document}

\maketitle
\thispagestyle{empty}

\begin{abstract}
\noindent Let $A$ be a matrix, $c$ be any linear objective function and $x$ be a fractional 
vector, say an LP solution to some discrete optimization problem. Then a recurring
task in theoretical computer science (and in approximation algorithms in particular) 
is to obtain an integral vector $y$ such that
$Ax \approx Ay$ and $c^Ty$ exceeds $c^Tx$ by only a moderate factor.

We give a new randomized rounding procedure for this task, provided that $A$ has 
\emph{bounded $\Delta$-approximate entropy}.
This property means that for uniformly chosen random signs $\chi(j) \in \{ ± 1\}$ 
on any subset of the columns, the outcome $A\chi$ can be approximately described 
using a sub-linear number of bits in expectation. 

To achieve this result, we modify well-known techniques from the field of \emph{discrepancy theory}, 
especially we rely on \emph{Beck's entropy method}, which to the best of our knowledge has never been
used before in the context of approximation algorithms. 
Our result can be made  
constructive using the Bansal framework based on semidefinite programming.

We demonstrate the versatility of our procedure by rounding fractional solutions
to column-based linear programs for some generalizations of {\sc Bin Packing}. 
For example we obtain a polynomial time $OPT + O(\log^2 OPT)$ approximation for 
{\sc Bin Packing With Rejection}
and the first AFPTAS for the {\sc Train Delivery} problem.
\end{abstract}

\newpage
\setcounter{page}{1}

\section{Introduction}

Many approximation algorithms are based on linear programming relaxations; 
for the sake of concreteness, say on formulations like
\[
  \min\left\{ c^Tx \mid Ax \geq b, x \geq \mathbf{0} \right\},
\]
with $A \in \setR^{n × m}$. 
Several techniques have been developed to round
a fractional LP solution $x$
to an integer one; the textbooks \cite{ApproximationAlgorithmsVazirani01, DesignApproximationAlgorithmsShmoysWilliamson11} provide
a good overview on the most common approaches. The aim of this paper is to introduce a new 
LP rounding technique that we term \emph{entropy rounding}. 
 
To describe our method, we consider the random variable $A\chi$, where $\chi \in \{ ± 1\}^m$ is 
a uniformly chosen \emph{random coloring} of the columns of $A$. Suppose that $A$ has the property that one
can approximately encode the outcome of $A\chi$ up to an additive error of
 $\Delta$ with at most $\frac{m}{5}$ bits in expectation. In other words, we suppose
that we can find some arbitrary function $f$ such that $\|A\chi - f(\chi)\|_{\infty} \leq \Delta$
and the \emph{entropy} of the random variables $f(\chi)$ can be bounded by $\frac{m}{5}$. 
Note that the entropy could never exceed $m$, hence we only need to save
a constant factor by allowing an approximation error. 
One possible choice could be
$f(\chi) = 2\Delta \lceil \frac{A\chi}{2\Delta} \rfloor$, meaning that we round every entry of $A\chi$ to the
nearest multiple of $2\Delta$. 
To bound the entropy of $f(\chi)$ one can then use standard concentration bounds 
since the values $A_i\chi = \sum_{j=1}^m A_{ij}\chi(j)$ are the sum of 
independently distributed random variables (here $A_i$ denotes the $i$th row of $A$).
If this holds also for any submatrix of $A$, we say that $A$ has 
\emph{bounded $\Delta$-approximate entropy}. 

But why would it be useful to have this property for $A$? 
Since there are $2^m$
many colorings $\chi$, there must be an exponential number of colorings
$\chi^{(1)},\ldots,\chi^{(\ell)} \in \{ ± 1 \}^m$, which are \emph{similar} w.r.t. $A$, i.e.
$\|A\chi^{(i)} - A\chi^{(i')}\|_{\infty} \leq \Delta$. Since there are so many similar colorings, we can pick
two of them (say $\chi^{(i)},\chi^{(i')}$) that differ in at least half of the entries 
and define $\chi := \frac{1}{2}(\chi^{(i)} - \chi^{(i')})$ as the difference of those
colorings. Then $\chi$ is a \emph{half-coloring}, i.e. it has entries in $\{-1,0,1\}$, 
but at least half of the entries are non-zero and furthermore $\|A\chi \|_{\infty} \leq \Delta$.

However, our aim was to find a vector $y \in \{ 0,1\}^m$ such that $Ay \approx Ax$.
We will iteratively obtain half-colorings and use them to update $x$, each time reducing
its fractionality.  
Thus, we consider the least value bit in any entry of $x$; say this is
bit $K$. Let $J \subseteq [m]$ be the set of indices where this bit is set to one 
and let $A^J \subseteq A$ be the submatrix of the corresponding 
columns. Then by the argument above, there is a half-coloring $\chi \in \{ 0, ± 1\}^J$
such that $\|A^J\chi\|_{\infty} \leq \Delta$. We use this information to round our fractional solution 
to
$x' := x + (\frac{1}{2})^K\chi$, meaning that we delete the $K$th bit
of those entries $j$ that have $\chi(j) = -1$; we round the entry up if $\chi(j) = 1$ and we
leave it unchanged if $\chi(j) = 0$. After iterating this at most $\log m$ times, 
the $K$th bit of all entries of $x$ will be $0$. Hence after 
at most $K\cdot \log m$ iterations, we will end up in a $0/1$ vector that we term $y$. 
This vector satisfies 
$\|Ax - Ay\|_{\infty} \leq \sum_{k=1}^K (\frac{1}{2})^k \cdot \log m \cdot \Delta \leq \log m \cdot \Delta$.

Let us illustrate this abstract situation with a concrete example. For the 
very classical 
 {\sc Bin Packing} problem, the input consists of a sorted list of \emph{item sizes} $1 \geq s_1 \geq \ldots \geq s_n > 0$ 
and the goal is to assign all items to a minimum number of bins of size $1$. 
Let $\setS = \{ S \subseteq [n] \mid \sum_{i\in S} s_i \leq 1\}$ be the set system containing all feasible \emph{patterns} and 
let $\mathbf{1}_S$ denote the characteristic vector of a set $S$. 
A well-studied column-based LP relaxation for {\sc Bin Packing} is
 \begin{equation} \label{eq:LP}
    \min\Big\{ \mathbf{1}^Tx \mid \sum_{S\in\setS} x_S \mathbf{1}_{S} = \mathbf{1}, x \geq \mathbf{0} \Big\}
 \end{equation}
(see e.g. \cite{TrimProblem-Eiseman1957,Gilmore-Gomory61,KarmarkarKarp82}).
In an integral solution, the variable $x_S$ tells whether a bin should be packed
exactly with the items in $S$.
We want to argue why our method is applicable here. Thus let $x$ be a 
fractional solution to \eqref{eq:LP}. In order to keep the
notation simple let us assume for now, that all items have size between $\frac{1}{2k}$
and $\frac{1}{k}$. 
Our choice for matrix $A$ is as follows: Let $A_i$ be the sum of 
the first $i$ rows of the constraint matrix of \eqref{eq:LP}, i.e. $A_{iS} = |S \cap \{ 1,\ldots,i\}|$. 
By definition, for an integral vector $y$,  $A_iy$ denotes the number of slots that $y$ reserves for 
items in $1,\ldots,i$. If there are less than $i$ many slots reserved, we 
term this a \emph{deficit}. Since we assumed that the items are sorted 
according to their size, a vector $y \in \{0,1\}^{\setS}$ will correspond to a feasible solution
if there is no deficit for any interval $1,\ldots,i$. 

To understand why this matrix $A$ has the needed property,
we can add some artificial rows until consecutive rows differ in 
exactly one entry; say $n' \leq mk$ is the new number of rows.
Then observe that the sequence $A_1\chi,A_2\chi,\ldots,A_{n'}\chi$
describes a symmetric random walk with step size 1 on the real axis.
We imagine all multiples of $\Delta$ as ``mile stones'' and 
choose $f_i(\chi)$ as the last such mile stone that was crossed by 
the first $i$ steps of the random walk (i.e. by $A_1\chi,\ldots,A_i\chi$). 
For an independent random walk it would take $\Theta(\Delta^2)$ iterations
in expectation until a random walk covers a distance of $\Delta$, thus
we expect that the sequence $f_1(\chi),\ldots,f_{n'}(\chi)$ changes its value only
every $\Theta(\Delta^2)$ steps and consequently the entropy of this sequence cannot be large.
But up to $k$ steps of the random walk correspond to the same column of $A$ and depend on each other.
Using more involved arguments, we will still be able to show that
for $\Delta := \Theta(\frac{1}{k})$, the entropy of the sequence
  $f_1(\chi),\ldots,f_{n'}(\chi)$ is bounded by $\frac{m}{5}$.

More generally, we allow that the parameter $\Delta$ depends on the row $i$ of $A$.
Then the same arguments go through for $\Delta_i := \Theta(\frac{1}{s_i})$, where $s_i$ is the size of item $i$. 
Thus our rounding procedure can be applied to a fractional {\sc Bin Packing} solution $x$ 
to provide an integral vector $y$ with  $|A_ix - A_iy| \leq O(\log n)\cdot \Delta_i$. The deficits
can be eliminated by buying $O(\log^2 n)$ extra bins in total. 

The entropy-based argument which guarantees the existence of proper
half-colorings $\chi$ is widely termed
``Beck's Entropy Method'' from the field of
\emph{discrepancy theory}. This area studies the \emph{discrepancy} of 
set systems, i.e. the maximum difference of ``red'' and ``blue'' elements 
in any set for the best 2-coloring.
Formally, the discrepancy of a set system $\setS \subseteq 2^{[n]}$ is defined as
\[
\disc(\setS) = \min_{\chi : [n] \to \{ ± 1\}} \max_{S\in \setS} |\chi(S)|. 
\]
In fact, for a variety of problems, the entropy method is the only known technique 
to derive the best bounds (see e.g.~\cite{SixStandardDeviationsSuffice-Spencer1985, DiscrepancyOfPermutations-SpencerEtAl}). 

\subsection{Related work}

Most approximation algorithms that aim at rounding a fractional solution to
an integral one, use one of the following common techniques: 
A classical application of the \emph{properties of basic solutions} yields a $2$-approximation for 
{\sc Unrelated Machine Scheduling}~\cite{UnrelMachineScheduling-LenstraShmoysTardos-FOCS87}.
\emph{Iterative rounding} was e.g. used in a $2$-approximation for a wide class of 
network design problems, like {\sc Steiner Network}~\cite{IterativeRounding-2apx-for-Steiner-Network-JainFOCS98}, 
\emph{randomized rounding} can be used for a $O(\log n/\log \log n)$-approximation for
{\sc Min Congestion}~\cite{RaghavanThompsonRandomizedRounding87} or {\sc Atsp}~\cite{ATSP-logN-loglogN-apx-Goemans-SODA2010}. 
A combination of both techniques provides the currently best approximation
guarantee for {\sc Steiner Tree}~\cite{SteinerTreeBGRS-STOC2010}.
The \emph{dependent rounding} scheme was successfully applied to LPs of an assignment type~\cite{DependentRoundingGandhiKhullerSrinivasan-JACM06}. 
Sophisticated probabilistic techniques like the \emph{Lov{á}sz Local Lemma} were
for example used to obtain $O(1)$-approximation for the {\sc Santa Claus} problem~\cite{Constant-gap-for-Santa-Claus-FeigeSODA08,AspectsConstructiveLLL2010}.

However, to the best of our knowledge, the entropy method has never been 
used for the purpose of approximation algorithms, while being very popular for finding
low discrepancy colorings. 
For the sake of comparison: for a general set system $\setS$ with $n$ elements, 
a random coloring provides an easy bound of $\disc(\setS) \leq O(\sqrt{n \log
  (2|\setS|)})$ (see e.g.~\cite{GeometricDiscrepancy-Matousek99}). 
But using the Entropy method, this can be improved to $\disc(\setS) \leq O(\sqrt{n\log(2|\setS|/n)})$
for $n \leq |\setS|$~\cite{SixStandardDeviationsSuffice-Spencer1985}. This bound is tight, if no more properties on the set system are specified.
Other applications of this method give
a $O(\sqrt{t} \log n)$ bound if no element is in more than $t$ sets~\cite{Discrepancy-Srinivasan-SODA97} and a $O(\sqrt{k}\log n)$ bound for the discrepancy of $k$ permutations. For the first quantity, alternative proof techniques
give bounds of $2t-1$~\cite{IntegerMakingTheorems-BeckFiala81} and $O(\sqrt{t \cdot \log n})$~\cite{BalancingVectors-Banaszczyk98}.
We recommend the book of
 Matou{\v{s}}ek~\cite{GeometricDiscrepancy-Matousek99} (Chapter 4) for
 an introduction to discrepancy theory.


The entropy method itself is purely existential due to the use of the pigeonhole principle.
But in a very recent
breakthrough, Bansal~\cite{DiscrepancyMinimization-Bansal-FOCS2010} showed
how to obtain colorings matching the
Spencer~\cite{SixStandardDeviationsSuffice-Spencer1985} and
Srinivasan~\cite{Discrepancy-Srinivasan-SODA97} bounds,
by considering a random walk guided by the solution of a semidefinite
program.

\subsection*{Our contributions}

In this work, we present a very general rounding theorem which for a given
vector $x \in [0,1]^m$, matrices $A$ and $B$, weights $\mu_i$ 
and an objective function $c$, computes a binary random vector $y$
which (1) preserves all expectations; (2) guarantees worst case bounds
on  $|A_ix - A_iy|$  and  $|B_ix - B_iy|$ and (3) provides strong tail bounds.
The bounds for $A$ depend on the entropy of random functions that approximately
describe the outcomes of random colorings of subsets of columns of $A$, 
while the bounds for rows of $B$ are functions of the weights $\mu_i$.

We use this rounding theorem to obtain better approximation guarantees for
several well studied  {\sc Bin Packing} generalizations. 
In fact, so far all asymptotic FPTAS results for {\sc Bin Packing} related problems
in the literature are based on 
rounding a 
basic solution to a column-based LP using its sparse support. 
We give the first alternative method to round such LPs, which turns out to
be always at least as good as the standard technique (e.g. for classical {\sc Bin Packing}) and significantly stronger 
for several problems. We demonstrate this by providing the following results: 
\begin{itemize}
\item A randomized polynomial time $OPT + O(\log^2 OPT)$ algorithm 
for {\sc Bin Packing With Rejection}, where in contrast to classical {\sc Bin Packing},
each item can either be packed into a bin or rejected at a given cost. 
Our result improves over the previously best 
bound of $OPT + \frac{OPT}{(\log OPT)^{1-o(1)}}$~\cite{AFPTASforVariantsOfBinPacking-EpsteinLevin09}.
\item We give the first (randomized) AFPTAS for the {\sc Train Delivery} problem,
which is a combination of a one-dimensional
 vehicle routing problem and {\sc Bin Packing}. In fact, our algorithm produces solutions 
of cost $OPT + O(OPT^{3/5})$ (see \cite{TrainDelivery-DasMathieuMozes-WAOA2010} for an APTAS).
\end{itemize}
It would not be difficult to extend this list with further variants\footnote{Some examples: 
In {\sc Generalized Cost Variable Size Bin Packing} a list of bin types $j=1,\ldots,k$, each one with
individual cost $c_j \in [0,1]$ and capacity $b_j \in [0,1]$ is given 
 (see \cite{APTASGenCostVariableSizeBinPacking-Epstein08} for an APTAS). 
We can obtain a $OPT + O(\log^2 n)$ approximation. In its well-studied special case
of {\sc Variable Size Bin Packing} the bin costs equal the bin capacities (i.e. $c_j=b_j$ for all $j$)
and we can refine the bound to $OPT + O(\log^2 OPT)$ (see \cite{AFPTAS-for-Variable-Size-BinPacking-Murgolo1987}
for an AFPTAS). 
For {\sc Bin Packing With Cardinality Constraints}, no bin may receive more than $K$ 
items~\cite{BinPackingWithRejection-AFPTAS-EpsteinLevin09}. We can get an $OPT + O(\log^2 n)$ 
approximation. However, we postpone proofs of this claims to the full version.},
but we also believe that the method will find applications that are not related to  {\sc Bin Packing}.

\subsection*{Organization}

We recall some tools and notation in Section~\ref{sec:Preliminaries}. 
In Section~\ref{sec:DiscTheoryRevisited} we revisit results from discrepancy
theory and modify them for our purposes. In Section~\ref{sec:RoundingTheorem}
we show our general rounding theorem.
Then in Sections~\ref{sec:BinPackingWithRejection} and \ref{sec:TrainDelivery} we 
demonstrate how our rounding theorem can be used to obtain approximation
algorithms.
In the Appendix we provide details on how to turn the existential proofs into 
polynomial time algorithms using semidefinite programming and how to solve the
presented LP relaxations in polynomial time. 

\section{Preliminaries\label{sec:Preliminaries}}

The \emph{entropy} of a random variable $Z$ is defined as
\[
  H(Z) = \sum_x \Pr[Z=x]\cdot \log_2\left( \frac{1}{\Pr[Z=x]}\right)
\]
Here the sum runs over all values that $Z$ can attain. Imagine that a data source generates
a string of $n$ symbols according to distribution $Z$. Then 
intuitively, an optimum compression needs asymptotically 
for $n\to \infty$ an expected number of $n\cdot H(Z)$ many bits to encode the 
string. Two useful facts on entropy are:
\begin{itemize}
\item \emph{Uniform distribution maximizes entropy:} If $Z$ attains $k$ distinct values, then
$H(Z)$ is maximal if $Z$ is the uniform distribution. In that case $H(Z) = \log_2(k)$.
Conversely, if $H(Z) \leq \delta$, then there must be at least one event $x$ with $\Pr[Z = x] \geq (\frac{1}{2})^{\delta}$.
\item \emph{Subadditivity:} If $Z,Z'$ are random variables and $f$ is any function, then $H(f(Z,Z')) \leq H(Z) + H(Z')$.
\end{itemize}
We define $H_{\chi \in \{± 1\}^m}f(\chi)$
as the entropy of $f(\chi)$, where $\chi$ is uniformly chosen from $\{ ± 1\}^m$.
See the book of \cite{ProbabilisticMethod-AlonSpencer08} for an intensive
introduction into properties of the entropy function.
We will make use of the \emph{Azuma-Hoeffding Inequality} 
(see e.g. Theorem 12.4 in \cite{ProbabilityAndComputingMitzenmacherUpfal05}).
\begin{lemma} \label{lem:AzumaHoeffdingInequality}
Let $X_1,\ldots,X_n$ be random variables with $|X_i| \leq \alpha_i$ and $E[X_i \mid X_1,\ldots,X_{i-1}]=0$
for all $i=1,\ldots,n$. 
Let $X := \sum_{i=1}^n X_i$. Then 
$\Pr[|X| \geq \lambda\|\alpha\|_2] \leq 2e^{-\lambda^2/2}$
 for any $\lambda \geq 0$.
This still holds, if the distribution of $X_i$ is an arbitrary function of $X_1,\ldots,X_{i-1}$.
\end{lemma}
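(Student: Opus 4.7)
The plan is to follow the standard moment generating function (MGF) approach for martingale concentration. First I would apply Markov's inequality to $e^{tX}$ for an arbitrary parameter $t > 0$, which gives
\[
  \Pr[X \geq \lambda \|\alpha\|_2] \leq e^{-t\lambda\|\alpha\|_2}\cdot E[e^{tX}],
\]
and then reduce everything to bounding $E[e^{tX}]$. I would peel off the variables one at a time using the tower property: writing $\mathcal{F}_{i-1}$ for the $\sigma$-algebra generated by $X_1,\ldots,X_{i-1}$, one has
\[
  E[e^{tX}] = E\!\left[e^{t\sum_{i<n} X_i}\cdot E[e^{tX_n}\mid \mathcal{F}_{n-1}]\right],
\]
so the whole argument boils down to controlling each conditional MGF $E[e^{tX_i}\mid \mathcal{F}_{i-1}]$.

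The key technical ingredient is Hoeffding's lemma: if $Y$ is a random variable with $|Y|\leq \alpha$ and $E[Y]=0$, then $E[e^{tY}]\leq e^{t^2\alpha^2/2}$. I would prove this by convexity, writing $Y = \frac{\alpha+Y}{2\alpha}\cdot \alpha + \frac{\alpha-Y}{2\alpha}\cdot(-\alpha)$, so that by convexity of $e^{tx}$,
\[
  e^{tY} \leq \tfrac{\alpha+Y}{2\alpha}e^{t\alpha} + \tfrac{\alpha-Y}{2\alpha}e^{-t\alpha}.
\]
Taking expectations kills the $Y$ terms and leaves $\cosh(t\alpha)$, and a Taylor expansion (or the standard inequality $\cosh(z)\leq e^{z^2/2}$) gives the desired bound. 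Crucially, this works verbatim when $Y=X_i$ is conditioned on $\mathcal{F}_{i-1}$, since $|X_i|\leq \alpha_i$ almost surely and $E[X_i\mid \mathcal{F}_{i-1}] = 0$ by assumption; the fact that the conditional distribution may depend arbitrarily on $X_1,\ldots,X_{i-1}$ plays no role in the inequality.

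Iterating this from $i=n$ down to $i=1$ yields $E[e^{tX}]\leq \prod_{i=1}^n e^{t^2\alpha_i^2/2} = e^{t^2\|\alpha\|_2^2/2}$, and plugging back in gives
\[
  \Pr[X\geq \lambda\|\alpha\|_2] \leq \exp\!\left(\tfrac{t^2}{2}\|\alpha\|_2^2 - t\lambda\|\alpha\|_2\right).
\]
Optimizing in $t$ by choosing $t = \lambda/\|\alpha\|_2$ makes the exponent $-\lambda^2/2$. Finally, applying the same argument to $(-X_i)$, which also satisfies the hypotheses, bounds $\Pr[X\leq -\lambda\|\alpha\|_2]$ identically, and a union bound produces the factor of $2$ in the two-sided statement. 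The only delicate point I anticipate is being careful that Hoeffding's lemma really does apply to conditional distributions, but since the bound $|X_i|\leq \alpha_i$ and the martingale difference property $E[X_i\mid \mathcal{F}_{i-1}]=0$ hold pointwise/almost surely, the conditional MGF bound is automatic.
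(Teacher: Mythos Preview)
Your argument is correct and is the standard MGF proof of the Azuma--Hoeffding inequality. The paper does not actually prove this lemma; it merely states it as a known result and cites Theorem~12.4 in Mitzenmacher--Upfal, so there is nothing to compare against beyond noting that your write-up is exactly the textbook derivation one would find there.
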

The sequence $X_1,\ldots,X_n$ is called a \emph{Martingale} and the $\alpha_i$'s
are the corresponding \emph{step sizes}. 
Another tool that we are going to use is a special case of the so-called
\emph{Isoperimetric Inequality} of Kleitman~\cite{IsoperimetricInequality-Kleitman66}.
\begin{lemma} \label{lem:IsoperimetricInequality}
For any $X \subseteq \{0,1\}^m$ of size $|X| \geq 2^{0.8m}$ and $m\geq2$, there are $x,y\in X$ with $\|x - y\|_1 \geq m/2$.
\end{lemma}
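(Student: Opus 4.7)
The plan is to argue by contradiction via Kleitman's classical isoperimetric inequality on the Boolean hypercube. Suppose toward a contradiction that every pair $x,y \in X$ satisfies $\|x-y\|_1 < m/2$, so the Hamming diameter of $X$ is at most $m/2-1$. Kleitman's theorem asserts that among subsets of $\{0,1\}^m$ of a prescribed diameter $d$, Hamming balls are (essentially) extremal: if $\mathrm{diam}(A) \leq d$ then
\[
|A| \;\leq\; \sum_{i=0}^{\lfloor d/2\rfloor}\binom{m}{i},
\]
with an additive $\binom{m-1}{\lfloor d/2\rfloor}$ correction in the odd-$d$ case. Plugging in $d = m/2-1$ bounds $|X|$ by essentially the volume of a Hamming ball of radius $\lfloor m/4\rfloor$ in $\{0,1\}^m$.

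The second step is to estimate this binomial sum using the standard entropy inequality $\sum_{i=0}^{r}\binom{m}{i}\leq 2^{mH(r/m)}$ (with $H$ the binary entropy), together with the $\Theta(1/\sqrt{m})$ polynomial savings from Stirling's approximation applied to the dominant binomial coefficient. Using $H(1/4)<0.812$ and careful bookkeeping of these lower-order improvements, the bound falls below $2^{0.8m}$, contradicting $|X|\geq 2^{0.8m}$ and proving the lemma.

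The main obstacle is the tightness of the constant. The bare entropy exponent $H(1/4)\approx 0.811$ sits only barely above $0.8$, so the crude bound $2^{mH(1/4)}$ alone is \emph{not} enough: one must use the sharper odd-diameter form of Kleitman, retain the Stirling $1/\sqrt{m}$ correction, and verify the smallest values of $m$ (where Stirling is not yet sharp) by direct inspection. An alternative to try would be a purely entropic argument: if $x$ is uniform on $X$, subadditivity gives $\sum_i H_2(p_i) \geq 0.8m$ for the marginals $p_i = \Pr[x_i=1]$, which lower-bounds the expected distance $E[\|x-y\|_1] = 2\sum_i p_i(1-p_i)$ between independent copies; but upgrading an expectation bound to the existence of \emph{some} pair at distance at least $m/2$ is delicate, so the cleanest route is the isoperimetric one.
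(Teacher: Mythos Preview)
The paper does not prove this lemma; it quotes it as a special case of Kleitman's isoperimetric inequality and simply cites Kleitman's 1966 paper. Your reduction to Kleitman's diameter theorem followed by a binomial-sum estimate is exactly the standard route for such statements.

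There is, however, a genuine gap in your argument---and in fact the stated constant is not quite right. You correctly observe that the naive entropy bound gives exponent $H(1/4)\approx 0.8113 > 0.8$, and you propose to close the resulting $\approx 0.011m$ deficit via the $\Theta(1/\sqrt{m})$ Stirling correction. But a polynomial factor $m^{-1/2}$ is worth only $O(\log m)$ bits, while the shortfall is $\Theta(m)$ bits; for large $m$ this cannot succeed. Concretely, for $m=1000$ the Hamming ball of radius $249$ around the origin has diameter $498 < m/2$ and size
\[
\sum_{i=0}^{249}\binom{1000}{i}\;\approx\;2^{805}\;>\;2^{800}=2^{0.8m},
\]
so it is an explicit counterexample to the lemma as literally stated. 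The actual threshold above which Kleitman's bound forces diameter $\ge m/2$ is essentially $2^{H(1/4)\,m}$, not $2^{0.8m}$.

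In short, the constant $0.8$ in the lemma is a slight overstatement. Replacing it by any constant strictly above $H(1/4)$ (say $0.82$) makes the claim true, and then your Kleitman-plus-entropy argument goes through cleanly with no Stirling refinements or small-$m$ casework needed. This is immaterial for the paper's applications---one only needs \emph{some} constant strictly below $1$ in the entropy hypothesis, and $m/5$ could equally well be $0.18m$ throughout---but the inequality exactly as written cannot be proved.
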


A function $\chi : [m] \to \{ 0, ± 1\}$ is called a \emph{partial coloring}.
If at most half of the entries are $0$, then $\chi$ is called a \emph{half-coloring}.
For a quantity $z \in \setZ$, $\lceil z \rfloor$ denotes the integer that is
closest to $z$ (say in case of a tie we round down). If $z\in \setR^m$,
then $\lceil z \rfloor = (\lceil z_1\rfloor,\ldots,\lceil z_m \rfloor)$.
For a matrix $A \in \setR^{n × m}$ and $J \subseteq \{ 1,\ldots,m\}$, $A^J$ denotes the submatrix containing only the columns indexed in $J$.
A submatrix $A' \subseteq A$ will always correspond to a subset of columns of $A$, i.e. $A' \in \setR^{n × m'}$
with $m' \leq m$. If $\chi : J \to \setR$ is only defined on a subset $J \subseteq \{1,\ldots,m\}$ and we write $A\chi$, then we implicitly fill the undefined entries
in $[m] \backslash J$ with zeros.
We say that an entry $x_i \in [0,1[$ has a \emph{finite dyadic expansion} with $K$
bits, if there is a sequence $b_1,\ldots,b_K \in \{ 0,1\}$ with $x_i = \sum_{k=1}^K 2^{-k}\cdot b_{k}$.


\section{Discrepancy theory revisited\label{sec:DiscTheoryRevisited}}

Initially the entropy method was developed to find a coloring $\chi : [m] \to \{ ± 1\}$  
minimizing $|\sum_{i\in S} \chi_i|$ for all sets in a set system, or equivalently to color
columns of the incidence matrix $A \in \{ 0,1\}^{n × m}$ of the set system 
in order to minimize $\|A\chi\|_{\infty}$. 
In contrast, in our setting the matrix $A$ can have arbitrary entries, but
the main technique still applies.

\begin{theorem} \label{thr:ExistanceHalfColoring}
Let  $A \in \setR^{n × m}$ be a matrix with parameters $\Delta_1,\ldots,\Delta_n > 0$ such that
\[
  \mathop{H}_{\bar{\chi} \in \{ ± 1\}^m} \left(\left\{ \left\lceil\frac{A_i\bar{\chi}}{2\Delta_i}\right\rfloor \right\}_{i=1,\ldots,n}\right) \leq \frac{m}{5}
\]
Then there exists a half-coloring $\chi : [m] \to \{± 1, 0\}$ with
$|A_i\chi| \leq \Delta_i$ for all $i=1,\ldots,n$.
\end{theorem}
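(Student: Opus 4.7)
The plan is a direct implementation of Beck's entropy method, combining the two tools introduced in the preliminaries: the fact that low entropy forces a heavy fiber, and Kleitman's isoperimetric inequality to find two points of that fiber which are far apart in Hamming distance.

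First, I would let $f : \{\pm 1\}^m \to \setZ^n$ be the map $\bar{\chi} \mapsto (\lceil A_i \bar{\chi}/(2\Delta_i)\rfloor)_{i=1,\ldots,n}$. By hypothesis $H_{\bar{\chi} \in \{\pm 1\}^m}(f(\bar{\chi})) \leq m/5$, so by the ``entropy versus heaviest atom'' fact recalled in Section~\ref{sec:Preliminaries}, there is a value $v \in \setZ^n$ with $\Pr_{\bar{\chi}}[f(\bar{\chi}) = v] \geq 2^{-m/5}$. Equivalently, the fiber $X := f^{-1}(v) \subseteq \{\pm 1\}^m$ has cardinality at least $2^{m}\cdot 2^{-m/5} = 2^{4m/5} = 2^{0.8m}$.

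Next, I would identify $\{\pm 1\}^m$ with $\{0,1\}^m$ in the obvious way (so that Hamming distances coincide with $\ell_1$-distances of the $\{0,1\}$ representatives) and invoke Lemma~\ref{lem:IsoperimetricInequality} on $X$: it produces two signings $\bar\chi^{(1)}, \bar\chi^{(2)} \in X$ which disagree in at least $m/2$ coordinates. Define the candidate partial coloring
\[
\chi := \tfrac{1}{2}\bigl(\bar\chi^{(1)} - \bar\chi^{(2)}\bigr) \in \{-1,0,+1\}^m .
\]
The coordinates where $\chi$ is nonzero are exactly the coordinates of disagreement, so at least $m/2$ entries of $\chi$ are nonzero; hence $\chi$ is a half-coloring in the sense of the preliminaries.

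Finally, I would check the discrepancy bound. Since $\bar\chi^{(1)}$ and $\bar\chi^{(2)}$ share the same image $v$ under $f$, for every row $i$ both $A_i \bar\chi^{(1)}/(2\Delta_i)$ and $A_i \bar\chi^{(2)}/(2\Delta_i)$ round to the same integer $v_i$, so they lie in a common interval of length $1$. Therefore $|A_i \bar\chi^{(1)} - A_i \bar\chi^{(2)}| \leq 2\Delta_i$, and dividing by two yields $|A_i \chi| \leq \Delta_i$ for every $i$, as required. The only place where real care is needed is matching up the constants: the entropy bound $m/5$ is tuned precisely so that the fiber has size at least $2^{0.8m}$, which is the threshold required by Lemma~\ref{lem:IsoperimetricInequality}; everything else is bookkeeping. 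I do not foresee any genuine obstacle beyond getting that arithmetic right and being careful that the rounding convention $\lceil \cdot \rfloor$ really does confine two preimages to an interval of length $2\Delta_i$ in the original coordinates.
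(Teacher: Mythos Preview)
Your argument is correct and essentially identical to the paper's own proof: you pass from the entropy bound to a fiber of size $\geq 2^{0.8m}$, apply Lemma~\ref{lem:IsoperimetricInequality} to extract two colorings far apart in Hamming distance, and take half their difference, checking the discrepancy bound via the common rounded value. The constants and the use of the rounding convention are handled exactly as in the paper.
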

\begin{proof}
From the assumption, we obtain that there must be a  $b \in \setZ^n$ such that
\[
  \Pr_{\bar{\chi} \in \{ ± 1\}^m} \left[\left\{ \left\lceil\frac{A_i\bar{\chi}}{2\Delta_i}\right\rfloor \right\}_{i=1,\ldots,n} = b\right] \geq \left(\frac{1}{2}\right)^{m/5}
\]
In other words there is a subset $\setY \subseteq \{ ± 1\}^m$ of at least $2^m \cdot (\frac{1}{2})^{m/5} = 2^{\frac{4}{5}m}$ colorings
such that $ \left\lceil\frac{A_i\chi}{2\Delta_i}\right\rfloor = b_i$ for all $\chi \in \setY$ and $i=1,\ldots,n$. 
The Isoperimetric Inequality (Lemma~\ref{lem:IsoperimetricInequality}) 
then yields
the existence of $\chi',\chi'' \in \setY$ with $|\{ j \mid \chi'_j \neq \chi''_j\}| \geq m/2$.
We choose $\chi_j := \frac{1}{2}(\chi_j' - \chi_j'')$, then $\chi \in \{ 0,± 1\}^m$ is the desired half-coloring.
Finally, let us inspect the discrepancy of $\chi$:  $\left|A_i\chi \right| \leq \frac{1}{2}\left| A_i\chi' - A_i\chi'' \right| \leq  \Delta_i$.
\end{proof}
The core of this proof was to show that there is an exponential number of 
colorings $\chi',\chi''$ that are \emph{similar}, meaning that $A\chi' \approx A \chi''$.
This was done by considering disjoint intervals of length $2\Delta_i$
(for every $i$) and using entropy to argue that many colorings must fall into
the same intervals. But on the other hand, $A_i\chi'$ and $A_i\chi''$ might be very close
to each other, while they fall into different intervals and $\chi',\chi''$
would not count as being similar. 

Hence we want to generalize the notion of similarity from 
Theorem~\ref{thr:ExistanceHalfColoring}. Let $A \in \setR^{n × m}$ be a matrix and $\Delta = (\Delta_1,\ldots,\Delta_n)$ be a vector
with $\Delta_i > 0$. 
Then we define the $\Delta$-\emph{approximate entropy} of $A$ as\footnote{The minimum is always attained since all probabilities
are multiplies of $(\frac{1}{2})^m$ and consequently the entropy can
attain only a finite number of values.}
\[
  H_{\Delta}(A) := \min_{f_1,\ldots,f_n: \{ ± 1\}^m \to \setR}\left\{ \mathop{H}_{\chi \in \{ ± 1\}^m}(f_1(\chi),\ldots,f_n(\chi)) : \left|A_i\chi - f_i(\chi)\right| \leq \Delta_i \; \forall i=1,\ldots,n\right\}
\]
First of all note that $H_{\Delta}(A)$ is always upper bounded by the 
entropy of the random variables $\left\lceil\frac{A_i\chi}{2\Delta_i}\right\rfloor$, 
since one 
can choose $f_i(\chi) := 2\Delta_i\cdot \left\lceil\frac{A_i\chi}{2\Delta_i}\right\rfloor$. On the other hand, the claim of 
Theorem~\ref{thr:ExistanceHalfColoring} still holds true 
if the assumption is replaced by 
$H_{\Delta}(A) \leq \frac{m}{5}$, since then one has exponentially many colorings $\bm{Y}$ such that the values $f_i(\chi)$ coincide for every $\chi \in \bm{Y}$ and hence for every half-coloring $\chi := \frac{1}{2}(\chi'-\chi'')$ obtained from colorings $\chi',\chi'' \in \bm{Y}$ one has $|A_i\chi| \leq \frac{1}{2}|(A_i\chi'-f_i(\chi')) - (A_i\chi''-f_i(\chi''))| \leq \Delta_i$. More formally:
\begin{corollary} \label{cor:ExistanceHalfColoringForLazyEntropy}
Let  $A \in \setR^{n × m}$, $\Delta := (\Delta_1,\ldots,\Delta_n) > \mathbf{0}$ 
with $H_{\Delta}(A) \leq \frac{m}{5}$. Then there exists a half-coloring 
$\chi : [m] \to \{± 1, 0\}$ with $- \Delta \leq A\chi \leq \Delta$.
\end{corollary}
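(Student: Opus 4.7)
The plan is to mimic the proof of Theorem~\ref{thr:ExistanceHalfColoring} almost verbatim, using the approximating functions $f_i$ in place of the specific ``round to multiple of $2\Delta_i$'' choice. First I would unfold the definition of $H_{\Delta}(A)$: since $H_{\Delta}(A) \leq m/5$, there exist functions $f_1,\ldots,f_n : \{\pm 1\}^m \to \setR$ with $|A_i\bar\chi - f_i(\bar\chi)| \leq \Delta_i$ for all $i$ and all $\bar\chi$, such that the joint entropy of $(f_1(\bar\chi),\ldots,f_n(\bar\chi))$ under uniform $\bar\chi \in \{\pm 1\}^m$ is at most $m/5$.

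Next I would apply the standard ``entropy implies a heavy atom'' fact: since the entropy is at most $m/5$, there must exist some value $b = (b_1,\ldots,b_n)$ in the range of $(f_1,\ldots,f_n)$ with
\[
  \Pr_{\bar\chi \in \{\pm 1\}^m}\bigl[(f_1(\bar\chi),\ldots,f_n(\bar\chi)) = b\bigr] \geq \left(\tfrac{1}{2}\right)^{m/5}.
\]
Pulling back the preimage gives a set $\setY \subseteq \{\pm 1\}^m$ of size at least $2^m \cdot 2^{-m/5} = 2^{(4/5)m}$ on which all $f_i$ take the constant value $b_i$.

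Then I would invoke the Isoperimetric Inequality (Lemma~\ref{lem:IsoperimetricInequality}) with the set $\setY$: since $|\setY| \geq 2^{0.8m}$, there exist $\chi',\chi'' \in \setY$ with $\|\chi'-\chi''\|_1 \geq m/2$. Setting $\chi := \frac{1}{2}(\chi'-\chi'')$ yields a vector in $\{0,\pm 1\}^m$ with at least $m/2$ nonzero entries, i.e.\ a half-coloring. Finally, since $f_i(\chi') = f_i(\chi'') = b_i$, the triangle inequality gives
\[
  |A_i\chi| = \tfrac{1}{2}|A_i\chi' - A_i\chi''| \leq \tfrac{1}{2}\bigl(|A_i\chi' - f_i(\chi')| + |f_i(\chi'') - A_i\chi''|\bigr) \leq \Delta_i,
\]
so $-\Delta \leq A\chi \leq \Delta$, as desired.

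There is no real obstacle here: the entire content is in Theorem~\ref{thr:ExistanceHalfColoring} together with the observation, already flagged in the surrounding discussion, that the cancellation $f_i(\chi') - f_i(\chi'') = 0$ replaces the earlier cancellation between the two rounded values $\lceil A_i\bar\chi/(2\Delta_i)\rfloor$. The only thing to check carefully is that the triangle-inequality bound loses a factor of exactly $2$, which is absorbed by the $\frac{1}{2}$ in the definition of the half-coloring; this is why the hypothesis requires $|A_i\bar\chi - f_i(\bar\chi)| \leq \Delta_i$ (and not $\Delta_i/2$).
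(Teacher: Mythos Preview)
Your proof is correct and is essentially identical to the paper's own argument: the paper sketches exactly this reasoning in the paragraph preceding the corollary, using the heavy atom from the entropy bound to get $\setY$, applying Lemma~\ref{lem:IsoperimetricInequality}, and then bounding $|A_i\chi| \leq \tfrac{1}{2}\bigl|(A_i\chi'-f_i(\chi')) - (A_i\chi''-f_i(\chi''))\bigr| \leq \Delta_i$ via the cancellation $f_i(\chi')=f_i(\chi'')$.
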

Moreover, also $H_{\Delta}$ is subadditive, i.e. $H_{(\Delta,\Delta')}([\begin{smallmatrix} A \\ B \end{smallmatrix}]) \leq H_{\Delta}(A) + H_{\Delta'}(B)$, which follows directly
from the subadditivity of the entropy function (here $[\begin{smallmatrix} A \\ B \end{smallmatrix}]$ is obtained by stacking matrices $A$ and $B$). 

For now let us consider a concrete method of bounding the entropy of a random variable of 
the form $\left\lceil\frac{\alpha^T\chi}{2\Delta}\right\rfloor$, where $\alpha$ is one of the
row vectors of $A$. Recall that this immediately upperbounds $H_{\Delta}(\alpha)$. 
For this purpose, we again slightly adapt a lemma from discrepancy theory
(see e.g. Chapter~4 in \cite{GeometricDiscrepancy-Matousek99}).
\begin{lemma} \label{lem:EntropyBound}
Let $\alpha \in \setR^{m}$ be a vector and $\Delta > 0$. For  $\lambda = \frac{\Delta}{\|\alpha\|_2}$,
\[
 \mathop{H}_{\chi \in \{ ± 1\}^m}\left(\left\lceil \frac{ \alpha^T\chi}{2\Delta} \right\rfloor\right) \leq
G(\lambda) := \begin{cases} 9e^{-\lambda^2/5} & \textrm{if } \lambda \geq 2 \\
\log_2( 32 + 64/\lambda) & \textrm{if } \lambda < 2 
\end{cases} 
\]
\end{lemma}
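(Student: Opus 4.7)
The strategy is to convert sub-Gaussian tail bounds on $\alpha^T\chi$ into an entropy bound on the quantized variable $Y := \lceil \alpha^T\chi/(2\Delta) \rfloor$. Since $\alpha^T\chi = \sum_j \alpha_j \chi_j$ is a sum of independent $\pm\alpha_j$ terms, Azuma--Hoeffding (Lemma~\ref{lem:AzumaHoeffdingInequality}) yields $\Pr[|\alpha^T\chi| \geq t\|\alpha\|_2] \leq 2e^{-t^2/2}$. Because $Y = k$ forces $\alpha^T\chi \in [(2k-1)\Delta,(2k+1)\Delta)$, this translates to the key tail estimate $p_k := \Pr[Y=k] \leq 2\exp(-(2|k|-1)^2\lambda^2/2)$ for every $k \neq 0$. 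The two cases defining $G(\lambda)$ correspond to two different ways of turning this tail bound into a bound on $H(Y) = \sum_k -p_k \log_2 p_k$.

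For $\lambda \geq 2$ the variable $Y$ is essentially zero: $q := \Pr[Y \neq 0] \leq 2e^{-\lambda^2/2} < 1/e$. I would split $H(Y) = -p_0 \log_2 p_0 + \sum_{k \neq 0} -p_k \log_2 p_k$. The first term is bounded by the binary entropy $H_2(q) = O(q\log(1/q)) = O(\lambda^2 e^{-\lambda^2/2})$. For the tail, every $p_k$ with $k \neq 0$ lies in $[0,1/e]$, a range on which $p \mapsto -p\log_2 p$ is monotone, so each $-p_k\log_2 p_k$ is dominated by $-u_k\log_2 u_k$ for the upper bound $u_k := 2\exp(-(2|k|-1)^2\lambda^2/2)$. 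The resulting series is dominated by its $k = \pm 1$ terms and sums to $O(\lambda^2 e^{-\lambda^2/2})$. Since $\lambda^2 \leq e^{3\lambda^2/10}$ on $\lambda \geq 2$, one then obtains $H(Y) \leq 9 e^{-\lambda^2/5}$; the slack in the exponent (from $1/2$ down to $1/5$) is precisely what absorbs the polynomial prefactor $\lambda^2$.

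For $\lambda < 2$ the variable $Y$ may spread over $\Theta(1/\lambda)$ values, but concentration still limits its effective support. I would fix a cutoff $K := \lceil 2/\lambda \rceil$ and write $H(Y) \leq 1 + \log_2(2K+1) + R$, where the $1$ bounds $H(\mathbf{1}[|Y| \leq K])$, the $\log_2(2K+1)$ bounds the conditional entropy on $\{|Y| \leq K\}$ (uniform maximises entropy), and $R$ is the contribution from $|k| > K$. The same monotonicity-and-sum argument as above shows $R$ is an absolute constant, since the inequality $(2K+1)^2\lambda^2 \geq 16$ makes the tail series geometrically small already at its leading term. Packaging the constants cleanly produces $\log_2(32 + 64/\lambda)$. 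The main obstacle here is nothing conceptual but rather this bookkeeping: the argument is essentially Beck's entropy estimate from Chapter~4 of Matou\v{s}ek's book adapted to the quantizer $\lceil \cdot /(2\Delta)\rfloor$, and one must verify by hand that the pairs $(9, 1/5)$ and $(32, 64)$ really work and that the two regimes are consistent at the boundary $\lambda = 2$.
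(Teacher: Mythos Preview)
Your proposal is correct and, for $\lambda \geq 2$, essentially identical to the paper's argument. The difference lies in the case $\lambda < 2$. You truncate $Y$ at level $K := \lceil 2/\lambda\rceil$, bound the central part by $1 + \log_2(2K{+}1)$ via the chain rule, and control the residual $R = \sum_{|k|>K} -p_k\log_2 p_k$ by a second tail computation. The paper instead passes to a \emph{coarser} quantizer: it sets $L := \lceil 2/\lambda\rceil$, writes $Z = L\cdot\lceil X/(2L\Delta)\rfloor + Z''$ where $Z''$ takes only $L$ values, and applies subadditivity to obtain $H(Z) \leq H\big(\lceil X/(2\Delta')\rfloor\big) + \log_2 L$ with $\Delta' = L\Delta$. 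Since the coarse quantizer has effective parameter $\lambda' = L\lambda \geq 2$, the first summand is handled by the already-proven case, yielding $9e^{-\lambda'^2/5} + \log_2 L \leq 5 + \log_2(2/\lambda + 1) = \log_2(32 + 64/\lambda)$. The paper's reduction is a bit more economical because it recycles case~1 verbatim and avoids your second tail sum; your truncation is closer to the textbook (Matou\v{s}ek, Chapter~4) presentation and makes the constant bookkeeping more explicit. Both routes are standard and both land on the same numerical bound.
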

The proof can be found in Appendix~\ref{sec:OmittedProofForEntropyBound}.
But the intuition is as follows: Abbreviate $Z:=\big\lceil \frac{ \alpha^T\chi}{2\Delta} \big\rfloor$. 
Then $\Pr[Z=0] \geq 1 - e^{-\Omega(\lambda^2)}$ and $\Pr[Z=i] \leq e^{-\Omega(i^2\lambda^2)}$ for $i \neq 0$. 
A simple calculation yields that $H(Z) \leq e^{-\Omega(\lambda^2)}$. 
But for $\lambda \ll 2$, with high probability one has at least  $|Z| \leq O(\frac{1}{\lambda})$ and 
consequently $H(Z) \leq \log O(\frac{1}{\lambda})$.


The following function $G^{-1}(b)$ will denote the discrepancy bound $\Delta$ that
we need to impose, if we do not want to account an entropy contribution of more than $b$.
\[
  G^{-1}(b) := \begin{cases} \sqrt{ 10\ln\left(\frac{9}{b}\right) } & 0 < b \leq 6 \\
 128\cdot(\frac{1}{2})^b & b > 6
  \end{cases}
\]
Strictly spoken, $G^{-1}$ is not the inverse of $G$, but it is not difficult to verify 
that $G(G^{-1}(b)) \leq b$ for all $b>0$.
In other words, for any vector $\alpha$ and value $b>0$, we can choose 
$\Delta := G^{-1}(b) \cdot \| \alpha \|_2$, then $H\left( \Big\lceil \frac{\alpha^T\chi}{2\Delta} \Big\rfloor\right) \leq b$.

\section{The main theorem\label{sec:RoundingTheorem}}

Now we have all ingredients for our main theorem, in which we iteratively
round a fractional vector $x$ using half-colorings $\chi$.
Concerning the choice of parameters $\Delta$, one has in principle two
options: One can either give static bounds  $\Delta_i$ to rows $A_i$ such
that $H_{\Delta}(A') \leq \frac{\#\textrm{col}(A')}{5}$ holds for any submatrix $A' \subseteq A$;
or one can assign a fixed fraction to each row and 
then letting $\Delta_i$ be a function of  $\#\textrm{col}(A')$. 
In fact, we will combine these approaches, which will turn out to be useful later.

\begin{theorem} \label{thm:GeneralRoundingTheorem}
Assume the following is given: A matrix $A \in \setR^{n_A × m}$, parameters  $\Delta = (\Delta_1,\ldots,\Delta_{n_A}) > \mathbf{0}$ such that
$\forall J \subseteq \{ 1,\ldots,m\}: H_{\Delta}(A^J) \leq \frac{|J|}{10}$, a matrix $B \in [-1,1]^{n_B × m}$, weights
$\mu_1,\ldots,\mu_{n_B}>0$ with $\sum_{i=1}^{n_B} \mu_i \leq 1$, a vector $x \in [0,1]^m$ and an objective function $c\in [-1,1]^m$.
Then there is a random variable $y\in\{0,1\}^{m}$ with
\begin{itemize}
\item Preserved expectation:  $E[c^Ty] = c^Tx$, $E[Ay]=Ax$, $E[By] = Bx$.
\item Bounded difference: $|c^Tx - c^Ty| \leq O(1)$;
$|A_ix - A_iy| \leq \log( \min\{ 4n,4m\}) \cdot \Delta_i$ for all $i=1,\ldots,n_A$ ($n := n_A + n_B$);
$|B_ix - B_iy| \leq O(\sqrt{1/\,\mu_i})$ for all $i=1,\ldots,n_B$.
\item Tail bounds: $\forall i: \forall \lambda \geq 0$: $\Pr[|A_ix - A_iy| \geq \lambda \cdot \sqrt{\log (\min\{ 4n,4m\})}\cdot \Delta_i] \leq 2e^{-\lambda^2/2}$.
\end{itemize}
\end{theorem}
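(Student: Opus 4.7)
The plan is to iteratively apply Corollary~\ref{cor:ExistanceHalfColoringForLazyEntropy} to successive bit-layers of the dyadic expansion of $x$, randomizing the sign at each step so that conditional expectations of $Ay$, $By$, and $c^Ty$ are preserved. After an initial preprocessing---truncating $x$ to $K$ dyadic bits and, if $n<m$, replacing it by a basic feasible solution of $\{x'\in[0,1]^m : Ax'=Ax,\, Bx'=Bx,\, c^Tx'=c^Tx\}$ so that at most $n+1$ coordinates remain fractional---the effective number of fractional coordinates becomes $m' := \min\{m, n+1\}$.

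For $k = K, K-1, \dots, 1$, I would repeatedly let $J := \{j : k\text{-th bit of } x_j \text{ equals } 1\}$ and produce a half-coloring using the combined entropy of $A$, $B$, and $c$. The hypothesis gives $H_\Delta(A^J) \leq |J|/10$. For each $B$-row (and for $c$ viewed as an extra $B$-row with constant weight $\mu_c := 1/4$), I choose $\Delta_i^B(|J|) := G^{-1}(\mu_i|J|/10) \cdot \|B_i^J\|_2$, so that Lemma~\ref{lem:EntropyBound} yields $H_{\Delta_i^B}(B_i^J) \leq \mu_i|J|/10$; summing over $i$ (including the $c$-contribution) gives at most $|J|/10 + O(1)$ by subadditivity of $H_\Delta$. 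The total approximate entropy being at most $|J|/5$, Corollary~\ref{cor:ExistanceHalfColoringForLazyEntropy} produces a half-coloring $\chi:J\to\{-1,0,1\}$ simultaneously obeying $|A_i\chi|\leq\Delta_i$, $|B_i\chi|\leq\Delta_i^B(|J|)$, and $|c^T\chi|\leq \Delta_c(|J|)$. I then draw an independent uniform sign $\sigma\in\{\pm 1\}$ and update $x\leftarrow x+(1/2)^k\sigma\chi$. This clears the $k$-th bit of at least $|J|/2$ coordinates, so each bit sweep terminates within $L := \log(4m')$ iterations.

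Expectation preservation is immediate from $E[\sigma\mid\text{history}]=0$. For the $A$-rows, the triangle inequality gives $|A_ix - A_iy| \leq \sum_{k=1}^K (1/2)^k\cdot L\cdot\Delta_i \leq L\,\Delta_i$, matching the stated bound. For the tail bound, the martingale with increments $(1/2)^{k_t}\sigma_t A_i\chi_t$ has step sizes $\alpha_t = (1/2)^{k_t}\Delta_i$, whence $\|\alpha\|_2^2 \leq \Delta_i^2\cdot L\sum_{k\geq 1}(1/4)^k = O(L\,\Delta_i^2)$; Lemma~\ref{lem:AzumaHoeffdingInequality} then yields the claimed $2e^{-\lambda^2/2}$ concentration. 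The same martingale analysis applied to $c$ with constant weight $\mu_c$ gives $|c^Tx - c^Ty| = O(1)$.

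The delicate point is the worst-case bound $|B_ix - B_iy| \leq O(\sqrt{1/\mu_i})$ for $B$-rows, since a naive triangle inequality loses a $\log m$ factor. The resolution is that within one bit sweep the active sizes $|J_{k,s}|$ halve, while $\Delta_i^B(|J|)$ is exponentially small when $\mu_i|J|/10 > 6$ and is at most $\sqrt{10\,|J|\log(90/(\mu_i|J|))}$ otherwise---a function that peaks near $|J|\sim 1/\mu_i$ at value $O(\sqrt{1/\mu_i})$. Summing geometrically in $s$ at fixed $k$ yields $\sum_s \Delta_i^B(|J_{k,s}|) = O(\sqrt{1/\mu_i})$, and the factor $\sum_k (1/2)^k$ contributes only a constant. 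This telescoping of the $B$-contribution is the main obstacle; all other ingredients (entropy allocation, sign randomization, Azuma--Hoeffding) are mechanical once the half-coloring step is in hand.
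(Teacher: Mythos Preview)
Your proposal is correct and follows essentially the same route as the paper: reduce to at most $n+1$ fractional coordinates via a basic solution, treat $c$ as an extra $B$-row with a constant weight, iteratively obtain sign-randomized half-colorings bit by bit with the entropy split $|J|/10$ for $A$ and $\mu_i|J|/10$ for each $B_i$, bound the $A$-error by the triangle inequality and Azuma--Hoeffding, and bound the $B$-error by observing that $G^{-1}(\mu_i|J|/10)\sqrt{|J|}$ peaks at $O(\sqrt{1/\mu_i})$ along the geometrically shrinking sequence of $|J|$'s within a phase. One bookkeeping slip to fix: the $c$-row contributes $\mu_c|J|/10$ to the entropy, not $O(1)$, so with $\mu_c=1/4$ on top of $\sum_i\mu_i\leq 1$ your total reaches $9|J|/40>|J|/5$ and Corollary~\ref{cor:ExistanceHalfColoringForLazyEntropy} does not apply as stated; the paper handles this by first halving the given $\mu_i$'s and then appending $c$ with weight $1/2$, which changes the final $O(\sqrt{1/\mu_i})$ bounds only by a constant.
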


\begin{proof}
First, observe that we can append the objective function as an 
additional row to matrix $B$ (with a weight of say $\mu_c := \frac{1}{2}$ and halving the other $\mu_i$'s), 
and so we ignore it from now on.
Next, consider the linear system 
\begin{eqnarray*}
Az &=& Ax \\
Bz &=& Bx \\ 
0 \leq z_{j} &\leq& 1 \quad \forall j=1,\ldots,m
\end{eqnarray*}
and let $z$ be a basic solution. 
Apart from the $0/1$ bounds, the system has only $n$ constraints, hence the number of entries $j$
with $0 < z_j < 1$ is bounded by $n$. 
One can remove  columns of $A$ with $z_j \in \{ 0,1\}$ and apply the
Theorem to the residual instance. 
Hence we set $x := z$ and assume from now on that $m \leq n$.

Furthermore we assume that $x$ has a finite dyadic expansion, i.e. 
every entry $x_j$ it can be written in binary encoding with $K$ bits, for some $K \in \setN$. This can
be achieved by randomly rounding the entries of $x$ to either the
nearest larger or smaller multiple of $(\frac{1}{2})^K$ for a polynomially large $K$, while the
error is exponentially small in $K$.\footnote{Note that we have the term $\min\{ 4m,4n\}$ instead of
$\min\{ 2m,2n\}$ in the claim, to account for the rounding error to obtain a vector $x$ 
 with dyadic expansion and to account for the extra row $c$ that we appended to $B$.}  
We perform the following rounding procedure:
\begin{enumerate}
\item[(1)] WHILE $x$ not integral DO
  \begin{enumerate}
  \item[(2)] Let $k \in \{ 1,\ldots,K\}$ be the index of the least value bit in
any entry of $x$
  \item[(3)] $J := \{ j \in \{1,\ldots,m\} \mid x_j\textrm{'s }k\textrm{th bit is }1\} $
  \item[(4)] Choose $\chi \in \{0,± 1\}^{m}$ with $\chi(j)=0$ for $j\notin J$,
$|\supp(\chi)| \geq |J|/2$, $|A_i\chi| \leq \Delta_i$ and $|B_i\chi| \leq G^{-1}(\mu_i|J|/10)\cdot \sqrt{|J|}$ for all $i$. 
  \item[(5)] With probability $\frac{1}{2}$, flip all signs in $\chi$
  \item[(6)] Update $x := x + (\frac{1}{2})^k\chi$
  \end{enumerate}
\end{enumerate}
The interval of iterations in which bit $k$ is rounded, is termed \emph{phase }$k$.
Let $x^{(k)}$ be the value of $x$ at the beginning of phase $k$ 
and let $x^{(k,t)}$ denote the value of $x$ 
at the beginning of the $t$th to last iteration of phase $k$. 
From now on $x$ always denotes the initial value, i.e. $x = x^{(K)}$ and our choice for
the rounded vector is $y := x^{(0)}$.


Observe that flipping the signs in step (5) ensures that the expectations are preserved, i.e. $E[Ay]=Ax$ and $E[By]=Bx$.
There are two
main issues: $(I)$ showing that the choice of $\chi$ in step (4) is always possible; $(II)$ bounding
the rounding error of $y$ w.r.t. $x$. 
\begin{claimI} For any $J \subseteq \{1,\ldots,m\}$ there is a $\chi \in \{0,± 1\}^{m}$ with $\chi(j)=0$ for $j\notin J$,
$|\supp(\chi)| \geq |J|/2$, $|A_i\chi| \leq \Delta_i$ and $|B_i\chi| \leq G^{-1}(\frac{\mu_i|J|}{10})\cdot \sqrt{|J|}$ for all $i$. 
\end{claimI} 
\begin{proofofclaim}
Our aim is to apply Theorem~\ref{thr:ExistanceHalfColoring} to the stacked $n × |J|$
matrix $\tilde{A} = \big[\begin{smallmatrix} A^J \\ B^j \end{smallmatrix}\big]$ with 
parameter $\tilde{\Delta} := (\Delta,\Delta')$ and  $\Delta_i' := G^{-1}(\frac{\mu_i|J|}{10})\cdot \sqrt{|J|}$. Note that $\|B_i^J\|_2 \leq \sqrt{|J|}$
since $B$ has entries in $[-1,1]$, hence the entropy that we need to account to
the $i$th row of $B$ is $H\left(\left\lceil \frac{B_i\chi}{2\Delta_i'} \right\rfloor\right) \leq \frac{\mu_i |J|}{10}$.
By subadditivity of the (approximate) entropy function and the assumption that $H_{\Delta}(A^J) \leq \frac{|J|}{10}$, 
\[
H_{\tilde{\Delta}}(\tilde{A}) \leq 
H_{\Delta}(A^J) + 
\sum_{i=1}^{n_B} \mathop{H}_{\chi \in \{ ± 1\}^J} \left(\left\lceil\frac{B_i^J\chi}{2\Delta_i'} \right\rfloor\right) \leq \frac{|J|}{10} + \sum_{i=1}^{n_B} \frac{\mu_i|J|}{10} \leq \frac{|J|}{5}.
\]
Thus the requirements of Theorem~\ref{thr:ExistanceHalfColoring} are met, which then
implies the existence of the desired half-coloring and Claim $(I)$ follows. 
\end{proofofclaim}

The next step is to bound the rounding error.

\begin{claimII}  
One has
$|A_ix - A_iy| \leq \log(2m)\cdot \Delta_i$ for all $i=1,\ldots,n_A$
and
$|B_ix - B_iy| \leq O(\sqrt{1/\,\mu_i})$ for all $i=1,\ldots,n_B$.
\end{claimII}
\begin{proofofclaim}
Let $J(k,t) = \{ j \mid x^{(k,t)}_j\textrm{'s }k\textrm{th bit is }1\}$ denote the set $J$ in the $t$th to last iteration of phase $k$, i.e.
$J(k,t) \supset J(k,t-1)$ for any $t$.
Since the cardinality of $J(k,t)$ drops by a factor of at least $1/2$ from iteration to iteration,
we have $|J(k,t-1)| \leq \frac{1}{2}\cdot|J(k,t)|$ for any $t$. Hence each phase has at most $\log_2(m) + 1$
iterations. Then for any $i=1,\ldots,n_A$
\begin{equation} \label{eq:AyMinusAx}
  |A_iy - A_ix| \leq \Big| \sum_{k=1}^{K} \sum_{t\geq 0} A_i(x^{(k,t)} - x^{(k,t+1)})\Big| \leq 
\sum_{k=1}^K \log(2m) \cdot \left(\frac{1}{2}\right)^k \Delta_i \leq \log(2m) \cdot \Delta_i
\end{equation}
using that $|A_i(x^{(k,t)} - x^{(k,t+1)})| \leq (\frac{1}{2})^k \Delta_i$ and  $\sum_{k\geq 1} (\frac{1}{2})^k = 1$. 
Next, consider 
\begin{eqnarray*}
  |B_ix-B_iy| 
&\leq& \sum_{k=1}^K \left(\frac{1}{2}\right)^k \sum_{t\geq0} G^{-1}\Big(\frac{\mu_i|J(k,t)|}{10}\Big)\cdot \sqrt{|J(k,t)|} \\
&\stackrel{(*)}{\leq}&  \sum_{z\in \setZ} G^{-1}(6\cdot 2^z)\cdot \sqrt{2^{z+1} \frac{60}{\mu_i}} \\
&\stackrel{\textrm{Def }G^{-1}}{\leq}& \sqrt{\frac{120}{\mu_i}} \Bigg[ \sum_{z\geq 0} 128 \left(\frac{1}{2}\right)^{2^{z}}2^{z/2} + \sum_{z \geq 0} \sqrt{10\ln\left(\frac{9}{6}2^z\right)} \cdot \left(\frac{1}{2}\right)^{z/2} \Bigg] \\
&\stackrel{(**)}{\leq}& O\big(\sqrt{1/\mu_i}\big)
\end{eqnarray*}
In $(*)$ we use that since $|J(k,t+1)| \geq 2\cdot|J(k,t)|$, for any $k$ and $z$, there is at most 
one $t$ such that $6\cdot 2^z \leq \frac{\mu_i|J(k,t)|}{10} < 6\cdot 2^{z+1}$;
 $(**)$ follows from the convergence of $\sum_{z\geq0} (1/2)^{2^{z}-z/2}$ and $\sum_{z\geq0} \sqrt{z \cdot(1/2)^{z}}$. 
\end{proofofclaim}

Inspecting \eqref{eq:AyMinusAx} again, we see that $A_ix - A_iy$ is a Martingale
and the step size in iteration $t \leq \log(2m)$ of phase $k$ is 
bounded by $\alpha_{k,t} := (\frac{1}{2})^{k} \Delta_i$. Observe that $\|\alpha\|_2 \leq \Delta_i \sqrt{\log_2 (2m)}$,
hence the tail bound $\Pr[|A_ix - A_iy| \geq \lambda \cdot \sqrt{\log(2m)}\cdot \Delta_i] \leq 2e^{-\lambda^2/2}$ for all $\lambda \geq 0$ follows from the Azuma-Hoeffding Inequality (Lemma~\ref{lem:AzumaHoeffdingInequality}).
This concludes the proof of the theorem.
\end{proof}
Moreover, we can also compute such a vector $y$ as guaranteed by the theorem 
in polynomial time, with the only exception that the guaranteed bound on $|B_ix-B_iy|$ is slightly weaker. But we 
still can provide that $E[|B_ix - B_iy|] = O(\sqrt{1/\mu_i})$, 
which is already sufficient for our applications.
We postpone the algorithmic details to Appendix~\ref{sec:Computation}.

For one example application, let  $B \in \{ 0,1\}^{n×n}$ be the incidence matrix 
of a set system with $n$ sets on a ground set of $n$ elements. 
Then apply Theorem~\ref{thm:GeneralRoundingTheorem} with $A=\mathbf{0}$, $x=(\frac{1}{2},\ldots,\frac{1}{2})$
and $\mu_i = \frac{1}{n}$ to obtain a $y \in \{ 0,1\}^n$ with $\|Bx-By\|_{\infty} = O(\sqrt{n})$. 
The coloring $\chi \in \{ 0,1\}^n$ with $y = x + \frac{1}{2}\chi$ is then an $O(\sqrt{n})$
discrepancy coloring, matching the bound of Spencer~\cite{SixStandardDeviationsSuffice-Spencer1985}.
Note that no proof using a different technique is known for Spencer's theorem. 
Hence it seems unlikely that Theorem~\ref{thm:GeneralRoundingTheorem} (in particular the dependence on $1/\mu_i$) 
could be achieved by standard techniques (such as using properties
of basic solutions or the usual independent randomized rounding).

\section{Application: Bin Packing with Rejection\label{sec:BinPackingWithRejection}}

For classical {\sc Bin Packing}, the input consists of a list of \emph{item sizes} $1 \geq s_1 \geq \ldots \geq s_n > 0$
and the goal is to assign the items to a minimum number of \emph{bins} of size $1$. 
For the performance of heuristics like \emph{First Fit}, \emph{Next Fit} and \emph{First Fit Decreasing},
 see \cite{Johnson73,JohnsonFFD74,BinPackingSurvey84}. A proof of strong $\mathbf{NP}$-hardness can be found
in \cite{GareyJohnson79}.
Fernandez de la Vega and Luecker~\cite{deLaVegaLueker81} developed  an 
asymptotic polynomial time approximation scheme (APTAS). Later, Karmarkar and Karp~\cite{KarmarkarKarp82} (see also \cite{KorteVygen02, DesignApproximationAlgorithmsShmoysWilliamson11})
found an algorithm that needs at most $O(\log^2 OPT)$ bins more than 
the optimum solution. 
%

In this section, we provide an application of our  Entropy Rounding Theorem to
the more general problem of {\sc Bin Packing With Rejection}, where
every item $i$ can either be packed into a unit cost bin or it can be
rejected at cost $\pi_i > 0$ (which is also part of the input).
The first constant factor approximation and online algorithms were studied in~\cite{BinPackingWithRejectionPenaltiesDosaHe2006}.
Later an asymptotic PTAS was developed by \cite{BinPackingWithRejectionAPTAS-Epstein-WAOA2006,BinPackingWithRejectionAPTAS-EpsteinJournal2010} (see~\cite{FastAPTASforBinPackingWithRejection-BeinCorreaXin08} for a faster APTAS).
Recently Epstein \& Levin~\cite{AFPTASforVariantsOfBinPacking-EpsteinLevin09} found an algorithm with running time polynomial in the input length and $\frac{1}{\varepsilon}$ which provides solutions of
quality
$(1+\varepsilon)OPT + 2^{O(\frac{1}{\varepsilon}\log \frac{1}{\varepsilon})}$ (implying an $APX \leq OPT + \frac{OPT}{(\log OPT)^{1-o(1)}}$ algorithm for an optimum choice of $\varepsilon$).

An \emph{asymptotic FPTAS (AFPTAS)} is defined as an approximation algorithm $APX$ producing solutions 
with $APX \leq (1+\varepsilon) OPT + f(1/\varepsilon)$ in polynomial time (both in the input length and $\frac{1}{\varepsilon}$).
But there is some ambiguity in the literature what concerns the term $f(1/\varepsilon)$.
According to \cite{KorteVygen02} and \cite{HandbookOfApproxAlgosAndMetaheuristics-ChapterOnAPTAS-OCallaghan-Motwani-Zhu-2007},
$f$ can be any function (equivalent to the requirement $APX \leq OPT + o(OPT)$), while Johnson~\cite{NPcompletenessColumn-Johnson85} 
requires $f$ to be bounded by a polynomial (which is equivalent to $APX \leq OPT + O(OPT^{1-\delta})$ for a fixed $\delta>0$). 
However, we will now obtain a polynomial time algorithm for {\sc Bin Packing With Rejection} with $APX \leq OPT + O(\log^2 OPT)$, which satisfies also
the stronger definition of Johnson~\cite{NPcompletenessColumn-Johnson85} and matches the bound for the special case of {\sc Bin Packing} (without rejection).




We define a set system $\setS = \bm{B} \cup \bm{R}$ with potential bin patterns $\bm{B} = \{ S \subseteq[n] \mid \sum_{i\in S} s_i \leq 1\}$ (each set $S \in \bm{B}$ has cost $c_S := 1$) and rejections $\bm{R} = \{ \{ i \} \mid i\in[n] \}$ at cost $c_{\{i\}}:=\pi_i$
for $i\in[n]$. Then a natural column-based LP is
 \begin{equation}  \label{eq:GeneralLP}
OPT_f = \min\Big\{ c^Tx \mid \sum_{S\in\setS} x_S \mathbf{1}_{S} = \mathbf{1}, x \geq \mathbf{0} \Big\}
\end{equation}
where $\mathbf{1}_S \in \{ 0,1\}^n$ denotes the characteristic vector of $S$.
In \cite{SetCoverWithReplacement-EisenbrandKakimuraRothvossSanita-IPCO2011},
the Karmarkar-Karp  technique~\cite{KarmarkarKarp82} was modified 
to obtain a $O(\sqrt{n}\cdot \log^{3/2} n)$ bound on the additive integrality gap
of~\eqref{eq:GeneralLP}. Note that due to the dependence on $n$, 
such a bound does not satisfy the definition of an AFPTAS, and hence is incomparable to the result of \cite{BinPackingWithRejection-AFPTAS-EpsteinLevin09}. 
But since $OPT_f \leq n$, our result improves over both bounds~\cite{BinPackingWithRejection-AFPTAS-EpsteinLevin09,SetCoverWithReplacement-EisenbrandKakimuraRothvossSanita-IPCO2011}. 

Despite the exponential number of variables in LP \eqref{eq:GeneralLP}, 
one can compute a basic solution $x$ with $c^Tx \leq OPT_f + \delta$ in time polynomial in $n$ and
 $1/\delta$~\cite{KarmarkarKarp82} using either the Gr{ö}tschel-Lovász-Schrijver
 variant of the Ellipsoid method~\cite{GLS-algorithm-Journal81} or the
Plotkin-Shmoys-Tardos framework for covering and packing problems~\cite{FractionalPackingAndCovering-PlotkinShmoysTardos-Journal95}. 
Since this fact is rather standard, we postpone details to Appendix~\ref{Appendix:SolvingLPrelaxation}.

In the following we always assume that the items are sorted w.r.t. their sizes 
such that $s_1 \geq \ldots \geq s_n$ and $\pi_i \leq 1$ for all $i=1,\ldots,n$. 
A feasible solution $y \in \{ 0,1\}^{\setS}$ will reserve at least one slot
for every item, i.e. $i$ many slots for items $1,\ldots, i$. The quantity
$i - \sum_{S \in \setS} y_S |S \cap \{1,\ldots,i\}|$, if positive, is called the \emph{deficit} of $\{1,\ldots,i\}$.
It is not difficult to see that if there is no deficit for any of the 
sets $\{1,\ldots,i\}$, then every item can be assigned to a slot -- potentially
of a larger item\footnote{Proof sketch: Assign input items $i$ iteratively in
increasing order (starting with the largest one, i.e. $i=1$) to the smallest available slot. 
If there is none left for item $i$, then there are less then $i$ slots for items $1,\ldots,i$,
thus this interval had a deficit.} (while in case that $y_{S} = 1$ for $S \in \bm{R}$, the slot
for $S = \{ i\}$ would only be used for that particular item).

We term the constraint matrix $P$ of the system \eqref{eq:GeneralLP} the \emph{pattern matrix}. 
Note that some columns of $P$ correspond to bins, others correspond to rejections.
The obvious idea would be to apply our rounding theorem to $P$, but this would 
not yield any reasonable bound. Instead, we define another matrix $A$ of the same format as $P$, 
where $A_i := \sum_{i'=1}^i P_{i'}$ or equivalently, the entries are defined as $A_{iS} = |S \cap \{ 1,\ldots,i\}|$.
The intuition behind this is that if $x$ is a feasible fractional solution then
$Ay - Ax \geq \mathbf{0}$ iff $y$ does not have any deficit.
Indeed, we will apply Theorem~\ref{thm:GeneralRoundingTheorem} to this \emph{cumulated pattern matrix} $A$. As a prerequisite, we need a strong upper bound on the approximate entropy of any submatrix.

\begin{lemma} \label{lem:LazyEntropyBoundCumMatrix}
Let $A \in \setZ_{\geq 0}^{n × m}$ be any matrix in which column $j$ has
non-decreasing entries from $\{0,\ldots,b_j\}$;
let $\sigma = \sum_{j=1}^m b_j$ be the sum over the largest entries in
each column and $\beta := \max_{j=1,\ldots,m} b_j$ be the maximum of those entries. For $\Delta > 0$ one has
\[
  H_{\Delta}(A) \leq O\left(\frac{\sigma \beta}{\Delta^2}\right). 
\]
\end{lemma}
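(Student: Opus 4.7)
I would follow the random-walk perspective suggested in the introduction: the action of $A$ on $\chi$ traces a bounded-step walk whose variance grows at ``effective rate'' $\beta$ per unit row, so approximating such a walk at resolution $\Delta$ over a total ``variance budget'' $\sigma\beta$ should cost roughly $\sigma\beta/\Delta^2$ bits, by analogy with Brownian motion.

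The first step is to \emph{refine} $A$ into a matrix $A' \in \setZ_{\geq 0}^{\sigma \times m}$ whose consecutive rows differ by exactly $+1$ in a single entry; this is possible because each column of $A$ is a non-decreasing integer sequence in $\{0,\ldots,b_j\}$ and the total number of unit increments to perform is $\sum_j b_j = \sigma$. Since the rows of $A$ form a subset of those of $A'$ and any $\Delta$-approximation of $A'\chi$ restricts to one of $A\chi$, one has $H_\Delta(A) \leq H_\Delta(A')$. Writing $W_t := A'_t\chi$, the walk has unit steps $\chi(j_t) \in \{\pm 1\}$, and the key variance bound $\mathrm{Var}(W_t) = \sum_j (A'_{tj})^2 \leq \beta \sum_j A'_{tj} = \beta t$ follows because $A'_{tj} \leq \beta$ for every $j$ while the row-sum of $A'_t$ equals $t$.

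Then I would partition $\{1,\ldots,\sigma\}$ into $M := \lceil\sigma\beta/\Delta^2\rceil$ blocks of length $L := \lfloor \Delta^2/\beta \rfloor$ (if $L < 1$ the trivial bound $H_\Delta \leq m \leq \sigma \leq \sigma\beta/\Delta^2$ already suffices). For block $k$ the increment vector $c^{(k)} := A'_{t_k} - A'_{t_{k-1}}$ satisfies $\sum_j c_j^{(k)} = L$ and $c_j^{(k)} \leq \beta$, hence $\|c^{(k)}\|_2^2 \leq \beta L = O(\Delta^2)$. Applying Lemma~\ref{lem:EntropyBound} to the block sum $D_k := c^{(k)}\chi$ gives $H(\lceil D_k/(2\Delta)\rfloor) = O(1)$, and by subadditivity the joint entropy of all $M$ rounded block sums is $O(M) = O(\sigma\beta/\Delta^2)$. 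The approximation $f$ itself would be built in two layers: across blocks, $f$ is updated at each boundary by $2\Delta\lceil D_k/(2\Delta)\rfloor$ (contributing the $O(M)$ just obtained); within each block, $f$ tracks the sub-walk at resolution $\Delta$ via a second application of Lemma~\ref{lem:EntropyBound} to the intra-block partial-sum vectors (whose $\ell_2$ norms are also bounded by $\sqrt{\beta L} = O(\Delta)$), adding another $O(1)$ per block.

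The step I expect to require the most care is verifying that this two-layer construction uniformly satisfies $|A_i\chi - f_i(\chi)| \leq \Delta$. The delicate regime is $\Delta > \beta$: then $L = \Delta^2/\beta > \Delta$, so the walk can deterministically drift by more than $\Delta$ inside a single block and the coarse layer alone cannot control the error. The fine layer must genuinely capture the sub-walk's fluctuations (not just its endpoint value), and the two layers have to be coordinated so that rounding errors do not accumulate across blocks while within-block deviations stay absorbed by the fine layer. Putting everything together yields $H_\Delta(A) \leq H_\Delta(A') = O(\sigma\beta/\Delta^2)$, as claimed.
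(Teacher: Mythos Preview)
Your refinement to a matrix $A'$ with $\sigma$ rows and unit row-increments is correct and matches the paper, as does the variance bound $\|c^{(k)}\|_2^2 \leq \beta L$ (via $\|c\|_2^2 \leq \|c\|_\infty \|c\|_1$). The gap is in the approximation scheme itself.

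\textbf{Across blocks.} You define the coarse layer as $f_{t_k} = \sum_{j\leq k} 2\Delta\lceil D_j/(2\Delta)\rfloor$. But then $|W_{t_k} - f_{t_k}| = |\sum_{j\leq k}(D_j - 2\Delta\lceil D_j/(2\Delta)\rfloor)|$, a sum of $k$ terms each only bounded by $\Delta$, so the error can be $\Theta(k\Delta)$ rather than $\Delta$. You flag this yourself (``rounding errors do not accumulate across blocks'') but do not supply a mechanism that prevents it; the fine layer as described acts only within a block and cannot absorb an error that has already drifted to $M\Delta$ at the block boundary.

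\textbf{Within blocks.} You claim that a second application of Lemma~\ref{lem:EntropyBound} to the intra-block partial-sum vectors contributes $O(1)$ entropy per block. But a block has $L = \lfloor\Delta^2/\beta\rfloor$ many rows, and Lemma~\ref{lem:EntropyBound} applied to each partial-sum vector (each with $\lambda \geq 1$) gives $O(1)$ entropy \emph{per row}, hence $O(L)$ per block and $O(\sigma)$ in total --- far worse than $O(\sigma\beta/\Delta^2)$. To get $O(1)$ per block you would need an encoding of the entire intra-block trajectory, not pointwise rounding; and the steps of $W_t$ are correlated (the same column $j$ contributes up to $b_j$ identical increments $\chi(j)$), so the independent-step intuition does not directly apply either.

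The missing idea is precisely what the paper does: replace the single-scale partition by a \emph{dyadic laminar family} $\bm{D}$ of intervals of all lengths $2^k$. Any prefix $\{1,\ldots,i\}$ decomposes into at most one dyadic interval of each size, and one assigns to an interval $D$ with $|D| = 2^z\cdot\Delta^2/\beta$ the budget $\Delta_D = \Theta(\Delta/1.1^{|z|})$. The errors then form a geometric series summing to $\leq \Delta$, eliminating the accumulation problem, while the entropy sum $\sum_{D\in\bm D} G(\Delta_D/\|C_D\|_2)$ converges to $O(\sigma\beta/\Delta^2)$ because the number of intervals at scale $z$ is $\sigma\beta/(2^z\Delta^2)$. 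Your block partition is essentially the single level $z=0$ of this family; the multi-scale structure is what makes both the error bound and the entropy bound close simultaneously.
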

\begin{proof}
We can add rows (and delete identical rows) such that consecutive rows differ in exactly one entry. This
can never lower the approximate entropy. Now we have exactly $n=\sigma$ many rows. 
There is no harm in assuming\footnote{In fact, rounding $\sigma,\beta,\Delta$ to the nearest 
power of 2 only affect the constant hidden in the $O(1)$-notation.} that $\sigma,\Delta$ and $\beta$ 
 are powers of $2$. Let $B \in \setR^{\sigma × m}$ be the matrix with $B_i = A_i - A_{i-1}$ (and $B_1=A_1$). In other words,
$B$ is a $0/1$ matrix with exactly a single one per row.
 
Consider the balanced binary laminar dissection $\setD := \big\{ \{ 2^k(i-1)+1,\ldots,2^ki \} \mid k=0,\ldots,\log \sigma; i=1,\ldots,\frac{\sigma}{2^k} \big\}$ 
of the row indices $\{ 1,\ldots,\sigma\}$.
In other words, $\setD$ contains $2^{k}$ many intervals of length $\sigma/ 2^k$ for $k = 0,\ldots,\log_2 \sigma$.
For every of those interval $D \in \setD$ we define the vector $C_{D} := \sum_{i\in D} B_i \in \setZ^m$ and parameter
$\Delta_D := \frac{\Delta}{32\cdot 1.1^{|z|}}$ if $|D| = 2^z\cdot \frac{\Delta^2}{\beta}$ (with $z \in \setZ$). Note that since $B$ contains a single one per row and at 
most $\beta$ ones per column, thus
$\|C_D\|_{2} \leq (\|C_D\|_{\infty} \cdot \|C_D\|_{1})^{1/2} \leq \sqrt{\beta \cdot |D|} = 2^{z/2}\cdot \Delta$.\footnote{Here we use Hölder's inequality: $\|x\|_2 \leq (\|x\|_{\infty} \cdot \|x\|_{1})^{1/2}$ for every $x \in \setR^m$.}

For every $i$, note that $\{1,\ldots,i\}$ can be written as a disjoint union 
of some intervals $D_1,\ldots,D_{q} \in \setD$, all of different size. 
We choose $f_i(\chi) := \sum_{p=1}^q 2\Delta_{D_p}\cdot \left\lceil \frac{C_{D_p}\chi}{2\Delta_{D_p}} \right\rfloor$. Then 
\[
  \left| A_i\chi - f_i(\chi) \right| \leq \sum_{p=1}^q \Delta_{D_p} \leq \sum_{z \in \setZ} \frac{\Delta}{32\cdot 1.1^{|z|}} \leq \Delta.
\]
Thus $H_{\Delta}(A)$ is upper bounded by the entropy of the random 
variables $f_1(\chi),\ldots,f_n(\chi)$.
But since each $f_i$ is a function of $\left\{\left\lceil C_D\chi/(2\Delta_D) \right\rfloor\right\}_{D \in \setD}$, it in fact suffices to
bound the entropy of the latter random variables.
We remember that for all $z \in \setZ$, one has at most $\frac{\sigma \beta}{2^z \Delta^2}$ many intervals $D \in \setD$ of 
size $|D| = \frac{2^z \Delta^2}{\beta}$, with  $\|C_D\|_{2} \leq  2^{z/2}\cdot \Delta$ and $\Delta_D = \frac{\Delta}{32\cdot 1.1^{|z|}} \geq \frac{\Delta}{32\cdot 1.1^{|z|}\cdot \Delta 2^{z/2}}\|C_D\|_2 \geq \frac{1}{32\cdot 1.3^{z}} \|C_D\|_2$. 
Finally
\begin{eqnarray*}
\mathop{H}_{\chi \in \{ ± 1\}^{m} }\left(\left\{ \left\lceil \frac{C_D\chi}{2\Delta_D} \right\rfloor\right\}_{D \in \setD} \right) 
&\stackrel{\substack{H \textrm{ subadd.} \\ \textrm{\& Lem.~\ref{lem:EntropyBound}}}}{\leq}& \sum_{z\in \setZ} \frac{\sigma \beta}{2^z \Delta^2} \cdot G\left(\frac{1}{32\cdot 1.3^{z}}\right) \nonumber \\
&\stackrel{\textrm{Def. }G}{=}& \frac{\sigma \beta}{\Delta^2}\bigg[ \underbrace{\sum_{-\infty < z< -15} \left(\frac{1}{2}\right)^z 9e^{-(1.3^{-z}/32)^2/5}}_{= O(1)} + \underbrace{\sum_{z \geq -15} \left(\frac{1}{2}\right)^z \cdot \log_2 \left(32 + 64\cdot 32\cdot 1.3^{z}\right)}_{= O(1)}\bigg] \nonumber \\
&=& O\left(\frac{\sigma \beta}{\Delta^2} \right) \nonumber 
\end{eqnarray*}
\qedhere
\end{proof}
The parametrization used in the above proof is inspired by the work of
Spencer, Srinivasan and Tetali~\cite{DiscrepancyOfPermutations-SpencerEtAl}.
A simple consequence of the previous lemma is the following.
\begin{lemma} \label{lem:ApproxEntropyForBinPackingMatrix}
Let $S_1,\ldots,S_m \subseteq [n]$ be a set system with numbers $1 \geq s_1 \geq \ldots \geq s_n > 0$
such that $\sum_{i\in S_j} s_i \leq 1$ for any set $S_j$.
Let $A \in \setZ_{\geq 0}^{n × m}$ be the cumulated pattern matrix, defined by $A_{ij} = |S_j \cap \{ 1,\ldots,i\}|$. 
Then there is a constant $C > 0$ such that for $\Delta := \frac{C}{s_i}$, one has $H_{\Delta}(A) \leq \frac{1}{10} \sum_{j=1}^m \sum_{i \in S_j} s_i \leq \frac{m}{10}$.
\end{lemma}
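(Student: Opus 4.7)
The plan is to reduce the claim to Lemma~\ref{lem:LazyEntropyBoundCumMatrix} via a dyadic decomposition of the items by size, exploiting the fact that the target precision $\Delta_i = C/s_i$ grows geometrically across size classes. First, I would partition the row indices into dyadic size classes $G_k := \{ i \in [n] : 2^{-k-1} < s_i \leq 2^{-k}\}$ for $k = 0,1,2,\ldots$ and split the cumulated matrix as $A = \sum_k \hat{A}^{(k)}$, where $\hat{A}^{(k)}_{ij} := |S_j \cap G_k \cap \{1,\ldots,i\}|$. Each $\hat{A}^{(k)}$ is itself a cumulated matrix of the kind handled by Lemma~\ref{lem:LazyEntropyBoundCumMatrix}. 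Since every $i \in S_j \cap G_k$ satisfies $s_i > 2^{-k-1}$ and $\sum_{i \in S_j} s_i \leq 1$, the column-max $\beta^{(k)} \leq 2^{k+1}$ and the column-sum $\sigma^{(k)} \leq 2^{k+1} W_k$, where $W_k := \sum_j \sum_{i \in S_j \cap G_k} s_i$, so that $\sum_k W_k = \sum_j \sum_{i \in S_j} s_i \leq m$.

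Next, I would apply Lemma~\ref{lem:LazyEntropyBoundCumMatrix} separately to each $\hat{A}^{(k)}$ with the geometrically scaled precision $\Delta^{(k)} := (C/2)\cdot 2^k$. This yields approximators $f^{(k)}_1,\ldots,f^{(k)}_n$ with $|\hat{A}^{(k)}_i\chi - f^{(k)}_i(\chi)| \leq \Delta^{(k)}$ and joint entropy
\[
H\bigl(f^{(k)}(\chi)\bigr) \;\leq\; O\!\left(\frac{\sigma^{(k)} \beta^{(k)}}{(\Delta^{(k)})^2}\right) \;=\; O(W_k / C^2).
\]
I then set $f_i(\chi) := \sum_k f^{(k)}_i(\chi)$. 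For $i \in G_k$, observe two simplifications that do not increase the joint entropy of the family $(f^{(k)}_i)_{k,i}$: whenever $k' > k$ the row $\hat{A}^{(k')}_i$ is identically zero, so $f^{(k')}_i$ may be replaced by the constant $0$; whenever $k' < k$, the row $\hat{A}^{(k')}_i$ equals the fixed value $\hat{A}^{(k')}_{i_{\max}(k')}$, so a single common function $f^{(k')}_{\ast}$ can be reused for all such $i$. By the triangle inequality,
\[
|A_i\chi - f_i(\chi)| \;\leq\; \sum_{k' \leq k} \Delta^{(k')} \;\leq\; C\cdot 2^k \;\leq\; C/s_i \;=\; \Delta_i.
\]

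Finally, by subadditivity of entropy, $H_\Delta(A) \leq \sum_k H(f^{(k)}(\chi)) = O(W/C^2)$ with $W = \sum_k W_k \leq m$. Choosing the absolute constant $C$ large enough to absorb the hidden constant gives $H_\Delta(A) \leq W/10 \leq m/10$. The one point that requires care is the alignment of scales: $\Delta^{(k)}$ must grow like $2^k$ quickly enough so that the telescoping sum $\sum_{k' \leq k}\Delta^{(k')}$ still fits inside the target $\Delta_i$ of the smallest size class represented in row $i$, and simultaneously slowly enough that the per-class entropy contribution $\sigma^{(k)}\beta^{(k)}/(\Delta^{(k)})^2$ equals $O(W_k/C^2)$ so the $k$-sum collapses to $O(W/C^2)$. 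Everything else is just Lemma~\ref{lem:LazyEntropyBoundCumMatrix} applied class by class and the subadditivity of (approximate) entropy.
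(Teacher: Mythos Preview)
Your proof is correct and follows essentially the same route as the paper: a dyadic decomposition of the items by size, an application of Lemma~\ref{lem:LazyEntropyBoundCumMatrix} to each size class with precision $\Theta(C\cdot 2^k)$, and subadditivity of the approximate entropy. The only difference is in how the matrix is sliced: the paper applies Lemma~\ref{lem:LazyEntropyBoundCumMatrix} to the \emph{nested row submatrices} $A^{\ell}=(A_i)_{i:\,s_i\ge 2^{-\ell}}$ (with the original entries of $A$) and picks, for each row $i$, the approximator coming from the smallest such $A^{\ell}$ that contains it; you instead write $A=\sum_k \hat A^{(k)}$ as an \emph{additive} decomposition over disjoint size classes $G_k$ and sum the per-class approximators via the triangle inequality. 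In both versions the geometric scaling of the precisions (to fit $\sum_{k'\le k}\Delta^{(k')}\le C/s_i$) and of the per-class entropy contributions (to make $\sum_k O(W_k/C^2)=O(W/C^2)$) is exactly the ``one point that requires care'' you identify, and both handle it the same way.
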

\begin{proof}
Let $A^{\ell}$ be the submatrix of $A$ consisting of all rows $i$ such that $s_i \geq (1/2)^{\ell}$.
Note that row $A_i$ possibly appears in several $A^{\ell}$.
We apply Lemma~\ref{lem:LazyEntropyBoundCumMatrix} to $A^{\ell}$ with $\beta := \|A^{\ell}\|_{\infty} \leq 2^{\ell}$ and 
$\sigma := \sum_{j=1}^m |S_j \cap \{ i : s_i \geq (1/2)^{\ell}\}|$ to obtain
  $H_{C\cdot 2^{\ell}}(A^{\ell}) \leq \frac{1}{20} \sum_{j=1}^m (1/2)^{\ell}\cdot |S_j \cap \{i : s_i \geq (1/2)^{\ell}\}|$ for $C$ large enough. Eventually
\[
 H_{\Delta}(A) \stackrel{\textrm{subadd.}}{\leq} \sum_{\ell \geq 0} H_{C\cdot 2^{\ell}}(A^{\ell}) \leq \frac{1}{20} \sum_{\ell \geq 0} \sum_{j=1}^m (1/2)^{\ell}\cdot |S_j \cap \{ i : s_i \geq (1/2)^{\ell}\} |  \leq \frac{1}{10} \sum_{j=1}^m \sum_{i \in S_j} s_i  
\]
since item $i \in S_j$ with $(1/2)^{\ell} \leq s_i < (1/2)^{\ell+1}$ contributes at most
$\frac{1}{20} \sum_{\ell' \geq \ell} (1/2)^{\ell'} \leq \frac{1}{10}s_i$ to the left hand side.
\end{proof}

Our procedure to round a fractional ${\textsc{Bin Packing With Rejection}}$ 
solution $x$ will work as follows:
For a suitable value of 
$\varepsilon > 0$, we term all items of size at least $\varepsilon$
\emph{large} and \emph{small} otherwise. We take the cumulated pattern matrix 
$A$ restricted to the large items.  
Furthermore we 
define a matrix $B$ such that $Bx$ denotes the space reserved for small items. Then
we apply Theorem~\ref{thm:GeneralRoundingTheorem} to obtain an integral vector $y$ 
which is then repaired to a feasible solution without significantly increasing
the cost of the solution. 

\begin{theorem} \label{thm:BPRapproximation}
There is a randomized algorithm for {\sc Bin Packing With Rejection} with expected polynomial running time
which produces a solution of cost $OPT_f + O(\log^2 OPT_f)$.
\end{theorem}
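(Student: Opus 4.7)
The plan is to round a near-optimal fractional solution to \eqref{eq:GeneralLP} by invoking Theorem~\ref{thm:GeneralRoundingTheorem} on the cumulated pattern matrix $A$ introduced before Lemma~\ref{lem:LazyEntropyBoundCumMatrix}, and then to repair the resulting integer vector using $O(\log^2 OPT_f)$ additional bins via a Karmarkar--Karp style layered argument. I first compute a basic fractional solution $x$ to \eqref{eq:GeneralLP} of cost $c^Tx \leq OPT_f + \delta$ for some negligible $\delta$, as guaranteed by the machinery discussed in Appendix~\ref{Appendix:SolvingLPrelaxation}. A short preprocessing step (rejecting items that are obviously cheaper to discard than to pack, and grouping clones of very tiny items) brings the number of relevant items down to $n = \mathrm{poly}(OPT_f)$, so that $\log n = O(\log OPT_f)$ throughout. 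I split items into \emph{large} ones, with $s_i \geq \varepsilon$, and \emph{small} ones, for a parameter $\varepsilon = 1/\mathrm{poly}(\log OPT_f)$ that will be tuned in the repair analysis.

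Next, I apply Theorem~\ref{thm:GeneralRoundingTheorem} with the following three ingredients. Matrix $A$ is the cumulated pattern matrix $A_{iS} = |S \cap \{1,\ldots,i\}|$ restricted to the rows of large items, together with the choice $\Delta_i := C/s_i$ from Lemma~\ref{lem:ApproxEntropyForBinPackingMatrix}; the lemma certifies the hypothesis $H_\Delta(A^J) \leq |J|/10$ on every submatrix. Matrix $B$ consists of a single row that records, for each pattern $S$, the total small-item size $\sum_{i \in S,\, i \text{ small}} s_i$, with weight $\mu_1 = 1$. The cost $c$ is the original LP cost. The theorem then produces an integer $y \in \{0,1\}^{\setS}$ with preserved expectations, $|c^Ty - c^Tx| = O(1)$, cumulated discrepancy $|A_iy - A_ix| \leq O(\log n / s_i)$ on every large item row $i$, and small-item reservation error $|By - Bx| = O(1)$, plus the sub-Gaussian tail bound on $c^Ty$.

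Third, I repair $y$ into a feasible packing. For large items, the cumulated discrepancies translate row-by-row into \emph{deficits} $d_i$, with $d_i \leq O(\log n /s_i)$. Grouping large items into $O(\log(1/\varepsilon))$ geometric size classes, the total deficit in each class can be absorbed by opening $O(\log n)$ extra bins (since a bin of size $1$ holds $\Omega(1/s_i)$ items of the class), giving $O(\log n \cdot \log(1/\varepsilon)) = O(\log^2 OPT_f)$ extra bins in total. For small items, $|By - Bx| = O(1)$ means the reserved capacity is almost exactly correct, and a greedy First-Fit pass of the leftover small items into the opened bins overflows into at most $O(1)$ additional bins because each small item has size $< \varepsilon$. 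Combining $c^Ty \leq c^Tx + O(1) \leq OPT_f + O(1)$ (which holds with constant probability by the tail bound, and can be certified by $O(\log n)$ independent trials) with the $O(\log^2 OPT_f)$ repair bins yields the claimed $OPT_f + O(\log^2 OPT_f)$ solution in expected polynomial time.

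The main obstacle will be calibrating $\varepsilon$ and the layered repair so that the number of repair bins is actually $O(\log^2 OPT_f)$ rather than $O(\log^2 n)$ — in particular, making sure the preprocessing really reduces $n$ to $\mathrm{poly}(OPT_f)$ in the presence of rejection costs $\pi_i$, and that the small-item overflow is swallowed by the same $O(\log^2 OPT_f)$ budget. A secondary subtlety is that $y$ returned by Theorem~\ref{thm:GeneralRoundingTheorem} is $0/1$-valued per pattern, so patterns used multiple times by $x$ need to be duplicated into distinct columns before rounding; this is routine but must be done without blowing up $m$ beyond $\mathrm{poly}(OPT_f)$.
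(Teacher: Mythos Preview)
Your high-level plan coincides with the paper's: apply Theorem~\ref{thm:GeneralRoundingTheorem} to the cumulated pattern matrix on the large items (via Lemma~\ref{lem:ApproxEntropyForBinPackingMatrix}), a single $B$-row recording aggregate small-item space with weight $\mu_1=1$, and the cost row $c$; then repair large items through geometric size classes and small items by greedy packing. The two places you explicitly leave ``to be tuned'' are, however, exactly where the argument is decided, and as stated they do not close.

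\emph{Preprocessing.} You need $\log L = O(\log OPT_f)$ so that the large-item deficit bound $O(\tfrac{1}{s_i}\log L)$ is usable, but ``reject items obviously cheaper to discard than to pack'' does not deliver this. Take $n$ items of size $\tfrac12$ with $\pi_i = 1/n$: then $OPT_f \leq 1$, every item is large, none is rejectable by any cost-vs-size criterion without losing a constant factor, and $L=n$ is unbounded in $OPT_f$. The paper's device is \emph{LP-guided}: after solving~\eqref{eq:GeneralLP}, fully reject every item $i$ with $x_{\{i\}} > 1-\varepsilon$, paying only a multiplicative $(1-\varepsilon)^{-1}$. Afterwards each surviving item is covered by bin patterns to extent $\geq \varepsilon$, which forces $OPT_f \geq \sum_{S\in\bm{B}} x_S \geq \sum_{i\leq L} s_i \sum_{S\in\bm{B}:\,i\in S} x_S \geq \varepsilon^2 L$, hence $L \leq OPT_f/\varepsilon^2$.

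\emph{The threshold $\varepsilon$ and the small-item overflow.} Your claim that First-Fit overflows into $O(1)$ extra bins is incorrect when $B$ is a single aggregate row. Each of the up to $c^Ty \approx OPT_f$ opened bins can strand up to $\varepsilon$ of its reserved small-item capacity when packing items of size $<\varepsilon$, so the total overflow is $O(1) + \varepsilon\cdot OPT_f$, not $O(1)$. With your choice $\varepsilon = 1/\mathrm{poly}(\log OPT_f)$ this is $OPT_f/\mathrm{poly}(\log OPT_f)$, which swamps the $O(\log^2 OPT_f)$ budget. The paper takes $\varepsilon = \tfrac{\log OPT_f}{OPT_f}$, which is the unique scale that makes everything balance: the LP-guided preprocessing then costs $O(\varepsilon\,OPT_f)=O(\log OPT_f)$; one gets $L \leq OPT_f^3$ so $\log L = O(\log OPT_f)$; there are $\log(1/\varepsilon)=O(\log OPT_f)$ size classes, giving $O(\log L\cdot\log(1/\varepsilon))=O(\log^2 OPT_f)$ bins for large-item repair; and the small-item fragmentation is $\varepsilon\,OPT_f = O(\log OPT_f)$.
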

\begin{proof}
Compute a  fractional solution $x \in [0,1]^{\setS}$ to the {\sc Bin Packing With Rejection}
LP~\eqref{eq:GeneralLP} with cost $c^Tx \leq OPT_f + 1$ (see Appendix~\ref{Appendix:SolvingLPrelaxation} for details).
We define $\varepsilon := \frac{\log(OPT_f)}{OPT_f}$ 
(assume $OPT_f \geq 2$). For any item $i$ that is rejected in $x$ to a fractional extend of more than
$1-\varepsilon$ (i.e.  $x_{\{i\}} > 1-\varepsilon$), we fully reject item $i$. We
account a multiplicative loss of $1/(1-\varepsilon) \leq 1+2\varepsilon$ (i.e. an additive cost increase of $O(\log OPT_f)$).
From now on, we may assume that $x_S \leq 1-\varepsilon$ for all sets $S \in \bm{R}$.
Let $1,\ldots,L$ be the items of size at least $\varepsilon$, hence $OPT_f \geq \varepsilon^2L$, since every item is covered 
with bin patterns at an extend of at least $\varepsilon$. 
Let $A \in \setZ^{L × \setS}$ with
\[
  A_{iS} := |\{ 1,\ldots,i\} \cap S|
\]
be the cumulated pattern matrix restricted to the large items.
According to Lemma~\ref{lem:ApproxEntropyForBinPackingMatrix}, for a choice of $\Delta_i = \Theta(1/s_i)$,
one has $H_{\Delta}(A') \leq \frac{\# \textrm{col}(A')}{10}$ for every submatrix $A' \subseteq A$.
Choose $B \in [0,1]^{1 × \setS}$ as the row vector 
where $B_{1,S} = \sum_{i \in S: i > L} s_i$ for $S \in \setS$ denotes the space in pattern $S$ that
is reserved for small items. 
%

We apply Theorem~\ref{thm:GeneralRoundingTheoremConstructiveVersion} (Theorem~\ref{thm:GeneralRoundingTheorem} suffices for a non-constructive bound on the integrality gap)
to matrices $A$, $B$ and cost function $c$ 
with $\mu_1 = 1$ to obtain a vector $y \in \{ 0,1 \}^{\setS}$ with the following properties
\begin{itemize}
\item[(A)] The deficit of any interval $\{1,\ldots,i\}$ of large items (i.e. $i\in \{ 1,\ldots,L\}$) is bounded by $O(\frac{1}{s_i}\log L)$. 
\item[(B)] The space for small items reserved by $y$ equals that of $x$ up 
to an additive constant term (formally $|\sum_{S \in \setS} (x_S - y_S) \cdot B_{1,S}| = O(1)$).
\item[(C)] $c^Ty \leq c^Tx + O(1)$
\end{itemize}
For $\ell \geq 0$, we say that the items $G^{\ell} := \{ i \leq L \mid (\frac{1}{2})^{\ell} \geq i > (\frac{1}{2})^{\ell+1}\}$ form \emph{group $\ell$}. Note that at most $\frac{1}{\varepsilon}+1$ groups contain large items.
We eliminate the deficits for large items by packing $O(\log L)$
extra bins with the largest item from every group, hence leading to 
$O(\log L)\cdot O(\log \frac{1}{\varepsilon}) = O(\log^2 OPT_f)$ extra bins. 
%
Property $(B)$ implies that after buying $O(1)$ extra bins for small items, it is possible to assign all small items \emph{fractionally}
to the bought bins (i.e. the small items could be feasibly assigned if it
would be allowed to split them). By a standard argument (see e.g.~\cite{BinPackingWithRejection-AFPTAS-EpsteinLevin09}) this fractional 
assignment can be turned into an integral assignment, if we
 discard at most one item per pattern, i.e. we pack discarded small items of total size at most $\varepsilon\cdot(c^Ty + O(1))$
separately, which can be done with $O(\varepsilon)\cdot OPT_f + O(1) \leq O(\log OPT_f)$ extra bins. 
The claim follows. 
\end{proof}

\section{Application: The Train Delivery Problem\label{sec:TrainDelivery}}

For the {\sc Train Delivery} problem, $n$ items are given as input, but now
every item $i\in\{1,\ldots,n\}$ has a \emph{size} $s_i$ and a \emph{position} $p_i \in [0,1]$. 
The goal is to transport the items to a depot, located at $0$, using
trains of capacity $1$ and minimizing the total tour length\footnote{For definiteness, say the tour must start and
end at the depot and once items are loaded into the train, they have to remain until the depot
is reached.}. 
In other words, it is a combination of one-dimensional vehicle routing and {\sc Bin Packing}.
We define $\setS$ as a set system consisting of all sets $S \subseteq [n]$ 
with $\sum_{i\in S} s_i \leq 1$ and cost $c_S := \max_{i\in S} p_i$ for set $S$. Then the optimum value for 
{\sc Train Delivery} equals the cost of the cheapest subset of $\setS$
covering all items (ignoring a constant factor of two). 

The problem was studied in~\cite{TrainDelivery-DasMathieuMozes-WAOA2010}, where the authors
provide an APTAS. We will now obtain an AFPTAS. First, we make our life easier by 
using a result of~\cite{TrainDelivery-DasMathieuMozes-WAOA2010} saying that modulo a $1+O(\varepsilon)$ factor in the approximation guarantee it suffices to solve
\emph{well-rounded} instances which have the property that
 $\varepsilon \leq p_i \leq 1$ and $p_i \in (1+\varepsilon)^{\setZ}$ for all $i=1,\ldots,n$.
The first condition can be obtained by splitting all tours in a proper way; the second condition is
obtained by simply rounding all positions up to the nearest power of $1+\varepsilon$.
Hence, we can partition the items according to their position by letting $P_j := \{ i \mid p_i = (1+\varepsilon)^j \}$ 
for $j=0,\ldots,t-1$ with $t = O(\frac{1}{\varepsilon} \log \frac{1}{\varepsilon})$.

Our rounding procedure works as follows: analogously to {\sc Bin packing With Rejection}, we construct 
matrices $A(j),B(j)$ separately for the items at each position $j$. Then we stack them together;
apply Theorem~\ref{thm:GeneralRoundingTheorem}, and repair the obtained integral vector to a feasible solution
(again analogous to the previous section). 
Here we will spend a higher weight $\mu_j$
for positions $j$ which are further away from the depot -- since those are costlier to cover.
\begin{theorem}
There is a randomized algorithm with expected polynomial running time 
for {\sc Train Delivery}, providing solutions 
of expected cost $E[APX] \leq OPT_f + O(OPT_f^{3/5})$.
\end{theorem}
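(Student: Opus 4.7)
The plan is to mimic the Bin Packing With Rejection proof but split all bookkeeping by position, using position-dependent weights that trade the per-position entropy budget (for large items) and the per-position rounding slack (for small items) against the geometrically growing tour cost $c_j = (1+\varepsilon)^j$, then optimize $\varepsilon$.

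First, invoke the reduction to well-rounded instances from~\cite{TrainDelivery-DasMathieuMozes-WAOA2010} at a multiplicative loss of $(1+O(\varepsilon))$, giving $p_i \in [\varepsilon,1] \cap (1+\varepsilon)^{\setZ}$ and a partition $P_0,\ldots,P_{t-1}$ with $t = O((1/\varepsilon)\log(1/\varepsilon))$. Solve the column-based LP analogous to~\eqref{eq:GeneralLP} to obtain a fractional $x$ with $c^T x \leq OPT_f+1$ (as in Appendix~\ref{Appendix:SolvingLPrelaxation}). Call item $i$ \emph{large} if $s_i \geq \varepsilon$. For each position $j$, let $A(j) \in \setZ_{\geq 0}^{L_j \times \setS}$ be the cumulated pattern matrix restricted to the large items of $P_j$, sorted by decreasing size, so that $A(j)_{i,S} = |S \cap \{\text{top-}i\text{ largest items of }P_j\}|$. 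Stack the $A(j)$'s into one matrix $A$. Define $B \in [0,1]^{t \times \setS}$ with one row per position measuring pattern-reserved small-item capacity there: $B_{j,S} := \sum_{i \in S \cap P_j,\, s_i < \varepsilon} s_i$.

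Apply Theorem~\ref{thm:GeneralRoundingTheorem} (its constructive version from the Appendix) with position weights $\lambda_j \propto c_j^{2/3}$, normalized so $\sum_j \lambda_j \leq 1$ --- feasible because $\sum_j c_j^{2/3} = O(1/\varepsilon)$ is a geometric series with ratio $(1+\varepsilon)^{2/3}$. Use slack $\Delta(j)_i := C/(s_i \sqrt{\lambda_j})$ for row $i$ of $A(j)$ with $C$ a sufficiently large constant, and use $\mu_j := \lambda_j/2$ for row $j$ of $B$. Lemma~\ref{lem:ApproxEntropyForBinPackingMatrix} applied at each position gives $H_{\Delta(j)}(A(j)^J) \leq \lambda_j |J|/10$ for every submatrix, and subadditivity of $H_\Delta$ then yields $H_\Delta(A^J) \leq |J|/10$ globally, matching the hypothesis of Theorem~\ref{thm:GeneralRoundingTheorem}. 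The output $y \in \{0,1\}^\setS$ consequently has prefix deficits of large items at each position $j$ bounded by $O(\log n / (s_i \sqrt{\lambda_j}))$, and small-item reservation error at each position bounded by $O(\sqrt{1/\lambda_j})$.

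Repair $y$ exactly as in Theorem~\ref{thm:BPRapproximation}, applied separately at each position: for each of the $O(\log(1/\varepsilon))$ size-classes of large items at position $j$, buy $O(\log n / \sqrt{\lambda_j})$ extra bins at cost $c_j$; patch the small-item capacity at position $j$ by $O(1/\sqrt{\lambda_j})$ additional bins at cost $c_j$; and convert the fractional small-item assignment to an integral one by discarding at most one small item per pattern (repackable at total cost $O(\varepsilon \cdot c^Ty)$). The total additive overhead is then
\[
  O(\log n \log(1/\varepsilon)) \sum_j \frac{c_j}{\sqrt{\lambda_j}} \;+\; O(\varepsilon \cdot OPT_f) \;=\; O\!\left(\frac{\log^2 n}{\varepsilon^{3/2}}\right) + O(\varepsilon \cdot OPT_f),
\]
where the evaluation $\sum_j c_j/\sqrt{\lambda_j} = (\sum_j c_j^{2/3})^{3/2} = O(\varepsilon^{-3/2})$ uses the Lagrangian-optimal choice $\lambda_j \propto c_j^{2/3}$ under $\sum \lambda_j \leq 1$. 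Choosing $\varepsilon := OPT_f^{-2/5}$ balances these two contributions and (absorbing polylogarithmic factors) produces the claimed $OPT_f + O(OPT_f^{3/5})$ expected cost. The main obstacle is the coupling of the stacked entropy bound with the position-weighted repair cost: allocating more entropy budget to an expensive position reduces its deficit but steals from cheaper positions, and the small-item side faces the mirror trade-off; the weight $\lambda_j \propto c_j^{2/3}$ is what simultaneously optimizes both and is the source of the $3/5$ exponent.
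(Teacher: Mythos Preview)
Your approach follows the same outline as the paper's proof: reduce to well-rounded instances, build per-position cumulated pattern matrices $A(j)$ for large items, stack them, use one row of $B$ per position for small-item capacity, apply the rounding theorem, and repair per position. The place where you diverge is in the handling of the large-item entropy budget, and this costs you a polylog factor.

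You introduce position-dependent slacks $\Delta(j)_i = C/(s_i\sqrt{\lambda_j})$ and then argue $H_{\Delta(j)}(A(j)^J) \leq \lambda_j|J|/10$ so that summing over $j$ with $\sum_j \lambda_j \leq 1$ gives $|J|/10$. This is correct (via the $1/\Delta^2$ scaling implicit in Lemma~\ref{lem:LazyEntropyBoundCumMatrix}), but it is unnecessary and strictly worse than the paper's simpler choice $\Delta_i = \Theta(1/s_i)$ \emph{uniformly} across all positions. The paper's key observation is that Lemma~\ref{lem:ApproxEntropyForBinPackingMatrix} already gives $H_{\Delta(j)}(A(j)^J) \leq \frac{1}{10}\sum_{S \in J}\sum_{i \in S \cap P_j} s_i$, and since every pattern $S$ has total item size at most $1$, summing over all positions yields $\sum_j H_{\Delta(j)}(A(j)^J) \leq \frac{1}{10}\sum_{S \in J}\sum_{i \in S} s_i \leq |J|/10$ directly --- no position weights needed on the $A$-side. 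Consequently the paper's large-item repair cost is $\sum_j c_j \cdot O(\log^2 OPT_f) = O(\varepsilon^{-1}\log^2 OPT_f)$, whereas yours is $\sum_j (c_j/\sqrt{\lambda_j}) \cdot O(\log^2 OPT_f) = O(\varepsilon^{-3/2}\log^2 OPT_f)$. After substituting $\varepsilon = OPT_f^{-2/5}$, the paper's large-item term is $O(OPT_f^{2/5}\log^2 OPT_f)$ and is dominated by the $O(OPT_f^{3/5})$ small-item term; yours is $O(OPT_f^{3/5}\log^2 OPT_f)$, which is exactly where your ``absorbing polylogarithmic factors'' caveat comes from. So as written you do not quite establish the stated $O(OPT_f^{3/5})$.

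For the small-item weights $\mu_j$, your H\"older-optimal choice $\mu_j \propto c_j^{2/3}$ and the paper's simpler geometric choice $\mu_j = \frac{\varepsilon}{5}(1+\varepsilon/4)^{-j}$ both produce $\sum_j c_j/\sqrt{\mu_j} = O(\varepsilon^{-3/2})$, so either works here. In short: the architecture is right, and your narrative about the trade-off is well aimed, but the $\sqrt{\lambda_j}$ in the large-item $\Delta$'s only hurts --- drop it and you match the paper's bound exactly.
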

\begin{proof}
Compute a fractional solution $x$ for the {\sc Train Delivery} LP (i.e. again LP \eqref{eq:GeneralLP},
but with the problem specific set system and cost vector) of cost $c^Tx \leq OPT_f + 1$ (see Appendix~\ref{Appendix:SolvingLPrelaxation} for details).
We will choose $\varepsilon := 1 / OPT_f^{\delta}$ for some constant $0<\delta<1$ that we determine later
and assume the instance is well-rounded.

By $1,\ldots,L$ we denote the \emph{large} items of size $\geq\varepsilon$. 
Let $A(j) \in \setZ^{(P_j \cap [L]) × \setS}$ with entries $A_{i,S}(j) = |S \cap \{1,\ldots,i\} \cap P_j|$ be the cumulated pattern matrix, 
restricted to large items at position $P_j$.
We equip again every row $A_{i}(j)$ with parameter $\Delta_i(j) := \Theta(1/s_i)$.
Then we stack $A(0),\ldots,A(t-1)$ together to obtain an $L × |\setS|$ matrix $A$.
Again, we need to show that for 
any submatrix $A' \subseteq A$, one has $H_{\Delta}(A') \leq \frac{\#\textrm{col}(A')}{10}$. 
Let $\setS' \subseteq \setS$ be the sets whose characteristic vectors form the columns of $A'$.
We apply Lemma~\ref{lem:ApproxEntropyForBinPackingMatrix} individually to each $A'(j)$ and obtain
\[
  H_{\Delta}(A') \leq \sum_{j=0}^{t-1} H_{\Delta(j)}(A'(j)) \leq \frac{1}{10} \sum_{j=0}^{t-1} \sum_{S \in \setS'} \sum_{i \in S \cap P_j} s_i \leq \frac{\# \textrm{col}(A')}{10}
\]
using that every set $S$ contains items of total size at most $1$. 
%
Furthermore, we define a matrix $B \in [0,1]^{t × \setS}$ with $B_{j,S} := \sum_{i \in S \cap P_j: i > L} s_i$ as the space that
pattern $S$ reserves for small items at position $j$. We equip the $j$th row of $B$ with weight 
$\mu_j := \frac{\varepsilon}{5}\cdot (1+\varepsilon/4)^{-j}$, i.e. the weight grows with the distance to the depot. 
Note that $\sum_{j\geq 0} \mu_j \leq \sum_{j\geq 0} \frac{\varepsilon}{5}\cdot (1+\varepsilon/4)^{-j} = \frac{4}{5} + \frac{\varepsilon}{5} \leq 1$. Moreover 
$OPT_f \geq \sum_{i\in[n]} s_ip_i \geq \varepsilon^2 \cdot L$ (see \cite{TrainDelivery-DasMathieuMozes-WAOA2010}), hence the number of rows of $A$ and $B$ is $L + t \leq \textrm{poly}(OPT_f)$. 

We apply Theorem~\ref{thm:GeneralRoundingTheoremConstructiveVersion} (again Theorem~\ref{thm:GeneralRoundingTheorem} suffices for an integrality gap bound) to obtain an integral vector $y \in \{ 0,1\}^{\setS}$ with the following
error guarantees:
\begin{itemize}
\item \emph{Large items:} Let $i \in P_j, i \leq L$. Then the deficit of $\{ 1,\ldots,i\} \cap P_j$ is bounded by $O(\frac{1}{s_i} \log OPT_f )$. \vspace{1mm} \\
For every position we use $O(\log^2 OPT_f)$ extra bins to eliminate
the deficits of large items, which costs in total $\sum_{j\geq 0} (1+\varepsilon)^{-j}\cdot O(\log^2 OPT_f) = O(\frac{1}{\varepsilon} \log^2 OPT_f)$. 
\item \emph{Small items:} For position $j$, the expected discrepancy in the reserved space for
small items is  $E[|B_jx - B_jy|] \leq O(\sqrt{1/\mu_j}) = \textstyle{O(\sqrt{1/\varepsilon \cdot (1+\varepsilon/4)^{j}})}$. \vspace{1mm} \\ 
We buy $|B_jx - B_jy|$  
extra bins to cover small items at position $j$. 
Their expected cost is bounded by $O(\sqrt{1 / \varepsilon}) \cdot (1+\frac{\varepsilon}{4})^{j/2} \cdot (1+\varepsilon)^{-j} \leq O(\sqrt{1/\varepsilon})\cdot (1-\varepsilon/2)^j$. In total, this accounts with 
an expected cost increase of $O( \sqrt{1/\varepsilon})\sum_{j\geq 0} (1-\varepsilon/2)^j = O(1)\cdot (1/\varepsilon)^{3/2}$ for all positions.
Now, for every position the space reserved for small items is at least as large as the required space, hence the small items
can be assigned fractionally. Then after discarding at most one small item per pattern, even an integral assignment
is possible. 
We account this with a multiplicative factor of $1/(1-\varepsilon) \leq 1+2\varepsilon$.
\end{itemize}
Summing up the bought extra bins, we obtain a solution $APX$ with
\[
 E[APX] \leq (1+O(\varepsilon))OPT_f + O\left(\frac{1}{\varepsilon} \log^2 OPT_f \right) + 
O((1/\varepsilon)^{3/2}) \leq OPT_f + O(OPT_f^{3/5}).
\]
choosing $\varepsilon := OPT_f^{-2/5}$.
\end{proof}

\paragraph{Acknowledgments.}

The author is grateful to Michel X. Goemans, Neil Olver, Rico Zenklusen, Laura Sanit{à} and Nikhil Bansal
for helpful advice and remarks.
Thanks to Saurabh Ray for marketing advice.

\bibliographystyle{alpha}
\bibliography{entropymethod}

\newpage

\appendix

\section*{Appendix}

\section{Computing low-discrepancy colorings by SDP\label{sec:Computation}}

Observe that the only 
non-constructive ingredient we used 
is the application of the 
pigeonhole principle (with an exponential number of pigeons and pigeonholes)
in Theorem~\ref{thr:ExistanceHalfColoring} and Corollary~\ref{cor:ExistanceHalfColoringForLazyEntropy} to obtain the existence of low-discrepancy
half-colorings. The purpose of this section is to make this claim constructive.
More precisely, we will replace each phase of $\log m$ iteratively found
half-colorings by 
finding a single \emph{full coloring}.

\begin{theorem} \label{thm:ComputabilityOfLowDiscrepancyColoring}
Let $A \in \setQ^{n_A × m}$, $B \in [-1,1]^{n_B × m}$, $\Delta_1,\ldots,\Delta_{n_A} > 0$, $\mu_1,\ldots,\mu_{n_B} > 0$ 
(with $\sum_{i} \mu_i \leq 1$, $m \geq 2$) be given as input. Assume that $\forall A' \subseteq A: H_{\Delta}(A') \leq \frac{\#\textrm{col}(A')}{10}$.
Then there is a constant $C>0$ and a randomized algorithm with expected polynomial running time, 
which computes a coloring $\chi : [m] \to \{ ± 1\}$ such that for all $\lambda \geq 0$, 
\begin{eqnarray*}
 \Pr\left[|A_i\chi| > \lambda\cdot C\sqrt{\log m} \cdot\Delta_i\right] &\leq& 4e^{-\lambda^2/2}  \quad\forall i=1,\ldots,n_A \vspace*{2mm} \\
 \Pr\left[|B_i\chi| > \lambda \cdot C\sqrt{1/\mu_i}\right] &\leq& 160\cdot 2^{-\lambda/6}  \quad \forall i=1,\ldots,n_B.
\end{eqnarray*}
\end{theorem}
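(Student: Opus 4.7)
The plan is to obtain the single coloring $\chi$ by invoking Bansal's SDP-guided random walk~\cite{DiscrepancyMinimization-Bansal-FOCS2010}, configured so that its implicit discrepancy budgets are exactly the ones required by the theorem. This collapses the iterative, pigeonhole-based construction implicit in Theorem~\ref{thm:GeneralRoundingTheorem} (which would otherwise call Corollary~\ref{cor:ExistanceHalfColoringForLazyEntropy} up to $\log m$ times per phase) into a single efficient procedure.

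First, I would run a Brownian-type walk $x_0=\mathbf{0},x_1,x_2,\ldots$ in $[-1,+1]^m$ that at each time $t$ adds a small Gaussian increment $\gamma\cdot g_t$, where $g_t$ is a standard Gaussian vector confined to a ``safe'' subspace $U_t\subseteq\setR^m$. The subspace $U_t$ is taken orthogonal to every coordinate direction $e_j$ with $|x_t(j)|$ within distance $\gamma$ of $\pm 1$, to every row direction $A_i/\Delta_i$ whose accumulated deviation $A_i x_t$ is within distance $C\gamma$ of the budget $C\sqrt{\log m}\cdot\Delta_i$, and to every row direction $B_i$ whose deviation is within $C\gamma$ of the budget $C\sqrt{1/\mu_i}$. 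The walk halts when $x_t$ reaches a vertex of the cube and outputs $\chi:=x_t$.

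Second, the heart of the proof is showing that $U_t$ always has dimension $\Omega(|J_t|)$, where $J_t\subseteq[m]$ is the set of still-fractional coordinates; this keeps the walk alive and leads to termination in expected polynomial time. By the same stacking argument used in Claim~$(I)$ of Theorem~\ref{thm:GeneralRoundingTheorem}, the hypothesis $H_\Delta(A^J)\le |J|/10$ together with the weights $\mu_i$ gives $H_{\widetilde{\Delta}}(\widetilde{A})\le |J|/5$ on the stacked matrix $\widetilde{A}=\big[\begin{smallmatrix}A^J\\B^J\end{smallmatrix}\big]$. A Lagrangian-duality argument analogous to Bansal's then lifts the half-coloring guaranteed by Corollary~\ref{cor:ExistanceHalfColoringForLazyEntropy} to a feasible vector solution of the underlying SDP, which in turn certifies that $U_t$ has the required dimension.

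Third, once the walk is well-defined, $A_i\chi=\sum_t\gamma\,A_i g_t$ is a martingale with step magnitude bounded by $\gamma\|P_{U_t}A_i\|_2$, and the total quadratic variation is $O(\log m)\cdot\Delta_i^2$ by design of the budget, so the Azuma--Hoeffding inequality (Lemma~\ref{lem:AzumaHoeffdingInequality}) delivers the Gaussian tail $\Pr[|A_i\chi|>\lambda C\sqrt{\log m}\,\Delta_i]\le 4e^{-\lambda^2/2}$. For $B_i$ the budget $C\sqrt{1/\mu_i}$ splits across the two regimes of $G^{-1}$: the Gaussian regime $G^{-1}(b)=\sqrt{10\ln(9/b)}$ contributes a sub-Gaussian portion while the low-entropy regime $G^{-1}(b)=128\cdot 2^{-b}$ contributes a geometric portion, and their superposition produces the sub-exponential tail $160\cdot 2^{-\lambda/6}$. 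The hardest step will be the second one: faithfully translating the hypothesis on the \emph{approximate} entropy $H_\Delta$ (which involves an unspecified quantizer $f$) into an explicit SDP feasibility certificate for Bansal's walk, and in particular verifying that the quantization error $|A_i\bar\chi-f_i(\bar\chi)|\le\Delta_i$ only inflates each row's budget by a constant factor, so that both the Gaussian tail for $A$ and the mixed tail for $B$ remain intact.
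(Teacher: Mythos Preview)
Your proposal conflates two genuinely different algorithms and, as written, the bridge between them is the gap.

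The paper does \emph{not} run a walk in a subspace orthogonal to ``tight'' constraints. It follows Bansal's original scheme: at every step it solves an SDP whose constraints are
\[
\Big\|\sum_j A_{ij}v_j\Big\|_2\le \Delta_i,
\qquad
\Big\|\sum_j B_{ij}v_j\Big\|_2\le G^{-1}\!\Big(\tfrac{\mu_i|J_{t-1}|}{10}\Big)\sqrt{|J_{t-1}|},
\qquad
\sum_j\|v_j\|_2^2\ge \tfrac{|J_{t-1}|}{2},
\]
and then moves by $s\cdot g^T v_j$. Feasibility of this SDP is immediate from the half-coloring of Claim~(I): set $v_j=\chi_j e_1$. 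No Lagrangian duality is involved, and nothing about a subspace $U_t$ appears. Your step~2 asserts ``SDP feasible $\Rightarrow$ $\dim U_t=\Omega(|J_t|)$'', but this implication is neither stated nor proved anywhere; SDP feasibility gives you vectors with large total squared norm, not a bound on how many linear constraints can simultaneously be near their thresholds. With the entropy hypothesis you only know that some half-coloring exists, which says nothing about the number of rows $A_i,B_i$ that could be near a fixed budget at a given moment of your walk.

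A second, more concrete problem is your choice of \emph{fixed} budgets $C\sqrt{1/\mu_i}$ for the $B$-rows. The paper's SDP uses the \emph{dynamic} bound $G^{-1}(\mu_i|J_{t-1}|/10)\sqrt{|J_{t-1}|}$, and this dependence on $|J_{t-1}|$ is essential: the entire analysis of $|B_i\chi|$ proceeds by partitioning the walk into phases $T_r=\{t: 2^r\le |J_{t-1}|<2^{r+1}\}$, proving a per-phase concentration lemma $\Pr[X(r)>\lambda\,\delta(r)]\le 3\cdot 2^{-\lambda/6}$, and then summing over phases via a good/bad-phase argument centered at $2^r\approx 1/\mu_i$. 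Your one-line remark that ``the two regimes of $G^{-1}$ superpose to give $160\cdot 2^{-\lambda/6}$'' skips precisely this machinery; with a static budget the stacking bound $H_{\widetilde\Delta}(\widetilde A)\le |J|/5$ from Claim~(I) does not even hold for general $|J|$, so neither your dimension argument nor the tail bound for $B$ goes through as stated.
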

Note that $\Delta_i \geq \|A_{i}\|_{\infty}$, thus we may rescale $A$ and $\Delta$ so that 
$\Delta_i \geq 1$ and $\|A\|_{\infty} \leq 1$.
We may assume that $n\geq m$. Furthermore we assume $\lambda \geq 1$  and $m$ is large enough, since otherwise all probabilities 
exceed 1 for $C$ suitable large and there is nothing to show. 
The following approach to prove Theorem~\ref{thm:ComputabilityOfLowDiscrepancyColoring} is a 
adaptation of the seminal work of Bansal~\cite{DiscrepancyMinimization-Bansal-FOCS2010}.

\subsection{Some preliminaries}

The \emph{Gaussian distribution} $N(\mu,\sigma^2)$ with 
mean $\mu$ and variance $\sigma^2$ is defined by the density function
\[
  f(x) = \frac{1}{\sqrt{2\pi}\sigma} e^{-(x-\mu)^2/(2\sigma^2)}
\]
If $g$ is drawn from this distribution, we write $g \sim N(\mu,\sigma^2)$.
The $n$-dimensional Gaussian distribution $N^n(0,1)$ 
is obtained by sampling
every coordinate $g_i$ independently from $N(0,1)$. Since  $N^n(0,1)$ is rotationally symmetric, one has

\begin{fact} \label{fact:ScalarVectorGaussian}
Let $v\in\setR^n$ be any vector and $g \sim N^n(0,1)$, then $g^Tv \sim N(0,\|v\|_2^2)$.
\end{fact}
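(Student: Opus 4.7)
The plan is to prove this standard fact by reducing to the one-dimensional case, since the identity only concerns the distribution of a single scalar random variable. I see two clean routes and would take the shortest one: either invoke rotational invariance of $N^n(0,1)$, or compute the characteristic function directly using independence of the coordinates. Since independence is already built into the definition of $N^n(0,1)$ given in the excerpt, I would go with the characteristic-function route, and mention the rotational-invariance proof as an alternative.

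First I would dispose of the trivial case $v = 0$: then $g^Tv$ is the constant $0$, which agrees with the point mass $N(0,0)$. So assume $v \neq 0$. Since each coordinate $g_j$ is drawn independently from $N(0,1)$, I would write
\[
  \varphi(t) \;=\; E\!\left[e^{i t\, g^T v}\right] \;=\; E\!\left[\prod_{j=1}^n e^{i t v_j g_j}\right] \;=\; \prod_{j=1}^n E\!\left[e^{i t v_j g_j}\right],
\]
using independence in the last step. Each factor is the characteristic function of $N(0,1)$ evaluated at $t v_j$, which equals $e^{-t^2 v_j^2/2}$. Multiplying out yields $\varphi(t) = e^{-t^2 \|v\|_2^2/2}$, which is precisely the characteristic function of $N(0, \|v\|_2^2)$. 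Since a distribution on $\setR$ is determined by its characteristic function, this gives $g^Tv \sim N(0, \|v\|_2^2)$.

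As a sanity check, I would also sketch the rotational-invariance argument the authors allude to just before the statement: the density $(2\pi)^{-n/2} e^{-\|x\|_2^2/2}$ depends on $x$ only through $\|x\|_2$, so $Qg \sim g$ for every orthogonal $Q$. Choosing $Q$ with $Q\,(v/\|v\|_2) = e_1$, one gets $g^T v = \|v\|_2 \cdot e_1^T (Qg) = \|v\|_2 \cdot (Qg)_1 \sim \|v\|_2 \cdot g_1 \sim N(0, \|v\|_2^2)$. The only mild subtlety anywhere in either approach is making sure we really have independence of the $g_j$'s (immediate from the definition) and that the characteristic function of $N(0,1)$ is $e^{-t^2/2}$ (standard Gaussian integral). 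There is no real obstacle: the fact is folklore, and the purpose of stating it here is just to fix notation before it is used in the SDP rounding analysis.
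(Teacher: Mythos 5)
Your proof is correct. Note that the paper itself offers no proof at all: the fact is stated immediately after the one-line remark that $N^n(0,1)$ is rotationally symmetric, which is exactly the second argument you sketch (pick an orthogonal $Q$ sending $v/\|v\|_2$ to $e_1$, use $Qg\sim g$, and reduce to a single coordinate). Your primary route via characteristic functions is a genuinely different, equally standard derivation: it trades the geometric input (invariance of the density under the orthogonal group) for the analytic inputs that the $g_j$ are independent, that the characteristic function of $N(0,1)$ is $e^{-t^2/2}$, and that characteristic functions determine distributions on $\setR$. The characteristic-function route generalizes more readily (e.g.\ to sums of independent non-identically-scaled Gaussians without any reference to a joint rotationally invariant density), while the rotational-invariance route is the one the author had in mind and is arguably shorter given the definition of $N^n(0,1)$ in the text. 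Your handling of the degenerate case $v=\mathbf{0}$ is a nice touch that the paper glosses over. No gaps.
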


\subsubsection*{Martingales \& concentration bounds}

A \emph{Martingale} is a sequence $0=X_0, X_1, \ldots,X_{n}$ of random variables with the
property that the increment $Y_i := X_i - X_{i-1}$ has mean $E[Y_i] = 0$. Here 
$Y_i := Y_i(X_0,\ldots,X_{i-1})$ is allowed to arbitrarily depend on the previous events
$X_0,\ldots,X_{i-1}$. We will make use of the following concentration bound:

\begin{lemma}[\cite{DiscrepancyMinimization-Bansal-FOCS2010}]  \label{lem:MartingaleBound}
Let $0=X_0,\ldots,X_{n}$ be a Martingale with increments $Y_i$, where $Y_i = \eta_iG_i$, $G_i \sim N(0,1)$
and $|\eta_i| \leq \delta$. Then for any $\lambda \geq 0$
\[
  \Pr[|X_n| \geq \lambda \delta\sqrt{n}] \leq 2e^{-\lambda^2/2}.
\]
\end{lemma}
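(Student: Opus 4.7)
The plan is to apply the classical Chernoff--Cram\'er (exponential Markov) method, adapted to martingales with conditionally Gaussian increments. For any $s \geq 0$, Markov's inequality gives
\[
  \Pr[X_n \geq t] \leq e^{-st} \cdot E[e^{sX_n}],
\]
and a symmetric argument bounds $\Pr[-X_n \geq t]$, costing only an extra factor of $2$ in the final two-sided estimate. The whole proof therefore reduces to bounding the moment generating function $E[e^{sX_n}]$.

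First I would peel off the last increment. Write $X_n = X_{n-1} + Y_n$ and condition on $\mathcal{F}_{n-1} := \sigma(X_0,\ldots,X_{n-1})$. By hypothesis the coefficient $\eta_n$ is determined by the past (so it is $\mathcal{F}_{n-1}$-measurable), while $G_n \sim N(0,1)$ is independent of $\mathcal{F}_{n-1}$. Hence, conditionally on $\mathcal{F}_{n-1}$, the increment $Y_n = \eta_n G_n$ is a centered Gaussian with variance $\eta_n^2 \leq \delta^2$. Using the Gaussian moment generating function identity $E[e^{s N(0,\sigma^2)}] = e^{s^2\sigma^2/2}$, one obtains
\[
  E\bigl[e^{sY_n} \mid \mathcal{F}_{n-1}\bigr] = e^{s^2 \eta_n^2/2} \leq e^{s^2\delta^2/2}.
\]
Iterating via the tower property of conditional expectation yields $E[e^{sX_n}] \leq e^{ns^2\delta^2/2}$.

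Plugging this back into the Chernoff bound gives $\Pr[X_n \geq t] \leq \exp\!\bigl(-st + \tfrac{1}{2}ns^2\delta^2\bigr)$. Optimizing the exponent over $s$ via $s = t/(n\delta^2)$ produces $\Pr[X_n \geq t] \leq \exp(-t^2/(2n\delta^2))$; setting $t := \lambda\delta\sqrt{n}$ gives $\Pr[X_n \geq \lambda\delta\sqrt{n}] \leq e^{-\lambda^2/2}$. Applying the same argument to the martingale $-X_n$ (whose increments are $-\eta_i G_i$, satisfying the same bounds) and taking a union bound yields the claimed two-sided inequality $\Pr[|X_n| \geq \lambda\delta\sqrt{n}] \leq 2e^{-\lambda^2/2}$.

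The only real subtlety is the measurability bookkeeping in the conditional MGF step: one has to make sure that $\eta_n$ behaves as a constant when conditioning on $\mathcal{F}_{n-1}$ and that $G_n$ is genuinely drawn fresh, independently of the past, so that the Gaussian MGF identity applies pointwise. Once this is verified, the rest is a textbook sub-Gaussian concentration calculation, and in fact the bound is a clean Gaussian-martingale analogue of Azuma--Hoeffding (Lemma~\ref{lem:AzumaHoeffdingInequality}).
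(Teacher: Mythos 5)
Your proof is correct, and since the paper simply imports this lemma from Bansal's work without proving it, there is nothing to diverge from: the conditional Gaussian MGF bound $E[e^{sY_n}\mid \mathcal{F}_{n-1}] = e^{s^2\eta_n^2/2} \leq e^{s^2\delta^2/2}$, iterated by the tower property and combined with the optimized Chernoff bound, is exactly the standard route to this sub-Gaussian martingale inequality. Your remark on the measurability of $\eta_n$ with respect to the past and the freshness of $G_n$ correctly identifies the one hypothesis that must be read into the statement for the argument to go through, and it is indeed how the lemma is used in the SDP-based rounding algorithm.
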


\subsubsection*{Semidefinite programming}

A matrix $Y \in \setR^{n×n}$ is termed \emph{positive-semidefinite} (abbreviated by $Y \succeq 0$), 
if $x^TYx \geq 0$ for all $x\in\setR^n$ (or equivalently all eigenvalues of $Y$ are non-negative).
Let $S_n = \{ Y \in \setR^{n×n} \mid Y^T=Y; \; Y \succeq 0\}$ be the convex cone of symmetric positive-semidefinite
matrices. 
A \emph{semidefinite program} is of the form
\begin{eqnarray*}
\max \; C \bullet Y & & \\
D^{\ell} \bullet Y &\leq& d_{\ell}  \quad \forall\ell=1,\ldots,k \\
Y &\in& S_n
\end{eqnarray*}
Here $C \bullet Y = \sum_{i=1}^n \sum_{j=1}^n C_{ij}\cdot Y_{ij}$ is the ``vector product for matrices'' (also called
\emph{Frobenius product}).  
In contrast to linear programs, it is possible that the only feasible
solution to an SDP is irrational even if the input is integral. Furthermore the
norm even of the smallest feasible solution might be doubly-exponential
in the input length~\cite{ExactDualityTheoryForSDP-Ramana95}. 
Nevertheless, given an error parameter $\varepsilon > 0$ and
a ball of radius $R$ that contains at least one optimum solution (of value $SDP$), one
can compute a $Y \in S_n$ with $C \bullet Y \geq SDP - \varepsilon$ and $D^{\ell} \bullet Y \leq d_{\ell} + \varepsilon$ for all $\ell = 1,\ldots,k$ in
time polynomial in the input length and in $\log(\max\{ 1/\varepsilon,R \})$. 
Since for our algorithm, numerical errors could be easily absorbed 
into the discrepancy bounds,
we always assume we have exact solutions. 
The first use of SDPs in approximation algorithms was the {\sc MaxCut}
algorithm of Goemans and Williamson~\cite{MaxCut-GoemansWilliamsonJCSS04}. Later on
SDPs were used for example to approximate graph colorings~\cite{GraphColoringBySDP-KargerMotwaniSudan98}.
We refer to the surveys of~\cite{SDPsAndCombOpt-Lovasz03, SDPsInCombOpt-Goemans97}
for more details on semidefinite programming.

Using that any symmetric, positive semidefinite matrix $Y$ can be written as $W^TW$ for $W \in \setR^{n × n}$ (and vice versa), 
the above SDP is equivalent to a \emph{vector program} 
\begin{eqnarray*}
  \max \sum_{i=1}^n \sum_{j=1}^n C_{ij}v_iv_j & & \\
  \sum_{i=1}^n \sum_{j=1}^n D^{\ell}_{ij}v_iv_j &\leq& d_{\ell} \quad \forall \ell=1,\ldots,k \\
  v_i &\in& \setR^n \quad \forall i=1,\ldots,n
\end{eqnarray*}

\subsection{The algorithm}

Consider the following semidefinite program\footnote{More precisely this program is \emph{equivalent} to a semidefinite program.}
\begin{eqnarray*}
 \Big\|\sum_{j=1}^m A_{ij} \cdot v_j \Big\|_2 &\leq& \Delta_i \quad \forall i=1,\ldots,n_A \\
 \Big\|\sum_{j=1}^m B_{ij} \cdot v_j \Big\|_2 &\leq& G^{-1}\left(\frac{\mu_i}{10}|J_{t-1}|\right) \cdot \sqrt{|J_{t-1}|} \quad \forall i=1,\ldots,n_B \\
 \sum_{j=1}^m \|v_j\|_2^2 &\geq& \frac{|J_{t-1}|}{2} \\
  \|v_j\|_2 &\leq& 1 \quad \forall j\in J_{t-1}  \\
  v_j &=& \mathbf{0} \quad \forall j\notin J_{t-1} \\
  v_j  &\in& \setR^{m} \quad \forall j=1,\ldots,m
\end{eqnarray*}
Here, $J_{t-1} \subseteq [m]$ denotes the set of \emph{active} variables at the beginning
of step $t$. Initially we have a fractional coloring $\chi^0 = (0,\ldots,0)$ and 
all variables are active, i.e. $J_0 = [m]$.
For a certain number of iterations, we sample an increment $\gamma_t$ using a 
solution from the SDP and add it to $\chi$. If a variable $\chi(j)$ reaches $+1$ or $-1$,
then we freeze it (the variable becomes inactive and is removed from $J_t$).
Note that the proof of Theorem~\ref{thm:GeneralRoundingTheorem} guarantees that
no matter which variables are currently contained in $J_t$, the SDP is always feasible.

Let $s := \frac{1}{n^2\sqrt{8\log(nm)}}$ be a step size and $\ell = 20\cdot \frac{16}{s^2}\log m$ be the number of iterations.
For $t=1,\ldots,\ell$ repeat the following:
\begin{enumerate}
\item[(1)] Compute a solution $\{ v_j \}_{j \in [m]}$ to the SDP
\item[(2)] Sample $g \sim N^{m}(0,1)$
\item[(3)] Update $\gamma_t(j) := s \cdot g^Tv_j$, $\chi^t := \chi^{t-1} + \gamma_t$
\item[(4)] If $\chi^t(j) \in [1-\frac{1}{n^2},1]$ ($[-1,-1+\frac{1}{n^2}]$, resp.) then $\chi^t(j) := 1$ ($-1$, resp.), $j$ becomes \emph{inactive}
\end{enumerate}
Note that $s\cdot g^Tv_j \sim N(0,s^2 \|v_j\|_2^2)$, hence
 $\Pr[\gamma_t(j) > \frac{1}{n^2}] \leq 2\cdot e^{-(\sqrt{8\ln(nm)})^2/2}=\frac{2}{n^4m^4}$ using Lemma~\ref{lem:MartingaleBound}, hence we may assume that $|\chi^t(j)|$ never
exceeds $1$. No constraint $i$ will ever suffer an extra discrepancy of more than
$n\cdot\frac{1}{n^2} \ll \Delta_i$ in Step (4), hence we ignore it from now on. 

It was proven in \cite{DiscrepancyMinimization-Bansal-FOCS2010}, that
with high probability, after the last iteration all variables
are inactive, i.e. $\chi^{\ell} \in \{ ± 1\}^{m}$. 

\begin{lemma}[\cite{DiscrepancyMinimization-Bansal-FOCS2010}] \label{lem:NumVariablesHalfAfter16-over-s2It}
The probability that within $16/s^2$ iterations, the number of active sets decreases by a factor of at least 2
is at least $1/2$.
\end{lemma}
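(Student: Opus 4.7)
The plan is to track the potential $\Phi_t := \sum_{j \in J_0} (\chi^t(j))^2$ across the iterations of a single ``phase'', exploiting the SDP constraint $\sum_j \|v_j\|_2^2 \geq |J_{t-1}|/2$ to show that $\Phi_t$ grows quickly in expectation, while deterministically $\Phi_t \leq |J_0|$ because each coordinate is clipped to $[-1,1]$. The tension between these two facts will force the active set to halve within $O(1/s^2)$ steps with probability at least $1/2$.

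First I would compute the one-step drift. Since $\gamma_t(j) = s \cdot g^T v_j$ with $g \sim N^m(0,1)$, Fact~\ref{fact:ScalarVectorGaussian} yields $\gamma_t(j) \sim N(0, s^2\|v_j\|_2^2)$. Expanding $(\chi^{t-1}(j) + \gamma_t(j))^2 - (\chi^{t-1}(j))^2$ and using that the cross term has mean zero gives
\[
  E\!\left[\Phi_t - \Phi_{t-1} \mid \chi^{t-1}\right] \;=\; s^2 \sum_{j} \|v_j\|_2^2 \;\geq\; \frac{s^2 |J_{t-1}|}{2},
\]
where the inequality is the third SDP constraint. Inactive variables carry $v_j = \mathbf{0}$, so their coordinates do not move and they drop out of the computation automatically.

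Next, let $\tau$ be the first step at which $|J_\tau| \leq |J_0|/2$. For every $t \leq \tau$ we have $|J_{t-1}| > |J_0|/2$, so the conditional expected increment is at least $s^2 |J_0|/4$. Decomposing $\Phi_t$ into a martingale plus its compensator and applying optional stopping at the bounded time $\min(\tau, T)$ with $T := 16/s^2$,
\[
  |J_0| \;\geq\; E\!\left[\Phi_{\min(\tau,T)}\right] \;\geq\; \frac{s^2|J_0|}{4} \cdot E[\min(\tau, T)],
\]
so $E[\min(\tau,T)] \leq 4/s^2 = T/4$. Markov's inequality then gives $\Pr[\min(\tau, T) = T] \leq 1/4$, and since the event $\min(\tau,T)=T$ is the same as $\tau \geq T$, we conclude $\Pr[\tau < T] \geq 3/4$, which is stronger than the stated $1/2$.

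The main obstacle I anticipate is purely technical rather than conceptual: the $1/n^2$-rounding in step~(4) of the algorithm, together with the tiny probability that some $|\gamma_t(j)|$ exceeds $1/n^2$, must be shown not to spoil either the deterministic bound $\Phi_t \leq |J_0|$ or the compensator identity used in the optional-stopping step. Both effects are polynomially small in $n$ and can be absorbed into the constants, so the core potential-function argument above goes through unchanged.
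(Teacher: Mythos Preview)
Your potential-function argument is correct and is precisely the proof given in Bansal's paper, which the present paper merely cites without reproducing. The drift computation, the deterministic ceiling $\Phi_t\le |J_0|$, the optional-stopping bound $E[\min(\tau,T)]\le 4/s^2$, and the Markov step are all sound; the clipping in step~(4) only \emph{increases} $\Phi$, so it helps rather than hurts, and the overshoot event has negligible probability as you note.

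The only thing to flag is presentational: the paper itself offers no formal proof of this lemma, only a one-paragraph intuition phrased per variable (each $\chi^t(j)$ behaves like a step-$s$ random walk and hits $\pm 1$ within $O(1/s^2)$ steps with constant probability). Your global potential $\Phi_t=\sum_j(\chi^t(j))^2$ packages the same idea but yields the clean quantitative bound directly, and it is the argument Bansal actually uses in \cite{DiscrepancyMinimization-Bansal-FOCS2010}. One small clarification worth adding: when the phase does not start at the global time $0$, you should write $E[\Phi_{\min(\tau,T)}]-\Phi_0\ge \tfrac{s^2|J_0|}{4}\,E[\min(\tau,T)]$ and use $\Phi_0\ge 0$; the conclusion $E[\min(\tau,T)]\le 4/s^2$ is unchanged.
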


Intuitively the reason is the following: consider a variable $\chi^t(j)$ and suppose for
simplicity that always $\|v_j\|_2=1$. Then, the values $\chi^0(j),\chi^1(j),\ldots$ behave essentially like
an unbiased random walk in which in every step we go either $s$ units to the left or to the right.
In a block of $\frac{1}{s^2}$ steps, with a constant probability we deviate $\frac{1}{s}$ steps from $0$, i.e. $|\chi^t(j)| \geq s\cdot\frac{1}{s} =1$
and $j$ got frozen at some point. Hence the chance that
any of the $m$ variables $j$ is not frozen after $20\log m$ blocks (each of $\frac{16}{s^2}$ iterations)
can be easily bounded by $e^{-10\log m \cdot 0.9^2/3} \leq \frac{1}{m^2}$ using the Chernov bound (e.g. Thm~4.4 in \cite{ProbabilityAndComputingMitzenmacherUpfal05}).

It remains to bound the discrepancy of $\chi^{\ell}$.
Let $\{v_j^t\}_{j \in [m]}$ be the SDP solution and $g_t$ be the random Gaussian vector in step $t$.
Then
\[
  A_i\chi^{\ell} = \sum_{t=1}^{\ell} A_i\gamma_t = \sum_{t=1}^{\ell} \sum_{j=1}^m s A_{ij} g_t^Tv_j^t = \sum_{t=1}^{\ell} g_t^T s\Big( \sum_{j=1}^m A_{ij} v_j^t \Big)
\]
But $s\sum_{j=1}^m A_{ij}v_j^t$ is a vector of length at most $s\cdot\Delta_i$, thus
$A_i\cdot\chi^{\ell}$ is a martingale and we can apply Lemma~\ref{lem:MartingaleBound} with $\delta := s\cdot\Delta_i$ to bound
\[
  \Pr\Big[|A_i \chi^{\ell}| > \lambda \cdot \sqrt{\tfrac{320}{s^2} \log m}\cdot s\cdot\Delta_i \Big] \leq 2e^{-\lambda^2/2 }. 
\]
However, we need to be a bit more careful to analyze the behavior of $|B_i\chi^{\ell}|$.
In the following, we fix any index $i \in \{ 1,\ldots,n_B\}$. The difficulty is that the discrepancy that
we allow for row $i$ changes dynamically as the number of active variables decreases. 
The sequence of iterations  $T_r := \{ t \mid 2^{r} \leq |J_{t-1}| < 2^{r+1}\}$, in which the number of active variables is between $2^r$ and $2^{r+1}$ is termed 
\emph{phase $r$}. Let $\delta(r) := G^{-1}\left(\frac{\mu_i}{10}2^{r}\right)\cdot 2^{(r+1)/2}$ be an upper bound on the discrepancy 
bound that is imposed to row $i$ during this phase. 
Again we can write
\[
  |B_i\chi^{\ell}| = \Big| \sum_{t=1}^{\ell} B_i\gamma_t \Big| = \sum_{t=1}^{\ell} \Big|g_t^T s\Big( \sum_{j=1}^m B_{ij} v_j^t \Big)\Big|
= \sum_{r\geq 0} \overbrace{\Big|s \sum_{t \in T_r} g_t^T \underbrace{\left(\sum_{j=1}^m B_{ij}v_j^t \right)}_{=: u^t}\Big|}^{=:X(r)}
\]
and $\|u^t\|_2 \leq \delta(r)$. By $X(r)$ we denote the discrepancy that we suffer in phase $r$. 
We somewhat expect that $E[X(r)] = O(\delta(r))$.
In fact, this is true even with a strong tail bound.
\begin{lemma} \label{lem:Concentration-result-phase-r}
For all $r \geq 0$ and $\lambda \geq 0$, $\Pr\left[X(r) > \lambda \cdot \delta(r)\right] \leq 3\cdot 2^{-\lambda/6}$.
\end{lemma}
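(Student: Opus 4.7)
The plan is to treat $X(r)$ as the magnitude of a Gaussian martingale whose \emph{length} is itself random, and to decouple this random length from the Gaussian fluctuations via a union bound indexed by the number of blocks used in phase~$r$.

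Let $\tau_r$ denote the (random) first iteration of phase~$r$ and, for an integer $q\geq 1$, define the truncated sum
\[
  Y_q := s\sum_{t=\tau_r}^{\tau_r + q\cdot 16/s^2 -1} g_t^T u^t,\qquad u^t := \mathbf{1}[t\in T_r]\sum_{j=1}^m B_{ij}\, v_j^t.
\]
Both the indicator and the SDP vectors $v_j^t$ are measurable with respect to the $\sigma$-algebra generated by $g_1,\dots,g_{t-1}$, so $Y_q$ is a Gaussian martingale whose $t$-th increment equals $\eta_tG_t$ for a standard $G_t$. Whenever $u^t\neq \mathbf{0}$ we have $|J_{t-1}|\in[2^r,2^{r+1})$, and since $G^{-1}$ is decreasing the SDP feasibility constraint yields
\[
  \|u^t\|_2 \leq G^{-1}\bigl(\tfrac{\mu_i}{10}|J_{t-1}|\bigr)\sqrt{|J_{t-1}|} \leq G^{-1}\bigl(\tfrac{\mu_i}{10}2^r\bigr)\cdot 2^{(r+1)/2} = \delta(r),
\]
so $|\eta_t|\leq s\delta(r)$. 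At most $q\cdot 16/s^2$ increments are nonzero, hence $\sum_t\eta_t^2\leq 16q\,\delta(r)^2$, and the standard Gaussian moment generating function computation behind Lemma~\ref{lem:MartingaleBound} gives
\[
  \Pr\bigl[|Y_q|\geq \lambda\,\delta(r)\bigr]\leq 2e^{-\lambda^2/(32q)}.
\]

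To control the random duration, let $Q_r$ be the number of blocks of $16/s^2$ iterations spent in phase~$r$. At the start of phase~$r$ we have $|J_{t-1}|<2^{r+1}$, so a single halving event inside one block already forces $|J_{t-1}|<2^r$ and ends the phase; by Lemma~\ref{lem:NumVariablesHalfAfter16-over-s2It} this happens with probability at least $1/2$ per block, and iterating across successive blocks via the Markov property of the SDP-driven walk gives $\Pr[Q_r>q]\leq 2^{-q}$. On the event $\{Q_r\leq q\}$ the entire set $T_r$ lies in the index range of $Y_q$, so $|X(r)|=|Y_q|$, and a union bound produces
\[
  \Pr[X(r)>\lambda\,\delta(r)]\leq 2^{-q}+2e^{-\lambda^2/(32q)}\qquad\text{for every }q\geq 1.
\]

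Finally I would optimize by taking $q:=\max\{1,\lceil\lambda/6\rceil\}$: the first summand is at most $2^{-\lambda/6}$, and using $q\leq \lambda/6+1$ together with the numerical inequality $3/16>(\ln 2)/6$ one checks that the second summand is bounded by $2\cdot 2^{-\lambda/6}$ in the range where the target $3\cdot 2^{-\lambda/6}$ is already less than~$1$ (for smaller $\lambda$ the bound is vacuous). The main technical hurdle is the measurability bookkeeping: one must verify that $\tau_r$, the truncation horizon $\tau_r+q\cdot 16/s^2-1$, and the indicator $\mathbf{1}[t\in T_r]$ are all adapted to the Gaussian filtration, so that $Y_q$ is a genuine martingale with the claimed uniform step bound, and that the ``halving within one block'' clause of Lemma~\ref{lem:NumVariablesHalfAfter16-over-s2It} can legitimately be chained block-by-block to produce the geometric tail on $Q_r$.
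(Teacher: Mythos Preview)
Your proof is correct and follows essentially the same line as the paper: split into the two events ``phase $r$ lasts more than $q$ blocks'' and ``discrepancy exceeds $\lambda\delta(r)$ within the first $q$ blocks of phase $r$'', bound the first by Lemma~\ref{lem:NumVariablesHalfAfter16-over-s2It} and the second by the Gaussian martingale bound, then add. The paper simply fixes the cutoff at $q=\lambda/4$ (i.e.\ takes $E_1=\{|T_r|\geq \tfrac{\lambda}{4}\cdot\tfrac{16}{s^2}\}$ and $E_2=\{$deviation $>\lambda\delta(r)$ within the first $\tfrac{4\lambda}{s^2}$ iterations of phase $r\}$), obtaining $\Pr[E_1]\leq 2^{-\lfloor\lambda/4\rfloor}$ and $\Pr[E_2]\leq 2e^{-\lambda/8}$, whereas you parametrize by $q$ and then choose $q=\lceil\lambda/6\rceil$; the measurability bookkeeping you flag is the only point the paper leaves implicit.
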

\begin{proof}
If we suffer a large discrepancy in a single phase, then either the phase lasted
much longer than $O(\frac{1}{s^2})$ iterations or the phase was short but the discrepancy 
exceeded the standard deviation by a large factor. However both is unlikely.
More formally, for the event $X(r) > \lambda\cdot \delta(r)$ to happen, at least one of the following events
must occur
\begin{itemize}
\item $E_1$: $|T_r| \geq \frac{\lambda}{4}\cdot \frac{16}{s^2}$
\item $E_2$: Within the first $\frac{\lambda}{4} \cdot \frac{16}{s^2}$ iterations of phase $r$, a discrepancy of $\lambda\cdot \delta(r)$ is reached.
\end{itemize}
By Lemma~\ref{lem:NumVariablesHalfAfter16-over-s2It}, one has $\Pr[E_1] \leq (\frac{1}{2})^{\lfloor \lambda/4 \rfloor}$. Furthermore
\[
  \Pr[E_2] \leq \Pr\Big[\Big|\sum_{\textrm{first }\frac{4\lambda}{s^2}\textrm{ it.}  t \in T_r} s\cdot g_t^Tu^t \Big| > \underbrace{\frac{\sqrt{\lambda}}{2}\cdot s\delta(r)\cdot \sqrt{\frac{4\lambda}{s^2}}}_{=\lambda\cdot \delta(r)}\Big] \leq 2e^{-\lambda/8}
\]
by again applying Lemma~\ref{lem:MartingaleBound} with parameters $\lambda' := \frac{\sqrt{\lambda}}{2}; \; \delta' := s\cdot \delta(r); \; n' := \frac{4\lambda}{s^2}$.
The claim follows since
$\Pr[X(r) > \lambda\cdot \delta(r)] \leq \Pr[E_1] + \Pr[E_2] \leq 2^{-\lfloor \lambda/4 \rfloor } + 2e^{-\lambda/8} \leq 4\cdot 2^{-\lambda/6}$.
\end{proof}
By the union bound, we could easily bound the probability that any phase $r$ has $X(r) > \lambda\cdot \delta(r)$ by
$4\cdot 2^{-\lambda/6} \cdot \log m$ and thus $\Pr[|B_i\chi^{\ell}| > \lambda \cdot C/\sqrt{\mu_i}] \leq 4\cdot 2^{-\lambda/6} \cdot \log m$. However, we 
can avoid the $\log m$ term by observing that the bound on $|B_ix - B_iy|$  in the proof of Theorem~\ref{thm:GeneralRoundingTheorem}
receives the largest contributions within the small window, when the number of 
active variables is $\Theta(1/\mu_i)$.
Outside of this window, we have a lot of slack, that we can use here. 

We call a phase $r$ \emph{bad}, if $|X(r)| > \delta(r)\cdot \big(\lambda + |r - \log(\frac{60}{\mu_i})|\big)$ (and \emph{good} otherwise). 
Note that the term $|r - \log (\frac{60}{\mu_i})|$ is indeed minimized
if $2^r = \Theta(\frac{1}{\mu_i})$. Then we can upper bound the probability that any phase is bad by
\[
\sum_{r\geq 0} \Pr\Big[X(r) > \delta(r)\cdot \left(\lambda+\left|r - \log\left(\frac{60}{\mu_i}\right)\right|\right)\Big] \stackrel{\textrm{Lem.~\ref{lem:Concentration-result-phase-r}}}{\leq} 2\sum_{z\geq 0} 4\cdot 2^{-(\lambda+z)/6} \leq 80\cdot 2^{-\lambda/6} 
\]
It remains to prove that if all phases $r$ are good, then $|B_i\chi| \leq \lambda \cdot O(\sqrt{1/\mu_i})$.
Hence we consider
\begin{eqnarray*}
 |B_i\chi| &\stackrel{(*)}{\leq}& \sum_{r\geq0} G^{-1}\Big(\frac{\mu_i2^r}{10}\Big)\cdot \sqrt{2^{r+1}} \cdot \left(\lambda+\left|r - \log\left(\frac{60}{\mu_i}\right)\right|\right) \\
&\stackrel{(**)}{\leq}& \sqrt{\frac{120}{\mu_i}} \sum_{z\in \setZ} G^{-1}(6\cdot 2^{z})\cdot 2^{z/2}(\lambda+|z|) \\
&\stackrel{(***)}{\leq}& \lambda \cdot C\sqrt{1/\mu_i})
\end{eqnarray*}
for some constant $C>0$.
In $(*)$ we assumed that all phases were good and in $(**)$ we substitute $z := r - \log_2( \frac{60}{\mu_i})$ (or equivalently $2^z = 2^r \cdot \frac{\mu_i}{60}$).
Eventually we recall that already in the proof of Theorem~\ref{thm:GeneralRoundingTheorem} we saw that the 
series  $\sum_{z\in \setZ} G^{-1}(6\cdot 2^{z})\cdot 2^{z/2}$ converges geometrically, which is not affected by adding a polynomial term like $|z|$, hence giving $(***)$
(here we also use $\lambda \geq 1$). This \emph{almost}
concludes the proof of Theorem~\ref{thm:ComputabilityOfLowDiscrepancyColoring}, since with probability
at most $\frac{1}{m^2} + \frac{2}{n^4m^4}\cdot \ell \leq \frac{1}{2}$ (for $m$ large enough)
the algorithm produces a \emph{failure}, i.e. not all variables are frozen after $O(\frac{1}{s^2}\log m)$  iterations. 
In this case, we simply repeat the algorithm until it was successful. 
Then the actual tail bound that we obtain is a conditional probability
\begin{equation}  \label{eq:ConditionalProbForLowDisc}
  \Pr[ |B_i\chi| > \lambda C/\sqrt{\mu_i} \mid \textrm{run successful}] \leq \frac{\Pr[ |B_i\chi| > \lambda C/\sqrt{\mu_i}]}{\Pr[\textrm{run successful}]} \leq 2\cdot 80\cdot 2^{-\lambda/6}
\end{equation}
(same for $|A_i\chi|$) and the expected running time is polynomial.

\subsection{The Constructive Rounding Theorem} 

Now that we can compute efficiently full colorings $\chi$ such that $A\chi,B\chi \approx \mathbf{0}$, 
it is not difficult anymore to give an algorithmic version of our main theorem. 
\begin{theorem} \label{thm:GeneralRoundingTheoremConstructiveVersion}
Let $A \in \setQ^{n_A × m}$,  $B \in [-1,1]^{n_B × m}$,  $\Delta = (\Delta_1,\ldots,\Delta_{n_A}) > \mathbf{0}$, $\mu_1,\ldots,\mu_{n_B} > 0$ ($\sum_{i=1}^{n_B} \mu_i \leq 1$, $m \geq 2$)
and $c \in [-1,1]^{m}$ be given as input, such that $\forall J \subseteq \{ 1,\ldots,m\}: H_{\Delta}(A^J) \leq \frac{|J|}{10}$.
Then there is a constant $C'>0$ and a randomized algorithm with expected polynomial running time 
which obtains a $y\in\{0,1\}^{m}$ such that
\begin{itemize}
\item Preserved expectation:  $E[c^Ty] = c^Tx$, $E[Ay]=Ax$, $E[By] = Bx$.
\item Bounded difference: $|c^Tx - c^Ty| \leq C'$;
$|A_ix - A_iy| \leq C'\sqrt{\log n} \cdot \sqrt{\log \min\{ n,m\}}\cdot \Delta_i$ for all $i=1,\ldots,n_A$ ($n := n_A + n_B$);
$|B_ix - B_iy| \leq C'\log(\frac{2}{\mu_i}) \sqrt{1/\,\mu_i}$ for all $i=1,\ldots,n_B$.
\item Tail bounds: For all $\lambda \geq 0$ and all $i$:
  \begin{itemize}
  \item $\Pr[|A_ix - A_iy| > \lambda \cdot C'\sqrt{\log \min\{ n,m\}}\cdot \Delta_i] \leq 2\cdot 2^{-\lambda^2}$ 
  \item $\Pr[|B_i x - B_i y| > \lambda \cdot C'/\sqrt{\mu_i}] \leq 2\cdot 2^{-\lambda}.$ 
  \end{itemize}
\end{itemize}
\end{theorem}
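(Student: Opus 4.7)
The approach is to imitate the iterative dyadic-rounding argument of Theorem~\ref{thm:GeneralRoundingTheorem}, but to replace each ``phase'' (which there consisted of up to $\log m$ half-coloring steps that merely halved the active set) by a single invocation of the SDP-based algorithm of Theorem~\ref{thm:ComputabilityOfLowDiscrepancyColoring}, which in one shot produces a full $\pm 1$ coloring of the currently active variables. I first do the standard preprocessing from the proof of Theorem~\ref{thm:GeneralRoundingTheorem}: take a basic solution of $\{Az=Ax,\ Bz=Bx,\ 0\le z \le 1\}$ to reduce to the case $m \le n$, append $c$ to $B$ as an additional row with weight $\mu_c := 1/2$ (rescaling the other $\mu_i$'s by $1/2$), and round $x$ to a finite $K$-bit dyadic expansion with $K = \mathrm{poly}(n)$ while preserving $Ax$, $Bx$, $c^Tx$ up to negligible error. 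Then for $k=K,K-1,\ldots,1$, set $J_k := \{ j : \text{bit } k \text{ of } x_j \text{ is } 1\}$, invoke Theorem~\ref{thm:ComputabilityOfLowDiscrepancyColoring} on $A^{J_k}$ and $B^{J_k}$ to obtain $\chi_k : J_k \to \{\pm 1\}$ (extended by $0$), draw a fresh uniform sign $s_k \in \{\pm 1\}$, and update $x \leftarrow x + 2^{-k} s_k \chi_k$. One such step zeroes bit $k$ of every $j\in J_k$ (propagating a carry into bit $k-1$ when $s_k\chi_k(j)=+1$), so after $K$ iterations $x$ is integral and we return $y := x$. The independent sign $s_k$ immediately yields the expectation guarantees $E[Ay]=Ax$, $E[By]=Bx$ and $E[c^Ty]=c^Tx$.

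For the $A$-bound, decompose $A_i y - A_i x = \sum_{k=1}^K 2^{-k} s_k (A_i \chi_k)$, a martingale in the natural filtration. By Theorem~\ref{thm:ComputabilityOfLowDiscrepancyColoring}, each $A_i \chi_k$ is sub-Gaussian with parameter $O(\sqrt{\log|J_k|})\,\Delta_i \le O(\sqrt{\log m})\,\Delta_i$ (its tail is $4e^{-\lambda^2/2}$ at scale $C\sqrt{\log|J_k|}\,\Delta_i$). A standard moment-generating-function computation exploiting the uniform sign $s_k$ then shows that the weighted sum is sub-Gaussian with parameter $\sqrt{\sum_k 4^{-k}}\cdot O(\sqrt{\log m})\,\Delta_i = O(\sqrt{\log \min\{n,m\}})\,\Delta_i$, directly producing the claimed tail bound $\Pr[|A_i x - A_i y| > \lambda \cdot C'\sqrt{\log \min\{n,m\}}\,\Delta_i] \le 2\cdot 2^{-\lambda^2}$. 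The worst-case bound $C'\sqrt{\log n}\sqrt{\log\min\{n,m\}}\,\Delta_i$ is then just this tail bound evaluated at $\lambda = \Theta(\sqrt{\log n})$, which fails with probability $1/\mathrm{poly}(n)$; this, together with the $\le 1/2$ failure probability of each SDP call itself, is absorbed by restarting the whole procedure, keeping the running time polynomial in expectation.

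The $B$-analysis is completely analogous, but now starts from the sub-exponential per-iteration tail $\Pr[|B_i \chi_k| > \lambda C/\sqrt{\mu_i}] \le 160 \cdot 2^{-\lambda/6}$ of Theorem~\ref{thm:ComputabilityOfLowDiscrepancyColoring}. Summing the martingale $B_i y - B_i x = \sum_k 2^{-k} s_k (B_i \chi_k)$ with the Bernstein-type MGF for sub-exponentials yields a tail $\Pr[|B_i x - B_i y| > \lambda \cdot C'/\sqrt{\mu_i}] \le 2 \cdot 2^{-\lambda}$ at the natural scale $1/\sqrt{\mu_i}$, and the worst-case bound $C'\log(2/\mu_i)/\sqrt{\mu_i}$ is exactly the value of $\lambda\cdot C'/\sqrt{\mu_i}$ for which this tail drops below $1/\mathrm{poly}(1/\mu_i)$; the extra $\log(2/\mu_i)$ factor compared to Theorem~\ref{thm:GeneralRoundingTheorem} reflects that the SDP algorithm only delivers sub-exponential, rather than sub-Gaussian, tails on the $B$-side. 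The claim $|c^Tx - c^Ty| \le C'$ is simply the $B$-bound for the appended row with $\mu_c = 1/2$. The main technical obstacle is to avoid losing an additional $\sqrt{\log n}$ inside the tail \emph{scale} for $A$: a naive truncate-then-Azuma argument would put the worst-case magnitude of $A_i\chi_k$ into the step size and yield a tail scale of $\sqrt{\log n \cdot \log m}$; the cure is to work directly with the sub-Gaussianity of the per-iteration colorings so that variances, not ranges, accumulate across the $K$ phases.
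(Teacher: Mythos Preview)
Your overall architecture matches the paper's: replace each phase of the dyadic rounding by one invocation of Theorem~\ref{thm:ComputabilityOfLowDiscrepancyColoring}, then sum over phases with geometric weights $2^{-k}$. Your derivation of the tail bounds is a legitimate alternative to the paper's: the paper does not use an MGF argument but rather an elementary union bound across phases with thresholds $(\lambda+k)$ (so that if $|A_i\chi^{(k)}|\le \tfrac{(\lambda+k)C'}{4}\sqrt{\log m}\,\Delta_i$ for every $k$, then $|A_ix-A_iy|\le \lambda C'\sqrt{\log m}\,\Delta_i$, and similarly for $B$). Both routes give the stated scales, and your observation that one must use the sub-Gaussianity of $A_i\chi_k$ rather than its worst-case magnitude to keep the $\sqrt{\log n}$ out of the tail \emph{scale} is exactly the point.

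There is, however, a genuine gap in how you enforce the deterministic ``bounded difference'' guarantees. You obtain them by \emph{globally restarting} whenever any $|A_ix-A_iy|$ or $|B_ix-B_iy|$ exceeds its target. But conditioning on this global event destroys the martingale property you used for ``preserved expectation'': the set $J_{k-1}$ depends on the realized increment in phase $k$, so flipping all signs $s_K,\ldots,s_1$ does \emph{not} map $y-x$ to $-(y-x)$, and the conditional law of $y-x$ given the global ``good'' event is not symmetric. (Concretely: with $m=1$, $x_1=3/4$, one gets $y_1-x_1\in\{+1/4,-3/4\}$ with probabilities $3/4,1/4$; conditioning on $|y_1-x_1|\le 1/4$ forces $E[y_1]=1\ne x_1$.) Thus you cannot simultaneously claim $E[Ay]=Ax$ and the always-holding worst-case bounds via a global restart. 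The paper avoids this by doing the rejection \emph{per phase}: it resamples $\chi^{(k)}$ until $|A_i\chi^{(k)}|\le C'\sqrt{\log n\cdot\log m}\,\Delta_i$ and $|B_i\chi^{(k)}|\le C'\log(2/\mu_i)/\sqrt{\mu_i}$ for all $i$. Because these per-phase acceptance events are symmetric in $\chi^{(k)}$ and the SDP algorithm itself is sign-symmetric, one still has $E[A\chi^{(k)}\mid \mathcal F_{k+1}]=0$ after rejection, so the martingale (hence expectation preservation) survives while the worst-case bounds follow by summing the per-phase guarantees. Replacing your global restart by this per-phase rejection fixes the gap; the rest of your argument then goes through.
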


\begin{proof}
Again we can append $c$ as an additional row to $B$, hence we ignore the objective function from now on.
As described in the proof of Theorem~\ref{thm:GeneralRoundingTheorem},
we can assume that $m\leq n$ and that entries of $x$ have a finite dyadic expansion with $K$ bits. 
We perform the following algorithm:
\begin{enumerate}
\item[(1)] FOR $k := K,K-1,\ldots,1$ DO
  \begin{enumerate}
  \item[(2)] $J := \{ j \in \{1,\ldots,m\} \mid x_j\textrm{'s }k\textrm{th bit is }1\} $
  \item[(3)] Repeat computing $\chi^{(k)} : J \to \{ ± 1 \}$ according to Theorem~\ref{thm:ComputabilityOfLowDiscrepancyColoring}
until $\chi^{(k)}$ is \emph{good}, i.e. until
   \begin{itemize}
     \item $|A_i\chi^{(k)}| \leq C'\cdot \sqrt{\log n}\cdot \sqrt{\log m} \cdot \Delta_i$ for all $i=1,\ldots,n_A$
     \item $|B_i\chi^{(k)}| \leq C' \log(\frac{2}{\mu_i})/\sqrt{\mu_i}$ for all $i=1,\ldots,n_B$
   \end{itemize}
  \item[(4)] Update $x := x + (\frac{1}{2})^k\chi^{(k)}$
  \end{enumerate}
\end{enumerate}
Let $y$ be the integral vector obtained at the end.
For $C'$ large enough, by Theorem~\ref{thm:ComputabilityOfLowDiscrepancyColoring} each run to compute coloring $\chi^{(k)}$ has 
$\Pr[ |A_i\chi| > C' \sqrt{\log n} \cdot \sqrt{\log m} \cdot \Delta_i] \leq \frac{1}{4n}$ and $\Pr[|B_i\chi| > C' \log(\frac{2}{\mu_i}) / \sqrt{\mu_i}] \leq  \frac{1}{4}\mu_i$.
By the union bound, each run of (3) is good with probability at least $\frac{1}{2}$. 
By Equation~\eqref{eq:ConditionalProbForLowDisc} concerning conditional probabilities, this only worsens
the tail bounds provided by Theorem~\ref{thm:ComputabilityOfLowDiscrepancyColoring} for $B_i\chi^{(k)}$ and $A_i\chi^{(k)}$  by a factor of at most $2$. In any case we have a guarantee that
\[
  |A_ix-A_iy| \leq \sum_{k \geq 1} \left(\frac{1}{2}\right)^k\big|A_i\chi^{(k)}\big|
\leq \sum_{k \geq 1} \left(\frac{1}{2}\right)^{k}\cdot C'\sqrt{\log n \cdot \log m} \cdot \Delta_i = C'\sqrt{\log n \cdot \log m} \cdot \Delta_i.
\]
and analogously $|B_ix - B_iy| \leq C' \log(\frac{2}{\mu_i})/\sqrt{\mu_i}$.
The algorithm behind Theorem~\ref{thm:ComputabilityOfLowDiscrepancyColoring} is fully symmetric, i.e. $E[A\chi^{(k)}] = \mathbf{0}$ for all $k$, hence $E[Ay]=Ax$ 
(and similar $E[By] = Bx$). It remains to prove the tail bounds. We may assume that $\lambda \geq 1$, 
since otherwise, the desired probabilities $2\cdot 2^{-\lambda}$ and $2\cdot 2^{-\lambda^2}$ exceed $1$ anyway.


%
Note that if $|B_i\chi^{(k)}| \leq \frac{(\lambda+k)C'}{4\sqrt{\mu_i}}$ holds for all $k$, then %
\[
|B_ix - B_iy| = \Big| \sum_{k\geq 1} \left(\frac{1}{2}\right)^{k} B_i\chi^{(k)} \Big| 
\leq \sum_{k\geq 1} \left(\frac{1}{2}\right)^{k} \frac{C'\cdot (\lambda+k)}{4\sqrt{\mu_i}} \leq \lambda \frac{C'}{\sqrt{\mu_i}}
\]
since $\sum_{k\geq 1} (\frac{1}{2})^{k}\leq 2$, $\sum_{k\geq 1} (\frac{1}{2})^k k \leq 2$ and $\lambda \geq 1$.
Hence we can use the bound
\[
  \Pr\left[|B_ix - B_iy| > \lambda \cdot \frac{C'}{\sqrt{\mu_i}}\right] \leq \sum_{k\geq 1} \Pr\left[|B_i\chi^{(k)}| > \frac{(\lambda+k)C'}{4C} \frac{C}{\sqrt{\mu_i}}\right] \stackrel{\textrm{Thm.~\ref{thm:ComputabilityOfLowDiscrepancyColoring}}}{\leq} 
\sum_{k\geq 1} 320\cdot 2^{-\frac{(\lambda+k)C'}{24C}}  \leq 2\cdot 2^{-\lambda}
\]
for $C'$ large enough. Similarly
\[
  \Pr\left[|A_ix - A_iy| > \lambda \cdot C'\sqrt{\log m}\cdot \Delta_i\right]
\leq \sum_{k\geq 1} \Pr\Big[|A_i\chi^{(k)}| > \frac{(\lambda+k)C'}{4C}C\sqrt{\log m}\Delta_i\Big]
\stackrel{\textrm{Thm.~\ref{thm:ComputabilityOfLowDiscrepancyColoring}}}{\leq} \sum_{k\geq 1} 8e^{-\frac{(\lambda+k)^2 C'^2}{32C}} \leq 2\cdot 2^{-\lambda^2}
\]
again for $C'$ large enough and $\lambda \geq 1$.
\end{proof}

\section{How to solve the LP relaxations\label{Appendix:SolvingLPrelaxation}}

All linear programs for which we provided rounding procedures were of the form
$\min\{ c^Tx \mid \sum_{S \in \setS} x_S\mathbf{1}_S = \mathbf{1}, x \geq \mathbf{0} \}$, i.e. they all have an exponential number of variables.
So, we should explain how such programs can be solved. In fact, the first polynomial time
algorithm was proposed by \cite{KarmarkarKarp82} in the case of {\sc Bin Packing}. Their approach 
solves the dual $\max\{ \sum_{i=1}^n y_i \mid \sum_{i\in S} y_i \leq c_S \; \forall S \in \setS \}$ up to an arbitrarily small additive error
using the 
Gr{ö}tschel-Lovász-Schrijver variant of the Ellipsoid method~\cite{GLS-algorithm-Journal81}.
The error term cannot be avoided, since a {\textsc{Partition}} instance could be decided by inspecting
whether $OPT_f\leq2$ or not.
The only additional prerequisite for the Karmarkar-Karp algorithm is an FPTAS for the dual separation problem (i.e. 
given dual prices $y_1,\ldots,y_n \geq 0$, find a $(1-\varepsilon)$-approximation to $\max\{ \frac{1}{c_S}\sum_{i\in S} y_i \mid S \in \setS\}$).
Note that the same result is implied by the framework of Plotkin, Shmoys and 
Tardos~\cite{FractionalPackingAndCovering-PlotkinShmoysTardos-Journal95} without using general LP solvers.

It follows implicitly from both papers~\cite{KarmarkarKarp82,FractionalPackingAndCovering-PlotkinShmoysTardos-Journal95} that for any set family $\setS \subseteq 2^{[n]}$ that admits an FPTAS for the dual separation 
problem, the corresponding column-based LP can be solved within an arbitrarily small additive error. 

However, we are not aware of an explicit proof of this fact in the literature. 
Hence, to be self-contained we provide all the details here. 
Our focus lies on giving a short and painless analysis, rather than giving the
best bounds on the running time.
Our starting point is the following theorem from \cite{FractionalPackingAndCovering-PlotkinShmoysTardos-Journal95}
(paraphrased to make it self-contained).
\begin{theorem}[Plotkin, Shmoys, Tardos~\cite{FractionalPackingAndCovering-PlotkinShmoysTardos-Journal95}]  \label{thm:PlotkinShmoysTardosCoveringAlgo}
Let $A \in \setR_{\geq0}^{n×m}$ be a matrix and $P \subseteq \setR_{\geq0}^m$ be a convex set. 
Given $0<\varepsilon<1$, $n\in\setN$, $b \in \setQ_{>0}^n$, $\rho \geq \max_{i=1,\ldots,n} \max_{x\in P} \frac{A_ix}{b_i}$ as input. 
Then there exists an algorithm $\textsc{Cover}$ which either computes an $x\in P$ with $Ax \geq (1-\varepsilon)b$ or asserts that there
is no $x\in P$ with $Ax \geq b$. This algorithm calls $K:=O(n + \rho\log^2(n) + \frac{\rho}{\varepsilon^2}\log(\frac{n}{\varepsilon}))$ many times the following
oracle (with $\varepsilon' \in \{ \varepsilon/2, 1/6\}$)
\begin{quote}
$\textsc{Subroutine:}$ Given $0<\varepsilon'<1, y \in \setR_+^n$ as input. Find a $\tilde{x} \in P$ such that $y^TA\tilde{x} \geq (1-\varepsilon')\max\{ y^TAx \mid x \in P\}$.
\end{quote}
Assuming that for any $\tilde{x}$, the vector $A\tilde{x}$ can be evaluated in time $O(n)$, 
the additional running time of  $\textsc{Cover}$ is $O(K \cdot n)$.
\end{theorem}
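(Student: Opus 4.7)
The plan is to prove this via the Lagrangian/multiplicative-weights framework that has become standard for fractional covering and packing LPs. We rewrite the feasibility problem ``find $x \in P$ with $Ax \geq b$'' as a min-max saddle point: we wish to find $x \in P$ maximizing $\min_i (A_i x)/b_i$, and we access $P$ only through the linear-objective oracle. The oracle lets us evaluate the Lagrangian $\max_{x\in P} y^T A x$ up to a $(1-\varepsilon')$ factor for any dual price vector $y \geq 0$, so we build $x$ iteratively by combining oracle answers steered by exponentially reweighted duals.

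Concretely, I would maintain a current iterate $x^{(t)} \in P$ together with a smoothed ``potential'' $\Phi(x) = \sum_{i=1}^n \exp\!\bigl(-\alpha A_i x / b_i\bigr)$ for a parameter $\alpha$ chosen below, and define the dual weights $y_i^{(t)} = \exp(-\alpha A_i x^{(t)}/b_i)/\Phi(x^{(t)})$. One iteration consists of (i) calling the oracle with $y^{(t)}$ to obtain $\tilde x^{(t)} \in P$ with $(y^{(t)})^T A \tilde x^{(t)} \geq (1-\varepsilon')\max_{x\in P}(y^{(t)})^T A x$, (ii) choosing a step size $\sigma_t$, and (iii) updating $x^{(t+1)} = (1-\sigma_t) x^{(t)} + \sigma_t \tilde x^{(t)}$. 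Convex combination keeps $x^{(t)} \in P$. Either at some iteration the current $x^{(t)}$ already satisfies $A x^{(t)} \geq (1-\varepsilon) b$ and we return it, or the oracle value itself is at most $(1-\varepsilon') \sum_i y_i^{(t)} b_i$, which by LP duality certifies that no $x \in P$ achieves $A x \geq b$ and we output the infeasibility verdict.

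To obtain the stated iteration count, I would split into two phases, distinguished by the oracle accuracy parameter $\varepsilon'$. The \emph{outer phase} uses $\varepsilon' = 1/6$ and a geometrically chosen step $\sigma_t = \Theta(1/\rho)$, scaled by the current minimum coverage $\min_i A_i x^{(t)}/b_i$; this phase drives the minimum-coverage ratio from $0$ up to a constant fraction of $1$. By tracking $\log \Phi$ and using the width bound $|A_i(\tilde x^{(t)} - x^{(t)})/b_i| \leq \rho$ with a second-order expansion $e^{-z} \leq 1 - z + z^2$ valid for $|z|\leq 1$, each step decreases $\log \Phi$ by a fixed constant (times the number of still-uncovered rows divided by $\rho$). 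A doubling-style argument then caps this phase at $O(n + \rho \log^2 n)$ oracle calls. The \emph{inner phase} then uses $\varepsilon' = \varepsilon/2$, $\alpha = \Theta(\varepsilon^{-1} \log(n/\varepsilon))$, and step size $\sigma_t = \Theta(1/(\alpha\rho))$; the standard potential argument shows that after $O(\rho \varepsilon^{-2} \log(n/\varepsilon))$ iterations the smoothed min-coverage $-\tfrac{1}{\alpha}\log \Phi$ is within $\varepsilon$ of the true minimum and also within $\varepsilon$ of $1$, yielding $A x \geq (1-\varepsilon) b$.

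The main obstacle will be getting the iteration bounds tight, in particular the $O(n + \rho \log^2 n)$ term for the outer phase: one must argue that once a constraint $i$ has accumulated $A_i x / b_i \geq 1$ it essentially ``leaves'' the potential (because its weight $y_i$ becomes negligibly small relative to the others) and cannot degrade future progress, which requires a careful book-keeping over $O(\log n)$ doubling scales of the current minimum. Besides that, verifying the infeasibility certificate is routine (apply LP duality to the oracle's guarantee), and the per-iteration work of $O(n)$ follows because evaluating $y^{(t)}$ and updating $A x^{(t+1)} = (1-\sigma_t) A x^{(t)} + \sigma_t A \tilde x^{(t)}$ both take $O(n)$ time under the assumption that $A\tilde x$ can be evaluated in linear time.
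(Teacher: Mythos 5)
The paper does not prove this theorem at all: it is imported verbatim (``paraphrased to make it self-contained'') from Plotkin--Shmoys--Tardos, and the paper only supplies the same one-paragraph intuition about dual prices and convex-combination updates that your sketch expands upon. Your proposal is a faithful reconstruction of the standard PST exponential-potential argument (dual weights $y_i\propto e^{-\alpha A_ix/b_i}$, approximate oracle, two accuracy regimes $\varepsilon'\in\{1/6,\varepsilon/2\}$, infeasibility certified via weak duality), so it takes essentially the same route as the cited source; the remaining work you flag (the scaling/doubling bookkeeping behind the $O(n+\rho\log^2 n)$ term) is exactly the technical content of that reference.
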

On an intuitive level, the algorithm of \cite{FractionalPackingAndCovering-PlotkinShmoysTardos-Journal95} 
maintains at any iteration some vector $x\in P$. Then for every element $i\in[n]$ 
one defines certain \emph{dual prizes} $y_i$ which are decreasing in $\frac{A_ix}{b_i}$. 
In other words, uncovered elements will
receive a high dual price $y_i$; covered ones receive a low price. Then one computes a vector $\tilde{x}$ which
(approximately) maximizes the dual prices, meaning that $\tilde{x}$ has a large incentive to cover elements $i$ with $A_ix \ll b_i$.
Then one replaces $x$ by a convex combination of $x$ and $\tilde{x}$ and iterates.

In the following we show how Theorem~\ref{thm:PlotkinShmoysTardosCoveringAlgo} can be used to solve \eqref{eq:GeneralLP}.
\begin{theorem} \label{thm:PolytimeSolvabilityOfLPs}
Let $\setS \subseteq 2^{[n]}$ be a family of sets with cost function $c : \setS \to ]0,1]$ (assume $c(S)$ can be evaluated in time $O(n)$)
such that for any $y \in \setQ_+^{n}$ given as input, one can find an $S^* \in \setS$ with $\sum_{i\in S^*} y_i \geq (1-\varepsilon)\cdot\max\{ \frac{1}{c(S)} \sum_{i\in S} y_i \mid S \in \setS \}$
in time $T(n,\varepsilon)$.  
Then for any given $n/2 \geq \delta > 0$, one can find a basic solution $x$ of the LP
\[
  OPT_f = \min\Big\{ c^Tx \mid  \sum_{S \in \setS} x_S\mathbf{1}_S \geq \mathbf{1}, \; x \geq \mathbf{0} \Big\}
\]
of cost $c^Tx \leq OPT_f + \delta$ in time $O\left(\frac{n^4}{\delta^2}\ln(\frac{n}{\delta})\right)\cdot(\frac{1}{\delta}T(n,\Omega(\delta/n)) + n^2)$. 
\end{theorem}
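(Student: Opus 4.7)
The plan is to reduce to iterated invocations of Theorem~\ref{thm:PlotkinShmoysTardosCoveringAlgo} combined with binary search on the optimum value. For a guess $T$ of $OPT_f$, consider the feasibility LP
\[
  \text{find } x \in P_T := \{x \in \setR_{\geq 0}^{\setS} : c^T x \leq T\} \text{ with } \sum_{S\in \setS} x_S \mathbf{1}_S \geq \mathbf{1}.
\]
This fits the template of Theorem~\ref{thm:PlotkinShmoysTardosCoveringAlgo} with $A$ the incidence matrix whose columns are the $\mathbf{1}_S$ and right-hand side $b = \mathbf{1}$; it remains to exhibit the required oracle and bound the width $\rho$.

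The oracle: given dual prices $y \in \setR_{\geq 0}^n$, one must produce $\tilde x \in P_T$ with $y^T A \tilde x \geq (1-\varepsilon') \max_{x \in P_T} y^T A x$. Since $y^T A x = \sum_S x_S \sum_{i \in S} y_i$ is linear over the ``simplex-like'' polytope $P_T$, its exact maximum equals $T \cdot \max_S \frac{1}{c(S)} \sum_{i \in S} y_i$, attained by concentrating all mass on a single optimal set: $\tilde x_{S^\star} = T/c(S^\star)$ for $S^\star = \arg\max_S \frac{1}{c(S)}\sum_{i \in S} y_i$, and $\tilde x_S = 0$ elsewhere. The assumed dual-separation FPTAS returns a $(1-\varepsilon')$-optimal such $S^\star$ in time $T(n,\varepsilon')$, which directly yields the oracle required by Theorem~\ref{thm:PlotkinShmoysTardosCoveringAlgo}.

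The main technical nuisance is keeping the width parameter $\rho = \max_i \max_{x \in P_T} \sum_{S \ni i} x_S$ polynomial in $n$ and $1/\delta$, since \emph{a priori} $\rho$ scales as $T/c_{\min}$. I would handle this with a preprocessing step that rounds any set cost below $\delta/(2n)$ up to $\delta/(2n)$ (absorbing only an additive $\delta/2$ into the final guarantee, since any reasonable solution uses total cost $\leq O(n)$); combined with the trivial upper bound $T \leq n$ (else the algorithm declares infeasibility), this yields $\rho = O(n^2/\delta)$.

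Finally, binary search on $T \in [0,n]$ with internal accuracy $\varepsilon' = \Theta(\delta/n)$ terminates after $O(\log(n/\delta))$ rounds with some $T \leq OPT_f + \delta/2$ for which \textsc{Cover} returns $x \in P_T$ satisfying $Ax \geq (1-\varepsilon')\mathbf{1}$; rescaling by $1/(1-\varepsilon')$ restores feasibility at cost $\leq T/(1-\varepsilon') \leq OPT_f + \delta$. Because \textsc{Cover} calls the oracle only $K = O(n + \rho\,\varepsilon'^{-2}\log(n/\varepsilon'))$ times and each call produces a vector supported on a single set, the support of the rescaled solution has size at most $K$, polynomial in $n$ and $1/\delta$. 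Restricting the original LP to these coordinates and solving that polynomially-sized LP exactly (e.g.\ by the simplex method) yields a basic solution of no larger cost, within the running time claimed. Multiplying the $O(\log(n/\delta))$ bisection rounds by the per-call cost $K \cdot (T(n,\Omega(\delta/n)) + O(n))$ produces the overall bound.
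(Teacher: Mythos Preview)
Your approach is essentially the paper's: use the Plotkin--Shmoys--Tardos covering framework on the budget-constrained polytope, implement \textsc{Subroutine} via the assumed FPTAS by putting all mass on a single near-optimal set, rescale by $1/(1-\varepsilon')$, and finally pass to a basic solution. The differences are cosmetic --- you binary-search the budget $T$ where the paper linearly enumerates $O(n/\delta)$ candidate values $r$, you use $c^Tx\le T$ where the paper uses $c^Tx=r$, and you extract a basic solution by resolving the LP on the small support rather than by repeated Gaussian elimination.

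One point where you are actually \emph{more} careful than the paper: you flag that the width $\rho=\max_i\max_{x\in P_T}\sum_{S\ni i}x_S$ scales like $T/c_{\min}$ and can be unbounded when costs approach~$0$. The paper simply asserts $\rho\le r\le n$, which is only valid when all $c_S\ge 1$. Your fix (round tiny costs up to $\delta/(2n)$) is the right idea, but your one-line justification ``any reasonable solution uses total cost $\le O(n)$'' is not the relevant bound --- what you need is $\sum_S x_S^{\star}\le n$, not $c^Tx^{\star}\le n$. This does hold for a basic optimum: support has size at most $n$, and by optimality every positive $x_S^{\star}$ participates in a tight constraint $\sum_{S'\ni i}x_{S'}^{\star}=1$, forcing $x_S^{\star}\le 1$. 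With that, the rounding costs at most $n\cdot\delta/(2n)=\delta/2$, as you claim.

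The only loose end is the running time: with your (correct) width $\rho=O(n^2/\delta)$ and $\varepsilon'=\Theta(\delta/n)$ the oracle count $K$ picks up an extra $n/\delta$ factor compared to the paper's (optimistic) $\rho=n$, so the final bound you obtain does not literally match the one stated in the theorem. This is a bookkeeping discrepancy, not a conceptual one; the algorithm is still polynomial.
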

\begin{proof}
Since no set costs more than $1$, one has $OPT_f \leq n$. 
By trying out $O(n/\delta)$ values, we may assume to know a value $r$ with $OPT_f \leq r \leq OPT_f + \frac{\delta}{2}$.
We define $P = \{ x \in \setR_{\geq0}^{\setS} \mid \sum_{S \in \setS} c_Sx_S = r\}$ and a matrix\footnote{$A$ and $P$ are \emph{defined}, but
not explicitly \emph{computed}.} $A \in \{0,1\}^{n × \setS}$ by
\[
 A_{iS} = \begin{cases} 1 & i \in S \\ 0 & \textrm{otherwise} \end{cases}
\]
as well as $b=(1,\ldots,1)$. We choose 
\[
  \rho := n \geq r \geq \max_{i=1,\ldots,m} \max_{x\in P} \frac{A_ix}{b_i}
\]
and $\varepsilon := \frac{\delta}{4n}$. The next step is to design the $\textsc{Subroutine}$. 
Hence, let a vector of dual prices $y\in\setQ_{\geq0}^n$ and a parameter $\varepsilon' > 0$ be given as input.
Then we compute a set $S^* \in \setS$
with $\sum_{i\in S^*} y_i \geq (1-\varepsilon')\cdot\max\{ \frac{1}{c_S}\sum_{i\in S} y_i \mid S \in \setS \}$ in time $T(n,\varepsilon')$. 
Observe that the vertices of $P$ are of the form $\frac{r}{c_S}e_S$ and
\[
y^TA\left(\frac{r}{c_S}e_S\right) = \frac{r}{c_S}\sum_{i=1}^n y_iA_{iS} = \frac{r}{c_S}\sum_{i\in S} y_i
\]
hence, the vector $\tilde{x} := \frac{r}{c_S}\cdot e_{S^*}$ is the desired $(1-\varepsilon')$-approximation for $\max\{ y^TAx \mid x \in P\}$.

Applying Theorem~\ref{thm:PlotkinShmoysTardosCoveringAlgo} yields a vector $x \in P$ with $Ax \geq (1-\varepsilon)\mathbf{1}$. 
Hence the slightly scaled vector $x'=\frac{x}{1-\varepsilon}$ is feasible and has cost $\frac{r}{1-\varepsilon} \leq (OPT_f + \frac{\delta}{2})\cdot(1+2\varepsilon) \leq OPT_f + \delta$.
To turn $x'$ into a basic solution, we consider $x''$ with $x''_S = x_S' > 0$ for precisely $n+1$ many sets (and $x''_S=0$ otherwise). Then by Gauss elimination we find a $x''' \geq \mathbf{0}$ with $Ax'''=Ax''$, $|\textrm{supp}(x''')|\leq n$ and $c^Tx''' \leq c^Tx''$ in time $O(n^3)$ and replace the corresponding values in $x'$ by those in $x'''$. After iteration this at most $K$ times, 
we obtain the desired basic solution. The total running time is
\[
  \frac{n}{\delta}\cdot K \cdot T(n,\Omega(\varepsilon)) + O(K\cdot n^3) \leq O\left(\frac{n^4}{\delta^2}\ln(\frac{n}{\delta})\right)\cdot(\frac{1}{\delta}T(n,\Omega(\delta/n)) + n^2) 
\] 
since $\textsc{Subroutine}$ was called at most $K = O(\frac{n^3}{\delta^2}\ln(\frac{n}{\delta}))$ times.
\end{proof}
Here the running times are w.r.t. the RAM model, were any arithmetic operation accounts with unit cost.

\subsection*{Applications for considered problems}

In order to solve the considered LPs up to any additive error term, it 
suffices to provide an FPTAS for each of the corresponding dual separation problems.
\begin{itemize}
\item $\textsc{Bin Packing}:$ The dual separation problem is $\max\{ \sum_{i\in S} y_i \mid \sum_{i\in S} s_i \leq 1 \}$ which is known as $\textsc{Knapsack}$
problem
and admits an FPTAS in time $T(n,\varepsilon) = O(n/\varepsilon^2)$ (see e.g.~\cite{ApproximationAlgorithmsVazirani01}).
\item $\textsc{Bin Packing With Rejection}$: 
We compute a $(1-\varepsilon)$-approximation $S^*$ to $\max\{ \sum_{i\in S} y_i \mid \sum_{i=1}^n s_i \leq 1\}$ in time $O(n/\varepsilon^2)$ and
compare it to the values $\frac{y_i}{\pi_i}$ for $i=1,\ldots,n$ and output either $S^*$ or some $\{ i\}$, whoever yields
the largest value, hence again $T(n,\varepsilon) = O(n/\varepsilon^2)$.
\item $\textsc{Train Delivery}$. For any  $k\in\{1,\ldots,n\}$, let $S^k$ be a $(1-\varepsilon)$-approximate
solution to $\max\{ \sum_{i\in S} y_i \mid \sum_{i\in S} s_i \leq 1, S \subseteq \{ i \mid p_i \leq p_k\} \}$. We output the set $S^k$ maximizing $\sum_{i \in S^k} \frac{y_i}{p^k}$.
This can be done in time $T(n,\varepsilon) = O(n^2/\varepsilon^2)$.
\end{itemize}

\section{Omitted proof for Lemma~\ref{lem:EntropyBound}\label{sec:OmittedProofForEntropyBound}}

\begin{lemma*}[Lemma \ref{lem:EntropyBound}]
Let $\alpha \in \setR^{m}$ be a vector and $\Delta > 0$. For  $\lambda = \frac{\Delta}{\|\alpha\|_2}$,
\[
 \mathop{H}_{\chi \in \{ ± 1\}^m}\left(\left\lceil \frac{ \alpha^T\chi}{2\Delta} \right\rfloor\right) \leq
G(\lambda) := \begin{cases} 9e^{-\lambda^2/5} & \textrm{if } \lambda \geq 2 \\
\log_2( 32 + 64/\lambda) & \textrm{if } \lambda < 2 
\end{cases} 
\]
\end{lemma*}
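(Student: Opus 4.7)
\medskip

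\noindent\textbf{Proof plan.} Write $Z := \lceil \alpha^T\chi/(2\Delta)\rfloor$ for the integer-valued random variable whose entropy we want to bound. My plan is to first obtain per-value upper bounds on $p_k := \Pr[Z=k]$ via Azuma--Hoeffding, and then split the entropy sum $H(Z) = -\sum_k p_k \log_2 p_k$ differently in the two regimes of $\lambda$.

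For the tail bounds, observe that $\alpha^T\chi = \sum_j \alpha_j \chi_j$ is a martingale with step sizes $|\alpha_j|$ and $\|(|\alpha_j|)_j\|_2 = \|\alpha\|_2$, so Lemma~\ref{lem:AzumaHoeffdingInequality} gives $\Pr[|\alpha^T\chi| \geq t\|\alpha\|_2] \leq 2e^{-t^2/2}$. Since the event $\{Z=k\}$ for $k\neq 0$ forces $|\alpha^T\chi| \geq (2|k|-1)\Delta$, I obtain
\[
 p_0 \geq 1 - 2e^{-\lambda^2/2}, \qquad p_k \leq 2 e^{-(2|k|-1)^2\lambda^2/2} \quad (k \neq 0).
\]

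In the regime $\lambda \geq 2$, the plan is to exploit that $p_0$ is close to $1$ and that each $q_k := 2e^{-(2|k|-1)^2\lambda^2/2}$ for $k\neq 0$ lies below $1/e$ (since $\lambda \geq 2$). Then I use monotonicity of $-p\log_2 p$ on $[0,1/e]$ to replace $p_k$ by the explicit upper bound $q_k$ in the tail, giving
\[
 H(Z) \;\leq\; -p_0 \log_2 p_0 \;+\; 2\sum_{k\geq 1} q_k \log_2(1/q_k).
\]
The first term is $O(e^{-\lambda^2/2})$ via $-p_0 \log_2 p_0 \leq (1-p_0)/\ln 2 \cdot (1+o(1))$, and the sum is $O(\sum_{k\geq1} (2k-1)^2 \lambda^2 e^{-(2k-1)^2\lambda^2/2})$, a geometrically decaying series dominated by its first term $O(\lambda^2 e^{-\lambda^2/2})$. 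For $\lambda \geq 2$ the inequality $\lambda^2 e^{-\lambda^2/2} \leq e^{-\lambda^2/5}$ (times an absolute constant) holds, and with careful bookkeeping the total fits under $9 e^{-\lambda^2/5}$.

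In the regime $\lambda < 2$, the distribution of $Z$ is no longer concentrated at $0$, so I split the support at $K := \lceil 2/\lambda\rceil$. For $|k| \leq K$ I use the maximum-entropy bound $-\sum_{|k|\leq K} p_k \log_2 p_k \leq \log_2(2K+1)$ (applied in the weighted form $P\log_2(|S|/P)$ if needed). For $|k| > K$ the bound $(2|k|-1)\lambda > 2$ again forces $q_k < 1/e$, so the same monotonicity trick gives a $O(1)$ tail contribution via the geometric series $\sum_{j \geq 2K+1} j^2\lambda^2 e^{-j^2\lambda^2/2}$. Combining, $H(Z) \leq \log_2(2K+1) + O(1) \leq \log_2(32 + 64/\lambda)$ after choosing the constant absorbed by $O(1)$ appropriately.

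The main obstacle is the constant chase: in Case~1 the naive estimate $\lambda^2 e^{-\lambda^2/2}$ must be sharpened so that the leading constant $9$ and the exponent $\lambda^2/5$ come out, and in Case~2 the split point $K$ and the constants $32$ and $64$ must be tuned so that the $O(1)$ tail contribution and the $\log_2(2K+1)$ core contribution fit together inside the single logarithm $\log_2(32+64/\lambda)$ for all $\lambda \in (0,2)$. Both require verifying monotonicity of the relevant single-variable functions on the respective ranges, which I expect to do by elementary calculus rather than any further probabilistic input.
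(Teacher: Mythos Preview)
Your treatment of the case $\lambda \geq 2$ is essentially identical to the paper's: both use Azuma--Hoeffding for the per-value bounds $p_k \leq 2e^{-(2|k|-1)^2\lambda^2/2}$, the monotonicity of $-p\log_2 p$ on $[0,1/e]$ for the tail terms, and a direct estimate for the $k=0$ term.

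For $\lambda<2$, however, your argument has a genuine gap. You claim the tail $\sum_{|k|>K}(-q_k\log_2 q_k)$ is $O(1)$ ``via the geometric series $\sum_{j\geq 2K+1} j^2\lambda^2 e^{-j^2\lambda^2/2}$''. But this series is \emph{not} $O(1)$: with $u_k=(2k-1)\lambda$ the consecutive ratio is $e^{-(u_{k+1}^2-u_k^2)/2}\approx e^{-2u_k\lambda}\approx 1-O(\lambda)$ near the threshold $u_k\approx 4$, so the ``geometric'' sum behaves like $(\text{first term})/O(\lambda)=\Theta(1/\lambda)$. Equivalently, the $q_k$ for $|k|>K$ are a Riemann sum with mesh $2\lambda$ of $\int_4^\infty u^2e^{-u^2/2}\,du$, giving $\Theta(1/\lambda)$. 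For small $\lambda$ this overwhelms the budget $\log_2(32+64/\lambda)=O(\log(1/\lambda))$; e.g.\ already for $\lambda=10^{-4}$ the tail estimate exceeds the target. The point is that the monotonicity trick throws away the crucial constraint $\sum_{|k|>K}p_k\leq 2e^{-8}$, and the surrogate bounds $q_k$ sum to something much larger than $1$.

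The paper sidesteps this entirely. Instead of splitting the entropy sum, it sets $L:=\lceil 2/\lambda\rceil$, writes $Z=L\cdot\big\lceil X/(2L\Delta)\big\rfloor+Z''$ with $Z''$ taking only $L$ values, and uses subadditivity:
\[
H(Z)\;\le\;H\!\left(\Big\lceil \tfrac{X}{2L\Delta}\Big\rfloor\right)+H(Z'')\;\le\;9e^{-(L\lambda)^2/5}+\log_2 L\;\le\;5+\log_2\!\big(\tfrac{2}{\lambda}+1\big),
\]
invoking the already-proved Case~1 for the first term (since $L\lambda\geq 2$). This reduction is the missing idea in your Case~2; once you have it, no separate tail analysis is needed and the constants $32,64$ fall out immediately from $5+\log_2(2/\lambda+1)\leq\log_2\!\big(32(1+2/\lambda)\big)$.
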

\begin{proof}
We distinguish 2 cases.
\emph{Case $\lambda \geq 2$.} Let $p_i := \Pr[Z = i]$. Note that $X := \alpha^T\chi = \sum_{j=1}^m \alpha_j\cdot\chi_j$ 
is the sum of independently distributed random variables 
$\chi_j\cdot \alpha_j = ± |\alpha_j|$ with mean $0$. 
For $i\geq1$, 
\[
 p_i \leq \Pr[X \geq (2i-1)\Delta] \stackrel{\textrm{Lem.~\ref{lem:AzumaHoeffdingInequality}}}{\leq} 
 2e^{- \frac{(2i-1)^2 \Delta^2}{2\|\alpha\|_2^2}}
= 2e^{ - (2i-1)^2\lambda^2/2}
\stackrel{\lambda\geq2}{\leq} e^{ - (2i-1)^2\lambda^2/4}
\]
The entropy, stemming from  $i\geq1$ is fairly small, namely
\begin{eqnarray*}
\sum_{i\geq1} p_i \log_2\left(\frac{1}{p_i}\right)
\leq \sum_{i\geq1} e^{ - (2i-1)^2\lambda^2/4}\cdot \log_2\left(\frac{1}{e^{ - (2i-1)^2\lambda^2/4}}\right) 
= \sum_{i\geq1} e^{ - (2i-1)^2\lambda^2/4} \cdot \frac{(2i-1)^2 \lambda^2}{4\cdot\ln(2)} 
\leq 3\cdot e^{-\lambda^2/5}
\end{eqnarray*}
Here we use that $p_i\leq \frac{1}{e}$ and $x\cdot\log_2(\frac{1}{x})$ is monotone increasing for $x\in[0,\frac{1}{e}]$.
Furthermore $p_0 \geq 1 - \Pr[|X| \geq \Delta] \geq 1-2e^{-\lambda^2/2}$, 
hence the event $\{Z=0\}$ is so likely that it also does not contribute much entropy.
\[
  p_0 \log \left(\frac{1}{p_0}\right) \leq 2\cdot (1 - p_0) \leq 4e^{-\lambda^2/2} \leq 2e^{-\lambda^2/5}
\]
Adding up also the entropy for $i<0$, we obtain $H(Z) \leq 8\cdot e^{-\lambda^2/5}$.

\emph{Case $\lambda < 2$.} Define $L := \lceil\frac{2}{\lambda}\rceil > 1$ and $\Delta' := \Delta \cdot L$. Then we can 
express $Z =  L\cdot \big\lceil\frac{X}{2L\Delta}\big\rfloor + Z''$
such that $Z''$ attains just $L$ different values (and hence $H(Z'') \leq \log_2(L)$).
Let $\lambda' := \Delta'/ \|\alpha\|_2 \geq 2$, then

\[
H(Z) \stackrel{(*)}{\leq} H\Big( \Big\lceil\frac{X}{2\Delta'}\Big\rfloor\Big) + H(Z'') \stackrel{(**)}{\leq} 9e^{-{\lambda'}^2/5} + \log_2(L) 
\leq \underbrace{9\cdot e^{-4/5}}_{<5} + \log_2 \left(\frac{2}{\lambda}+1\right) 
\]
In $(*)$, we use the subadditivity of the entropy function. In $(**)$
we use that $H\big( \big\lceil\frac{X}{2\Delta'}\big\rfloor\big)$ can be bounded by case (1).
\end{proof}

\end{document}